\documentclass[11pt]{article}

\usepackage{amsmath, amssymb, amsthm, aligned-overset} 
\usepackage{thm-restate}
\usepackage[noend]{algorithm2e, algorithmic}
\usepackage{tikz}
\usepackage{float}
\usepackage{comment}
\usepackage{dirtytalk}
\usepackage[english]{babel}

\usepackage[margin=1in]{geometry}

\usepackage{amsmath}
\usepackage{graphicx}
\usepackage[colorlinks=true, allcolors=blue]{hyperref}
\usepackage{cleveref}

\theoremstyle{plain}
\newtheorem{theorem}{Theorem}
\newtheorem{lemma}{Lemma}[section]
\newtheorem{claim}[lemma]{Claim}
\newtheorem{fact}[lemma]{Fact}

\newtheorem{observation}[lemma]{Observation}

\newtheorem{remark}[lemma]{Remark}
\newtheorem{definition}[lemma]{Definition}

\DeclareMathOperator{\polylog}{polylog}
\DeclareMathOperator{\poly}{poly}
\newcommand{\emphdef}[1]{{\sf {#1}}}

\newcommand{\ts}{\texorpdfstring}

\newcommand{\cP}{\mathcal{P}}
\newcommand{\cD}{\mathcal{D}}

\newcommand{\cU}{\mathcal{U}}
\newcommand{\cL}{\mathcal{L}}

\newcommand{\eps}{\varepsilon}
\newcommand{\NP}{\mathsf{NP}}
\newcommand{\coRP}{\mathsf{coRP}}

\newcommand{\N}{\mathbb{N}}
\newcommand{\E}{\mathbb{E}}
\newcommand{\ra}{\rightarrow}

\newcommand{\cB}{\mathcal{B}}
\newcommand{\tO}{\widetilde{O}}

\newcommand{\cBnA}{\cB_{A}}
\newcommand{\defeq}{:=}

\title{Graph $k$-Coloring in Average Sublinear Time}

\author{Cassandra Marcussen\thanks{School of Engineering and Applied Sciences, Harvard University, Cambridge, Massachusetts, USA. Email: cmarcussen@g.harvard.edu. This material is based upon work supported by the Air Force Office of Scientific Research under award number FA9550-23-F-0014 (NDSEG Fellowship). Supported in part by NSF Award 2152413 and a Simons Investigator Award to Madhu Sudan.} \and Edward Pyne\thanks{Department of Electrical Engineering and Computer Science, MIT, Cambridge, Massachusetts, USA. Email: epyne@mit.edu. Supported by an NSF Graduate Research Fellowship.} \and Ronitt Rubinfeld\thanks{Computer Science and Artificial Intelligence Laboratory, MIT, Cambridge, Massachusetts, USA. Email: ronitt@csail.mit.edu. Supported by the NSF TRIPODS program (award DMS-2022448) and CCF-2310818.}  \and Asaf Shapira\thanks{School of Mathematics, Tel Aviv University, Tel Aviv 69978, Israel. Email: asafico@tau.ac.il. Supported in part by ERC Consolidator Grant 863438.} \and Shlomo Tauber \thanks{School of Computer Science, Tel Aviv University, Tel Aviv 69978, Israel. Email: shlomotauber@mail.tau.ac.il. Supported in part by ERC Consolidator Grant 863438.}}

\begin{document}
\begin{titlepage}
\date{}
\maketitle

\begin{abstract}
Graph $k$-coloring is one of the classic NP-complete problems. Previous work has studied its average time complexity, defined to be the average runtime of computing a $k$-coloring over the set of all $k$-colorable graphs on $n$ vertices. A highly influential result of Dyer-Frieze from 1989 gave an algorithm
with $O(n^2)$ average runtime for constant $k$.
This quadratic runtime appeared natural (and possibly even optimal)
since almost all $k$-colorable graphs have $\Theta(n^2)$ edges, so one
needs at least this time in order to read the (entire) input.
However, this was later improved by Ku{\v{c}}era in 1995 to average
runtime $O(n^2/k)$ for every $k \leq n^{c}$ where $c \in (0, 1)$.
Nevertheless, in the most interesting case of $k = O(1)$, the best-known
bound remained quadratic in $n$. The true average complexity of the
$k$-coloring problem has remained elusive for the last three decades.

We break the longstanding quadratic barrier. Our main result in this paper shows that the exact average-case complexity of this fundamental problem is $\Theta(nk)$
for every $k \leq n^{c'}$ and some $c' \in (0, 1)$. For $k = O(1)$, this reveals the average sublinear nature of $k$-colorability: the average-case complexity is \textit{linear in $n$}, and thus \textit{sublinear in the size of the input}. We further show that our $\Theta(nk)$ average runtime is \textit{optimal}, since a simple bound proves that every algorithm that correctly $k$-colors all $k$-colorable graphs requires $\Omega(n  k)$ average runtime.

Our proofs draw on ideas from sublinear and local algorithms and also yield a local computation algorithm (LCA) for $k$-coloring with average-case probe complexity $\poly(k)$. A key new ingredient in our algorithm is a method for certifying the unique colorability of random subgraphs, using tools from the theory of graph regularity.
\end{abstract}

\addtocounter{page}{-1}
\thispagestyle{empty}
\end{titlepage}

\section{Introduction}

While there are simple linear-time algorithms that can determine whether a graph is 2-colorable, the problem of determining whether a graph is $k$-colorable for $k \geq 3$ is one of the classic NP-hard problems~\cite{karp72}.  
Even approximating the chromatic number of a graph up to a multiplicative factor of $n^{1-\eps}$ is hard, assuming $\NP\neq\coRP$~\cite{FK}. 
Nevertheless,
there has been significant interest and success in finding algorithms for NP-hard problems whose 
\textit{average} runtime (over all inputs) is polynomial under various distributions
\cite{dyer1989solution, alon1994spectral, kuvcera1995expected, krivelevich2002deciding, DBLP:conf/random/ScottS03, DBLP:conf/soda/KrivelevichV06, DBLP:journals/corr/BourgeoisCDP15, DBLP:journals/rsa/AlonK20, DBLP:journals/corr/Anastos21}. For the specific case of coloring, a line of work beginning with the seminal paper of Dyer and Frieze~\cite{dyer1989solution} has shown the existence of coloring algorithms 
with polynomial average runtime over uniformly random $k$-colorable graphs.\footnote{Studying the uniform distribution over $k$-colorable graphs is a natural choice, adopted in many prior works (e.g., \cite{turner1988almost, dyer1989solution, kuvcera1993coloring, kuvcera1995expected, coja2010almost}). While one may initially think the most natural distribution is the uniform distribution over all graphs, as \cite{kuvcera1977expected} notes, a graph drawn from this distribution is extremely unlikely to be $k$-colorable and thus a trivial algorithm (that attempts to certify non-$k$-colorability) is already very efficient. We discuss related work in several other models in~\Cref{sec:related}.} The works of 
Dyer-Frieze~\cite{dyer1989solution} and Ku\v{c}era~\cite{kuvcera1995expected} achieve average runtime $O(n^2)$ and $O(n^2/k)$ respectively,\footnote{Dyer and Frieze \cite{dyer1989solution} state that their algorithm holds for constant $k$. The algorithm of~\cite{kuvcera1995expected} works for $k=o(\sqrt{n/
\log^2 n})$.} where $k$ is the chromatic number. In the well-studied setting of $k$ constant, both algorithms achieve $O(n^2)$ average runtime. This might seem optimal, since almost all $k$-colorable graphs have $\Theta(n^2)$ edges, so any deterministic algorithm must spend at least $\Theta(n^2)$ time just to read the input. Ku\v{c}era's $O(n^2/k)$ runtime revealed the sublinear possibilities of the $k$-coloring problem. Our main goal in this paper is to fully determine the average-case dependence on $n$ and $k$ of this fundamental problem.

Let us point out that there is also a line of work giving polynomial-time algorithms that $k$-color all but a small (polynomial) fraction of the $k$-colorable $n$-vertex graphs; see, for example, \cite{kuvcera1977expected, turner1988almost, blum1995coloring, coja2010almost}. We should emphasize that such results do not directly imply coloring algorithms with polynomial average runtime: an algorithm that uses the above to quickly color all 
but a $1/n^c$ fraction of  the $k$-colorable
graphs and resorts to an exponential time algorithm on the remaining
graphs would yield an exponential overall average running time. The problem of average-case $k$-coloring has become foundational in the study of the dichotomy between worst-case and average-case complexities. However, the exact complexity of $k$-coloring has remained unresolved. We approach this problem via a method inspired by sublinear and local algorithms.

\subsection{Our Results}\label{sec:our-results}

We give a tight characterization
of the average runtime of this fundamental NP-hard problem with respect to both $n$ \textit{and} $k$, for a large range of $k = k(n)$: we show there is an average $O(n k)$ randomized algorithm for $k$-coloring, and prove an $\Omega(nk)$ lower bound. Our upper bound breaks the longstanding quadratic barrier for $k$-coloring with constant $k$, giving the \textit{first subquadratic bound} for $k = O(1)$. For the lower bound, while an $\Omega(n)$ lower bound is immediate because the $k$-coloring being output has size $\Omega(n)$, establishing an $\Omega(nk)$ lower bound is more subtle.

In particular, when $k$ is constant, our result implies that the complexity of $k$-coloring is \textit{linear} in $n$, which is sublinear in the size of the graph. 
\begin{restatable}{theorem}{mainRand}\label{thm:mainRand}
    There is a randomized $k$-coloring algorithm with average runtime $O(n k)$
    over the uniform distribution over all $k$-colorable graphs, for $k \le n^c$ and $c = 1/37$.
\end{restatable}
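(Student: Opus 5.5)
The plan is a tiered algorithm. A fast sublinear phase succeeds on all but a tiny fraction of inputs. Slower fallback phases are invoked only when the fast phase fails, and each fallback's cost times its failure probability contributes $O(nk)$ to the expectation.

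First, fix the distributional structure. A uniformly random $k$-colorable graph is, up to negligible error, obtained by taking a partition $V_1,\dots,V_k$ with all parts of size $n/k \pm O(\sqrt{n\log n})$, and including each cross pair independently with probability $1/2$. The exceptional graphs (unbalanced partitions, or more than one $k$-coloring) have probability $2^{-\Omega(n/k)}$. The counting argument goes through the number of partitions weighted by $2^{\#\text{cross pairs}}$, where balanced partitions dominate exponentially.

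\textbf{Phase 1 (sublinear, cost $O(nk)$).} Sample a seed set $S$ of $s=\poly(k)\log$-ish vertices. Read all pairs inside $S$, which costs $O(s^2)$ and is within budget when $k\le n^{c}$. Find a $k$-coloring of $G[S]$, by brute force or by a $\poly(s)$ method. Certify that it is the unique $k$-coloring of $G[S]$ up to permutation: random bipartite density $1/2$ between classes gives enough structure that uniqueness can be checked, and this is where the regularity-based certification mentioned in the abstract enters. Then, for each vertex $v\notin S$, query the edges from $v$ to a set of $O(\log k)$-ish representatives of each color class of $S$. Assign $v$ the unique class it has no edges to; in a typical graph every other class sees an edge. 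Each vertex then costs $O(k\cdot t)$ probes. To get exactly $O(nk)$ rather than $O(nk\log k)$, query representatives adaptively: stop checking a class at the first edge found, which takes expected $O(1)$ queries per class at density $1/2$. Finally verify that the result is a proper coloring without reading all edges. The algorithm must be always correct, so the output must be provably correct. Instead of verifying everything, we rely on the uniqueness certificate: if every vertex has edges to all other $k-1$ seed classes, its class is forced in any $k$-coloring extending the unique seed coloring. However, a proper coloring still requires checking edges within classes.

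\textbf{Correctness and fallback structure.} Because correctness must hold on every input, the fast phase must verify or fall back. The resolution is to cheaply check that the classes are independent in a randomized sense. If the check fails, or any step fails, run a Ku\v{c}era/Dyer--Frieze style $O(n^2)$-type algorithm. If that fails, use exponential brute force. We need $\Pr[\text{Phase 1 fails}]\cdot O(n^2) = O(nk)$, i.e. failure probability $O(k/n)$. We also need the Ku\v{c}era tier's failure probability to be below $2^{-\Omega(n\log k)}\cdot nk$ times the brute-force cost $k^n\poly(n)$; this is the known guarantee for those algorithms on balanced instances.

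The main obstacle is Phase 1: certifying uniqueness of the seed coloring with failure probability $\le k/n$ while spending only $\poly(k)$-ish time. The fix is to boost by choosing $s$ large enough that regularity-type concentration (each pair of seed classes forming an $\varepsilon$-regular pair of density $\approx 1/2$) fails with probability $e^{-\Omega(s/k^2)}$. Taking $s\approx k^{O(1)}\log n$ gives failure $\le k/n$. The polynomial losses in $k$ explain the constraint $k\le n^{1/37}$, since $s^2$ and the certification cost must stay $\le nk$. I would first try the spectral/regularity certificate: if all bipartite pairs between seed classes are regular with density near $1/2$, any alternative $k$-coloring must nearly coincide with the planted one, and local rigidity then forces exact equality.
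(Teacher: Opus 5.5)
\textbf{Main gap: the fallback tier.} Your accounting needs an intermediate algorithm $\mathcal{K}$ with two properties. First, $\E[\mathbf{1}_F\cdot T_{\mathcal K}(G)]=O(nk)$, where $F$ is the event that Phase~1 fails. Second, its own failure probability must be far below $2^{-n}$, so that brute force is affordable. You attribute this to ``the known guarantee'' of Ku\v{c}era or Dyer--Frieze. But those are average-runtime statements over $\cU$, and Dyer--Frieze is only for constant $k$. An average bound on $T_{\mathcal K}$ gives no control of $\E[\mathbf{1}_F T_{\mathcal K}]$: $\mathcal K$ may be slow exactly on the graphs where Phase~1 fails.

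This is not a technicality. With probability roughly $2^{-n/k}\gg 2^{-n}$, some vertex has no neighbor in some other class. The graph is then not uniquely colorable, so no procedure that only assigns forced colors can finish. The paper therefore builds its own second tier:
\begin{enumerate}
\item enumerate all $\nu$-subsets as cores;
\item propagate with the naive subroutines;
\item brute-force the at most $B_O=ck\log k$ vertices left uncolored (\textsc{Color-Remaining}).
\end{enumerate}
By \Cref{thm:good_property_probability} with $B=B_O$, this succeeds on every okay graph, i.e.\ on all but a $3k^{-2n}$ fraction.

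That tier costs $n^{\nu}\poly(n,2^{k\log^2 k})$, which is super-polynomial once $k$ grows. It is affordable only because the first tier fails with probability $\exp(-\Omega(n/k))$, not merely $O(k/n)$ as you aim for. The paper achieves this by repeating the sample-core/certify/propagate iteration with fresh cores, $n^2$ times. Each iteration succeeds with probability at least $1/2$ on awesome graphs. So first-tier failure essentially reduces to the graph-level event ``$G$ is not awesome'', which the linkedness and many-good-cores estimates bound by $\exp(-\Omega(n/k))$.

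\textbf{Second issue: correctness.} The claim that ``a proper coloring still requires checking edges within classes'' is false here. The proposed randomized independence check would also destroy zero-error correctness, and an exact check costs $\Theta(n^2/k)$. The right argument, which is the paper's, is as follows. The input is promised to be $k$-colorable. Any proper coloring of $G$ restricts to the certified unique coloring of the core, up to permutation, and hence gives every forced vertex its forced color. If all vertices are forced, the forced assignment coincides with a proper coloring of $G$ and needs no verification.

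\textbf{Minor point.} Per-vertex propagation must interleave colors, e.g.\ by sampling a random remaining color as \textsc{Local-Good-Core-Neighbors} does. If you instead scan one class at a time until an edge is found, $v$'s own class exhausts all its representatives, and the $O(k)$-per-vertex cost is lost.
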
 

We prove \Cref{thm:mainRand} in Sections \ref{sec:goodProp} and \ref{sec:good-n-good-k}, and we also give a proof overview in \Cref{sec:overview}. This theorem holds in the access model where the graph is written in adjacency matrix format, and each query asking whether a pair of vertices is an edge takes constant time. 

A key new ingredient and conceptual contribution is a method for certifying that a small random subgraph is uniquely colorable, meaning the coloring is unique up to permuting the color classes. Our algorithm also crucially leverages techniques from local and sublinear algorithms. 
In fact, our algorithmic subroutines can be used to construct a Local Computation Algorithm (LCA)\footnote{In this context, a Local Computation Algorithm answers questions of the form \say{what is the color of vertex $v$,} where each query is answered with strongly sublinear work, and the provided answers are consistent between queries.} \cite{DBLP:conf/innovations/RubinfeldTVX11, DBLP:conf/soda/AlonRVX12, beyond_worst_case_LCA} for $k$-coloring that works well on average over
the distribution of input graphs. We prove there is an LCA that, with all but exponentially small probability, finds the color of 
any queried vertex; averaged over the distribution of graphs $G$, the maximum 
over $u \in G$ of the expected number of probes made is $\text{poly}(k)$, where the expectation is over the algorithm's random coins.\footnote{The $\text{poly}(k)$ comes from additive terms we need when implementing our algorithm, which are subsumed by the $O(nk)$ in the case of the global algorithm.} 
See \Cref{sec:LCA-implementation} for the LCA implementation.

We emphasize that an $o(n^2)$ average runtime, for $k = O(1)$, 
was not even known in the weaker setting where all but a polynomially
small fraction of $k$-colorable graphs must be colored. 
Additionally, our bound improves on Ku\v{c}era's \cite{kuvcera1995expected} bound for all $k$ for which our result holds.

At the cost of $\polylog(n)$ factors, we can obtain a deterministic algorithm:
\begin{restatable}{theorem}{mainDet}\label{thm:mainDet}
     There is a deterministic $k$-coloring algorithm with average runtime $\tO(n k)$ over the set of all $k$-colorable graphs, for $k \le n^c$ and $c = 1/37$. 
\end{restatable}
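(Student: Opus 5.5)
We derandomize the algorithm of \Cref{thm:mainRand}, paying a $\polylog(n)$ factor. The randomized algorithm uses its coins in two places: to sample a small random vertex set $S$ (of size $\poly(k)\log n$ or so) whose induced subgraph is certified uniquely colorable, and to sample vertices of each color class of $S$ when classifying the remaining vertices by $O(k)$ adjacency probes each. The plan is to replace these random choices with fixed, explicitly chosen ones. The randomness of the input graph then does the work that the algorithm's coins did. The guiding observation is that, for a uniformly random $k$-colorable graph, any \emph{fixed} vertex set is distributed like a random one, because the distribution is invariant under vertex permutations.

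Concretely, I would take $S$ to be the first $s=\Theta(k^{C}\log n)$ vertices of $[n]$. By symmetry, the probability that the induced coloring on $S$ is balanced, that $G[S]$ passes the regularity-based unique-colorability certificate, and that the probe-based classification succeeds for every vertex outside $S$ is the same as in the randomized analysis with a random $S$. I would bound this failure probability by $\exp(-\Omega(\log^2 n))$, or $n^{-\omega(1)}$, via Chernoff bounds over the graph distribution.

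For the classification step, the randomized algorithm probes a vertex $v$ against random members of each class of $S$ until it finds a non-neighbor pattern. Deterministically, I would instead probe $v$ against a fixed ordered list of $\Theta(\log n)$ members of each class. This costs $O(k\log n)$ per vertex, so $\tO(nk)$ in total. The graph randomness ensures that every vertex is correctly classified, except with probability $n^{-\omega(1)}$ after a union bound over the $n$ vertices.

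When any check fails, the algorithm falls back. It first runs a Ku\v{c}era-style $O(n^2)$-type deterministic procedure. If that also fails, it runs a deterministic exact algorithm taking time $k^n\poly(n)$ (or $2^{O(n)}$ via inclusion–exclusion). To keep the average at $\tO(nk)$, each fallback stage must be reached only with probability at most its cost divided by $nk$. The main obstacle is that the exponential fallback requires failure probability $k^{-n}2^{-\Omega(n)}$, whereas a fixed polylog-size sample fails with probability only $n^{-\omega(1)}$. I would therefore use a tiered cascade, as in the randomized proof. On failure, the algorithm reads all of $G$ in $O(n^2)$ time and applies Dyer–Frieze or Ku\v{c}era's deterministic algorithm, which needs failure probability only $O(k/n)$ at the sample stage. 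Those algorithms in turn have exponentially small failure probability under the uniform $k$-colorable distribution, which pays for the brute force. Since the randomized proof's good-graph properties ($\Cref{sec:goodProp}$) hold with exponentially small exceptions, I expect this cascade to close. What remains is checking that every certificate used in the analysis can be verified deterministically within the stated time.
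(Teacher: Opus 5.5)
Your guiding idea is exactly the paper's: vertex symmetry of the input distribution lets fixed vertex sets stand in for random samples. However, two steps fail as written.

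First, a single fixed core $S$ does not fail with probability only $n^{-\omega(1)}$. In the randomized analysis a random core is good with only constant probability (\Cref{lem:num-good-subgraphs}). Balance, degree and codegree are controlled by Chernoff bounds, but the step that actually colors the core is not: running \textsc{Kucera1995} under a budget $O(|H|^2/k)$ is controlled only by Markov's inequality, because \Cref{thm:kuvcera} gives nothing beyond an average runtime. That bound does not improve as $|S|$ grows. The randomized algorithm compensates by resampling, but with one fixed $S$ there is nothing to resample, and you never say how $G[S]$ gets colored. The paper instead fixes a list of vertex-disjoint candidate cores $P_1,\dots,P_{y_0}$ and a witness set $\cL$. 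Generating a random permutation core by core (\Cref{obs:genperm}) shows that one of the first $t$ candidates is good with probability $1-\exp(-\Omega(t))$ (\Cref{lem:derand:earlyCore}). Likewise $\cL_{\le\ell}$ supplies witnesses for every vertex with probability $1-nk2^{-\Omega(\ell/k)}$ (\Cref{lem:derand:route}). Stage $y$ of \Cref{alg:main-det} costs $\tO(y\nu^3/k+y^2kn)$ and is reached with probability $\exp(-\Omega(y))$ (\Cref{clm:yfast}). Hence the expected cost stays $\tO(nk)$, while the failure probability of the first loop can be driven far below what any single sample gives.

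Second, and more seriously, the fallback does not work as stated. Let $F$ be the rare event that your first tier fails. The fallback contributes $\E_G[T_{\mathrm{fb}}(G)\mathbf{1}_F(G)]$, not $\Pr[F]\cdot O(n^2)$. The average-case guarantees of Ku\v{c}era or Dyer--Frieze (the latter stated only for constant $k$) say nothing about runtime conditioned on $F$, a set of atypical graphs on which these algorithms may well be slow. To use them you would have to open them up and exhibit a worst-case-bounded procedure that fails on at most about a $2^{-n}/\poly(n)$ fraction of $k$-colorable graphs. ``Exponentially small'' is not enough: a plain propagation pass leaves some vertex uninferred on roughly a $2^{-\Theta(n/k)}$ fraction of graphs (\Cref{rem:remaining-vertices}).

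The paper builds this tier itself. Its second loop enumerates all $P\in\binom{V}{\nu}$, propagates naively, and brute-forces the at most $ck\log k$ leftover vertices with \textsc{Color-Remaining}. This succeeds in worst-case time $n^{\nu}\poly(n,2^{k\log^2 k})$ on every okay graph, and all but a $3k^{-2n}$ fraction of graphs are okay (\Cref{thm:good_property_probability}). Since each tier's cost is bounded in the worst case over the graphs that reach it, no conditioning issue arises.

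That tier, however, costs about $n^{\nu}$ with $\nu=\Theta(k^9\log k)$, so the first tier must fail with probability below roughly $n^{-\nu}$. Your one-shot scheme cannot get there within $\tO(nk)$. With $m$ probes per class a vertex fails with probability about $2^{-m}$, so $m=\Theta(\log n)$ gives only $n^{-\Theta(1)}$ after the union bound. Reaching $n^{-\nu}$ would need $m\approx\nu\log n$, for a total cost of about $nk\nu\log n$. The paper's staged schedule over $y$ is what reconciles the $\tO(nk)$ budget with this failure requirement.
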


We prove \Cref{thm:mainDet} in \Cref{sec:deterministic} and give an overview of the proof in \Cref{sec:overview}.

In both the randomized and deterministic settings, our algorithms are sublinear in the input size, which is $\Theta(n^2)$. We show that this is \textit{optimal} by proving a runtime lower bound of $\Omega(nk)$:
\begin{restatable}{theorem}{lowerbound}\label{thm:lowerbound-intro}
    For every $k\le n^{1/3}$, any (randomized or deterministic) algorithm that correctly $k$-colors every $k$-colorable graph must have average runtime $\Omega(nk)$.
 \end{restatable}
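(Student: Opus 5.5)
We prove the lower bound by showing that on a typical $k$-colorable graph, any correct algorithm must inspect about $nk$ vertex pairs. A typical $k$-colorable graph is close to the following model: the vertex set is split into $k$ classes of size roughly $n/k$, and every pair of vertices in different classes is an edge independently with probability $1/2$. The uniform distribution over $k$-colorable graphs is dominated (up to constant-factor tilts on a $1-o(1)$ fraction of the mass) by this model with a random balanced partition. Standard counting shows that almost all $k$-colorable graphs have an essentially unique $k$-coloring with all classes of size $(1\pm o(1))n/k$, for $k\le n^{1/3}$. So it suffices to show that, conditioned on the graph being uniquely colorable with balanced classes, any algorithm that outputs a correct coloring makes $\Omega(nk)$ queries in expectation. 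By Yao's principle we may take the algorithm to be deterministic on this input distribution.

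The key step is a \emph{per-vertex} argument. Fix a vertex $v$. To output a color for $v$ consistent with everything else, the algorithm must in effect determine which of the $k$ classes $v$ belongs to. Since the coloring is unique up to permutation, this amounts to distinguishing $v$'s class among $k$ possibilities relative to the other vertices. Consider the queries involving $v$.
\begin{itemize}
\item A query $\{v,u\}$ returning ``edge'' only rules out $v$ sharing $u$'s class.
\item A query returning ``non-edge'' happens with probability about $1/k+(1-1/k)/2\approx 1/2$, and gives at most one bit of evidence.
\end{itemize}
We formalize this with a resampling/swap argument. Take the true partition and a vertex $v$. Moving $v$ to another class $j$ yields another valid $k$-colorable graph, unless some queried pair $\{v,u\}$ with $u$ in class $j$ is an edge. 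For that obstruction to exist for all $k-1$ alternative classes, the algorithm must have queried edges from $v$ into each of the other $k-1$ classes. That requires $\Omega(k)$ queries incident to $v$. Moreover, the modified graph (with the unqueried pairs resampled) has comparable probability under the model. The algorithm's transcript is identical on both graphs, so it outputs the same color for $v$. Relative to the other vertices' colors, that output is wrong on one of the two graphs.

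The main obstacle is making this swap argument quantitative under the \emph{uniform} distribution over $k$-colorable graphs rather than the planted model. Two issues arise. First, the swapped graph must be roughly as likely as the original; this requires balance of class sizes and unique colorability, and is where $k\le n^{1/3}$ enters through union bounds over $v$ and $j$. Second, vertices $u$ whose own class is uncertain must be handled. I would handle this with a charging argument. Call $v$ \emph{resolved} if the transcript contains edges from $v$ to at least $k-1$ distinct classes, or if $v$ lies in a class that is entirely determined by others. Show that on all but an $o(1)$ fraction of graphs, at least $n/2$ vertices must be resolved directly. Each directly resolved vertex requires $k-1$ edge queries touching it, and each query touches two vertices. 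This gives $\Omega(nk)$ queries with constant probability, hence an $\Omega(nk)$ expected runtime; randomized algorithms follow by averaging over their coins.
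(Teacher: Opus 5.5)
Your decisive step, the per-vertex swap, is correct. Once the distributional framing is dropped, it gives a complete proof by a route different from the paper's.

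The framing is the weak part. Several claims are asserted rather than proved: the domination of the uniform model by the planted one, that almost all graphs are uniquely colorable, the ``comparable probability'' of the swapped graph, and the charging argument over ``resolved'' vertices. None of them is needed. The statement requires correctness on every $k$-colorable graph and every random string, so a single transcript-consistent counterexample suffices, and the ``main obstacle'' you name disappears.

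Concretely, fix any input $G$ and random string, and let $C$ be the output, which is a proper coloring of $G$. Use $C$ itself as the reference partition instead of a planted one. Suppose $v$ has no queried edge into $C^{-1}(j)$ for some $j\neq C(v)$, and $\{v,u\}$ is unqueried for some $u\neq v$ with $C(u)=C(v)$. Let $G'$ be $G$ with all edges between $v$ and $C^{-1}(j)$ deleted and the edge $\{v,u\}$ added.
\begin{itemize}
\item Only unqueried pairs change, so the run is identical and again outputs $C$, which is improper on $G'$.
\item Recoloring $v$ with $j$ gives a proper coloring of $G'$, so $G'$ is $k$-colorable.
\end{itemize}
The added edge is essential. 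Merely moving $v$ only deletes edges and leaves $C$ proper. In your version this role is played implicitly by resampling together with unique colorability of $G'$, which is one more thing you would have had to prove.

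To count: if $v$ lies in an output class of size at least $k$ and has fewer than $k-1$ queried pairs, both hypotheses above hold. Output classes of size below $k$ contain fewer than $k^2$ vertices. Hence at least $(n-k^2)(k-1)/2$ queries are made on every input, which is $\Omega(nk)$ for $2\le k\le n^{1/3}$; the case $k=1$ is trivial from output size.

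The paper reaches the contradiction differently. It uses averaging to find at least $2n/3$ ``deficient'' vertices with fewer than $k-2$ queried pairs, and pigeonholes them into one output class. It then counts pairs to obtain an unqueried same-color pair $v,v'$; the inequality $(n/k)^2\gg nk$ used there is exactly where $k\le n^{1/3}$ enters. Finally it deletes \emph{all} unqueried edges, uses the degree bound to color $v$ and $v'$ differently, and adds $\{v,v'\}$.

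Both arguments rest on the same principle: alter only unqueried pairs so that the unchanged output becomes improper while $k$-colorability is preserved. Your local, per-vertex modification needs no pigeonhole step or global deletion, and it extends the range to $k$ up to about $\sqrt{n}$.

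Your ``one bit per non-edge'' bullet plays no role and could only give $\Omega(n\log k)$. The $\Omega(nk)$ comes entirely from the requirement of queried edges from $v$ into all $k-1$ other classes.
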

In fact, we show that such an algorithm cannot be faster than this on \textit{any} input. We prove \Cref{thm:lowerbound-intro} in \Cref{sec:LB} and provide a proof overview in \Cref{sec:overview}.

\paragraph{Our upper-bound approach, at a high level:}
At an extremely high level, the algorithms of \Cref{thm:mainRand} and \Cref{thm:mainDet} sample random subgraphs (which we call the core), color this core, certify the coloring as unique, then attempt to propagate this coloring in several phases. In the event that many rounds of this high-level strategy fail, we resort to a brute-force algorithm. The certification of the core relies on ideas from algorithmic graph regularity. The propagation technique leverages ideas from sublinear algorithms. We give a (much) more detailed discussion of the proof in~\Cref{sec:alg-overview-coloring}.

\paragraph{Our lower-bound approach, at a high level:} We prove that any algorithm for $k$-coloring must make $\Omega(nk)$ queries on every graph, because with fewer queries many of the vertices remain insufficiently determined. This would allow an adversary to add an unqueried edge between vertices of the same color and force the algorithm's unchanged output to become invalid. See \Cref{sec:lb-intro-overview}.

\paragraph{Remark about alternative distributions.} While our results are stated for the uniform distribution over $k$-colorable graphs, they are robust to the underlying distribution over dense $k$-colorable graphs. We follow the outline of Dyer-Frieze \cite{dyer1989solution} by proving an $O(n k)$ runtime over a planted model where each color class is an approximately equal-sized set, and then transferring the result to the uniform distribution; see \Cref{sec:transfer-models}. Papers including Dyer-Frieze \cite{dyer1989solution} and Ku\v{c}era \cite{kuvcera1977expected} consider several other alternative distributions over dense $k$-colorable graphs, and our algorithm works in these models as well, by essentially the same reduction.

\subsection{Related Work}\label{sec:related}

Understanding the ability to $k$-color random instances of graphs --- as well as the ability to certify or refute $k$-colorability, extensions to semirandom models, and thresholds for $k$-colorability --- has remained a central goal in average-case complexity for the past four decades \cite{turner1988almost, dyer1989solution, kuvcera1977expected, kuvcera1993coloring,  alon1994spectral, blum1995coloring, kuvcera1995expected, promel1995random, DBLP:journals/rsa/AchlioptasF99, subramanian1999minimum, krivelevich2002coloring, krivelevich2002deciding, 
coja2004coloring, coja2004exact, bottcher2005coloring, DBLP:conf/analco/KrivelevichV06, DBLP:journals/jal/Coja-Oghlan07, DBLP:journals/ipl/BottcherV08, DBLP:conf/focs/AchlioptasC08, sommer2009note,  coja2010almost, DBLP:journals/combinatorics/Coja-Oghlan13, DBLP:conf/focs/Coja-OghlanV13, DBLP:conf/approx/BapstCHRV14, DBLP:journals/cpc/BapstCE17, DBLP:conf/colt/BandeiraBKMW21}. Many of these works (e.g., \cite{turner1988almost, kuvcera1993coloring})
studied the problem of efficiently $k$-coloring random graphs with high probability, though not high enough to imply average polynomial-time algorithms.\footnote{The extra challenge is that if the worst-case algorithm takes time $t(n)$, 
the probability of needing to run such an algorithm must be at most $\frac{1}{\Omega(t(n))}$. 
For $k$-coloring, where the worst-case algorithm takes time $\tO(2^n)$~\cite{coloringbruteforce}, 
the failure probability needs to be less than $2^{-n}$; even $2^{-n/2}$ does not suffice.} Dyer and Frieze's breakthrough work~\cite{dyer1989solution} was the first to study the average time complexity of $k$-coloring.

\paragraph{Comparing our algorithms to \cite{turner1988almost, kuvcera1993coloring, kuvcera1995expected}:}

We now compare our algorithms to prior works. First, in terms of techniques, one key difference between our work and previous work is our use of algorithmic {\em graph regularity} to certify the unique colorability of a randomly sampled subgraph. This, in turn, provides a sublinear-time certificate of the unique colorability of a subgraph on all-but-$O_k(1)$ vertices of a random $k$-colorable graph.

\paragraph{Average-Case Two-Coloring of Random Hypergraphs:} 
A recent paper \cite{hypergraphcoloring} studies 
the average-case complexity of $2$-coloring $\ell$-uniform hypergraphs. 
Our work shares a high-level algorithmic principle proposed by \cite{hypergraphcoloring}: 
The algorithms of \cite{hypergraphcoloring} and \Cref{thm:mainRand,thm:mainDet} both find a (small, uniquely colorable) subgraph and use it to
color a constant fraction of vertices. Then, a small random sample
of the newly colored vertices can be used to color all of the remaining vertices.
However,  there are additional challenges in the $k$-coloring of graphs.
The first challenge in generalizing from $2$-coloring to $k$-coloring is that more information is necessary to determine the color. Coupled with the challenge that average $k$-colorable 
graphs are much  sparser and have weaker connectivity properties than 
average $k$-colorable hypergraphs, this forces the addition of new phases handling small numbers of badly-connected vertices. 
More crucially, our algorithm 
achieves a linear dependence in $k$, whereas a straightforward adaptation of the approach 
in \cite{hypergraphcoloring} gives an exponential dependence. Their approach finds 
a small \textit{complete} subgraph to use as the core, for which verifying unique colorability is trivial. 
Since the density of these subgraphs is inverse exponential in $k$, they  require $2^{\Omega(k)}$ samples (and runtime) merely to identify a core.

\paragraph{Turner's algorithm:}
Turner's propagation-based algorithm runs in $O(n^2\log{n})$ time and
correctly colors at least $1- 1/\poly{n}$ fraction of $k$-colorable graphs (and therefore
does not yield a polynomial average time algorithm). The algorithm
initially identifies and colors a
$k$-clique in the input graph,  and then proceeds to 
repeatedly color vertices for which only one available color remains. 
This does not proceed in a fixed number of stages, 
which makes it unsuitable for a local sublinear time implementation (see \Cref{sec:LCA-implementation} for an implementation of our algorithm).

\paragraph{Ku\v{c}era's algorithms:}
Ku\v{c}era \cite{kuvcera1995expected} analyzes the expected time for $k$-coloring graphs, by 
considering the local statistic of shared neighbor counts to determine, pairwise, 
whether two vertices are in the same or different color classes. His algorithm achieves a running time of $O(n^2/k)$ for $k=o(\sqrt{n/
\log^2 n})$, with some additional preprocessing time. \cite{kuvcera1993coloring} gives a parallel (CRCW PRAM) algorithm for constant average parallel time $k$-coloring. 
Ku\v{c}era's algorithm \cite{kuvcera1993coloring} holds for $k = \log^{O(1)} n$, and does not immediately
yield a sequential runtime less than $\Theta(n^2)$. 
Due to the concurrent read and concurrent write ability in the
PRAM model, the Parnas-Ron technique \cite{DBLP:journals/tcs/ParnasR07} that converts constant-round
LOCAL distributed algorithms into efficient 
LCAs cannot be used here. 
Both papers derandomize their algorithms by using a component of the graph as a source of random bits. 
Because we have edge dependencies when conditioning on the random graph being $k$-colorable, we use a simpler approach that directly uses the vertex-symmetry of the distribution (see \Cref{sec:int:derand} for details). 

\paragraph{Sublinear-time coloring:}
Graph $(\Delta+1)$-coloring (where $\Delta$ is the maximum degree of the graph) has been considered in the context of local computation algorithms (LCAs) \cite{DBLP:conf/innovations/RubinfeldTVX11, DBLP:conf/podc/ChangFGUZ19, DBLP:journals/corr/abs-2103-10990, DBLP:conf/icalp/DorobiszK23} and other sublinear access models \cite{DBLP:journals/siamcomp/Linial92, DBLP:journals/jacm/HarrisSS18, DBLP:conf/stacs/BamasE19, DBLP:journals/topc/Maus23, DBLP:conf/soda/FlinGHKN24, DBLP:conf/soda/AssadiCK19, DBLP:conf/approx/AlonA20, DBLP:journals/corr/abs-2502-06024, DBLP:journals/theoretics/AssadiY26}. We are not aware of LCAs that achieve the optimal coloring for $k \geq 3$.

\subsection{Open Problems}
We view the most interesting open problem as extending our results to hold for the entire range of $k$. 
The proof of the key technical theorem (\Cref{thm:good_property_probability}) relies on $k \leq n^c$ for certain $c \in (0, 1)$ in several key places. For concreteness, the current limiting factor is certifying the unique colorability of a size-$\nu$ core; the degree and codegree conditions we certify require $\nu = \Omega( k^9 \log k)$. We then apply McDiarmid's inequality to say that the graph has many good cores with probability at least $1 - k^{-2n}$. This requires $k \leq n^{1/36 - \delta}$, where $36$ comes from $(\text{exponent in core size}) \cdot 4$.
Extending to the full range of $1 \leq k = k(n) \leq n$ will likely require a different algorithmic and analytic approach. 

Additionally, it remains a compelling goal to achieve a deterministic algorithm with average runtime $O(nk)$, i.e., without $\polylog(n)$ factors.

We are interested in whether the algorithmic principles we utilize can lead to new insights for the average-case complexity of other NP-hard problems, such as the other problems considered by Dyer and Frieze \cite{dyer1989solution} and Ku\v{c}era \cite{kuvcera1995expected}. Generally, as explained in \Cref{sec:our-results}, our algorithm relies on the principle of first coloring a (small random) subgraph and then propagating this structure to the rest of the graph efficiently. We wonder if this local propagation technique can be applied to achieve a sublinear average-case complexity for other NP-hard problems.

\subsection{Roadmap}
In~\Cref{sec:overview} we sketch the upper bound proofs. In~\Cref{sec:prelims} we recall prior results on coloring that we use as subroutines. In~\Cref{sec:goodProp} we prove most $k$-colorable graphs have desirable properties that we discuss later. In~\Cref{sec:good-n-good-k} we prove the main result. Finally, in~\Cref{sec:LB} we prove the $\Omega(nk)$ lower bound. In \Cref{sec:LCA-implementation}, we give an implementation of the local computation algorithm.

\section{Algorithm Overview and Proof Ideas}\label{sec:overview}

We first describe the $k$-coloring algorithm of \Cref{thm:mainRand} in more detail, and then the structure 
possessed by most $k$-colorable graphs that allows the algorithm to succeed.
Finally, we show how to derandomize and obtain the deterministic algorithm of \Cref{thm:mainDet}.

\subsection{The Coloring Algorithm}\label{sec:alg-overview-coloring}

Our algorithm consists of two main stages, described in \Cref{sec:first-stage} and \Cref{sec:second-stage}. 
Each iteration of the first stage runs in time $T_1=O(kn+\poly(k))$ and either produces a valid coloring or returns \textsc{FAIL}. We show that for a $1-q_1$ fraction of $k$-colorable graphs (which we call \say{awesome} graphs), each iteration of the first stage produces a valid coloring with probability $1/2$, and hence we perform a constant number of iterations in expectation and color awesome graphs in expected time $O(T_1)$. 

Unfortunately, since the fraction of non-awesome graphs is $q_1\gg 2^{-n}$, we cannot simply brute-force all non-awesome graphs. Thus, we introduce a second stage, which runs in time $T_2=n^{\poly(k)}$. We prove that it colors a $1-q_2$ fraction of $k$-colorable graphs (which we call \say{okay} graphs), where $q_2\ll q_1$. If even the second stage does not color a graph, the algorithm will exhaustively search for a coloring in time $T_{BF} = \tO(2^n)$, which can be tolerated because $q_2$ is extremely small.
Thus, our final expected runtime is
\begin{align*}
    \E_{G}[T(G)] &\le \E_{G:\text{$G$ awesome}}[T(G)] + q_1\cdot \E_{G:\text{$G$ not awesome}}[T(G)]\\
    &\le \E_{G:\text{$G$ awesome}}[T(G)] + q_1\cdot \E_{G:\text{$G$ okay}}[T(G)] + q_2\cdot \E_{G:\text{$G$ not okay}}[T(G)]\\
    &\le O(T_1) + q_1\cdot T_2+q_2 \cdot T_{BF} = O(T_1).
\end{align*}

\subsubsection{The First-Stage Algorithm}\label{sec:first-stage}

Suppose we are given a graph $G$ that we must color. We first describe a single iteration of Stage 1 (and recall that we will repeat this up to $m=\poly(n)$ times if we do not find a coloring).
\begin{enumerate}
    \item (Phase 1) We first attempt to find a \say{good core,} which we can color and then extend this partial coloring outward. 
    We randomly sample a set of vertices $D$ of size $\nu=\text{poly}(k)$, which we call the core. Our coloring algorithm attempts to color the core and propagate this coloring outwards. However, in order for the algorithm to never output an incorrect coloring, we must certify that the core is \textit{uniquely colorable}. We say a graph is uniquely $k$-colorable if it has exactly one proper $k$-coloring (up to permuting the names of the $k$ color classes). If $D$ were not uniquely colorable, it is possible the algorithm may create an incorrect coloring in subsequent phases. Since our algorithm runs in sublinear time and cannot check all of the edges, this may go undetected in our average time bound.
    
    We prove that unique colorability is efficiently certifiable, meaning there is an efficiently checkable condition that implies unique colorability and is satisfied for most uniquely colorable graphs. The certificate consists of simple, locally checkable statistics. We verify that the coloring produced on the candidate good core has approximately balanced color classes, that every vertex has approximately the expected degree into each other class, and that every pair of vertices has approximately the expected codegree (number of common neighbors) into each other class (see \Cref{def:goodCore}). These conditions can be checked in $\poly(k)$ time, yet we prove that they are sufficient to force the coloring to be unique (\Cref{lem:core-unique-color}). Degree and codegree regularity imply $\varepsilon$-regularity between the color classes \cite{DBLP:journals/jal/AlonDLRY94}, and furthermore, $\varepsilon$-regularity rules out any alternative colorings, since a second coloring would need to place a large, edge-free pair of subsets across two different color classes of the first coloring, which is forbidden by regularity. While the $\varepsilon$-regularity implication of degree and codegree regularity draws from prior work, using these statistics as a $\poly(k)$-time,
    $\poly(k)$-size \textit{certificate of unique colorability} in the setting of average-case coloring is, to our knowledge, new. The certification of the core is essential as it is what enables the algorithm of \Cref{thm:mainRand} to be worst-case correct, since propagating from a core that is colorable in more than one way could produce an improper coloring.

    If the certification algorithm that verifies simple, locally checkable statistics fails to certify unique colorability, abort. Otherwise, we attempt to color this core using Ku\v{c}era's algorithm  \cite{kuvcera1995expected} that is fast on average (the choice of $\nu$ is made so that Ku\v{c}era's algorithm can be applied and the core is good and uniquely colorable). If Ku\v{c}era's algorithm runs for too long, we abort. Otherwise, we have that $D$ has a unique coloring that we can find and certify in time $\poly(k)$; we say that $D$ is a good core in this case. Let $D_i$ be the vertices in $D$ within color class $i$. If each $|D_i| \geq \nu/2k$, proceed to the next stage; else, repeat Phase 1. 
    
    \item (Phase 2)
    Next, suppose we have a good core $D$. We initialize sets $S_i$, $i \in [k]$. $S_i$ will consist of vertices that are colored $i$.
    The next phase of the algorithm will test if each vertex $v \in G \setminus D$ should be colored $i$ as follows. For each $v$, the algorithm will, for $O(k)$ steps, sample a random color $j$ (that has not yet been excluded) and a random $u\in D_j$, and if $(v,u)$ is in the graph we know $v$ cannot be colored with $j$, so we exclude it. If we exclude all but one color $i$ in this fashion, we add $v$ to $S_i$. It is easy to see that since $D$ is uniquely $k$-colorable, every element of $S_i$ must receive color $i$ in any legal coloring of $G$. Moreover, any $v$ that has many edges to all but one color class $D_i$ of the good core $D$ is colored by this procedure with probability $0.99$. By construction, the total runtime of this step is $O(n \cdot k)$.

    \begin{figure}[h]
        \centering
        \tikzset{every picture/.style={line width=0.75pt}} 

\begin{tikzpicture}[x=0.75pt,y=0.75pt,yscale=-1,xscale=1]

\draw  [fill={rgb, 255:red, 155; green, 155; blue, 155 }  ,fill opacity=0.26 ][line width=1.5]  (280,109) .. controls (280,78.62) and (304.62,54) .. (335,54) .. controls (365.38,54) and (390,78.62) .. (390,109) .. controls (390,139.38) and (365.38,164) .. (335,164) .. controls (304.62,164) and (280,139.38) .. (280,109) -- cycle ;
\draw [line width=1.5]    (236.75,155.75) -- (301.5,107) ;
\draw [line width=1.5]    (236.75,155.75) -- (335,142) ;
\draw [line width=1.5]    (261.75,185.75) -- (301.5,107) ;
\draw [line width=1.5]    (261.75,185.75) -- (335,142) ;
\draw [line width=1.5]    (352.75,77.75) -- (439.5,108) ;
\draw [line width=1.5]    (366.75,96.75) -- (439.5,108) ;
\draw [line width=1.5]    (357,133) -- (439.5,108) ;
\draw [line width=1.5]    (335,142) -- (445.5,146) ;
\draw [line width=1.5]    (366.75,96.75) -- (445.5,146) ;
\draw [line width=1.5]    (346,11) -- (366.75,96.75) ;
\draw [line width=1.5]    (346,11) -- (314.5,90) ;
\draw [line width=1.5]    (311,16) -- (352.75,77.75) ;
\draw [line width=1.5]    (311,16) -- (301.5,107) ;
\draw  [color={rgb, 255:red, 208; green, 2; blue, 27 }  ,draw opacity=1 ][fill={rgb, 255:red, 208; green, 2; blue, 27 }  ,fill opacity=1 ] (362,96.75) .. controls (362,94.13) and (364.13,92) .. (366.75,92) .. controls (369.37,92) and (371.5,94.13) .. (371.5,96.75) .. controls (371.5,99.37) and (369.37,101.5) .. (366.75,101.5) .. controls (364.13,101.5) and (362,99.37) .. (362,96.75) -- cycle ;
\draw  [color={rgb, 255:red, 208; green, 2; blue, 27 }  ,draw opacity=1 ][fill={rgb, 255:red, 208; green, 2; blue, 27 }  ,fill opacity=1 ] (348,77.75) .. controls (348,75.13) and (350.13,73) .. (352.75,73) .. controls (355.37,73) and (357.5,75.13) .. (357.5,77.75) .. controls (357.5,80.37) and (355.37,82.5) .. (352.75,82.5) .. controls (350.13,82.5) and (348,80.37) .. (348,77.75) -- cycle ;
\draw  [color={rgb, 255:red, 65; green, 117; blue, 5 }  ,draw opacity=1 ][fill={rgb, 255:red, 65; green, 117; blue, 5 }  ,fill opacity=1 ] (330,137) -- (340,137) -- (340,147) -- (330,147) -- cycle ;
\draw  [color={rgb, 255:red, 65; green, 117; blue, 5 }  ,draw opacity=1 ][fill={rgb, 255:red, 65; green, 117; blue, 5 }  ,fill opacity=1 ] (352,128) -- (362,128) -- (362,138) -- (352,138) -- cycle ;
\draw  [color={rgb, 255:red, 74; green, 144; blue, 226 }  ,draw opacity=1 ][fill={rgb, 255:red, 74; green, 144; blue, 226 }  ,fill opacity=1 ] (301.5,101) -- (309,113) -- (294,113) -- cycle ;
\draw  [color={rgb, 255:red, 74; green, 144; blue, 226 }  ,draw opacity=1 ][fill={rgb, 255:red, 74; green, 144; blue, 226 }  ,fill opacity=1 ] (314.5,84) -- (322,96) -- (307,96) -- cycle ;
\draw  [color={rgb, 255:red, 208; green, 2; blue, 27 }  ,draw opacity=1 ][fill={rgb, 255:red, 208; green, 2; blue, 27 }  ,fill opacity=1 ] (232,155.75) .. controls (232,153.13) and (234.13,151) .. (236.75,151) .. controls (239.37,151) and (241.5,153.13) .. (241.5,155.75) .. controls (241.5,158.37) and (239.37,160.5) .. (236.75,160.5) .. controls (234.13,160.5) and (232,158.37) .. (232,155.75) -- cycle ;
\draw  [color={rgb, 255:red, 65; green, 117; blue, 5 }  ,draw opacity=1 ][fill={rgb, 255:red, 65; green, 117; blue, 5 }  ,fill opacity=1 ] (306,11) -- (316,11) -- (316,21) -- (306,21) -- cycle ;
\draw  [color={rgb, 255:red, 65; green, 117; blue, 5 }  ,draw opacity=1 ][fill={rgb, 255:red, 65; green, 117; blue, 5 }  ,fill opacity=1 ] (341,6) -- (351,6) -- (351,16) -- (341,16) -- cycle ;
\draw  [color={rgb, 255:red, 74; green, 144; blue, 226 }  ,draw opacity=1 ][fill={rgb, 255:red, 74; green, 144; blue, 226 }  ,fill opacity=1 ] (439.5,102) -- (447,114) -- (432,114) -- cycle ;
\draw  [color={rgb, 255:red, 74; green, 144; blue, 226 }  ,draw opacity=1 ][fill={rgb, 255:red, 74; green, 144; blue, 226 }  ,fill opacity=1 ] (445.5,140) -- (453,152) -- (438,152) -- cycle ;
\draw  [color={rgb, 255:red, 208; green, 2; blue, 27 }  ,draw opacity=1 ][fill={rgb, 255:red, 208; green, 2; blue, 27 }  ,fill opacity=1 ] (257,185.75) .. controls (257,183.13) and (259.13,181) .. (261.75,181) .. controls (264.37,181) and (266.5,183.13) .. (266.5,185.75) .. controls (266.5,188.37) and (264.37,190.5) .. (261.75,190.5) .. controls (259.13,190.5) and (257,188.37) .. (257,185.75) -- cycle ;

\end{tikzpicture}
        \caption{Consider $k$-coloring for $k = 3$. Let the shaded circle be a uniquely colored good core. In Phase 2, for each vertex $v$ not in the core, the algorithm samples vertices in the core to find adjacencies to vertices in all but one color class. If such adjacencies are found, $v$ is then colored the remaining color.}
        \label{fig:propagation-1}
    \end{figure}

    \item (Phase 3) Finally, we enumerate over all remaining uncolored vertices. For each uncolored vertex $v$, we sample random vertices $u_j\in S_j$ and test if $(v,u_j)\in E$. If we find such vertices $u_j$ for all but one color $i$, we again add $v$ to color $i$. We continue in this fashion for $O(n \cdot k)$ total steps, and if we have colored every vertex, we declare success.

    \begin{figure}[h]
        \centering
        \tikzset{every picture/.style={line width=0.75pt}} 

\begin{tikzpicture}[x=0.75pt,y=0.75pt,yscale=-1,xscale=1]

\draw [line width=1.5]    (328,25) -- (493.75,44.75) ;
\draw [line width=1.5]    (421.5,122) -- (493.75,44.75) ;
\draw [line width=1.5]    (427.5,160) -- (493.75,44.75) ;
\draw [line width=1.5]    (293,30) -- (493.75,44.75) ;
\draw  [fill={rgb, 255:red, 155; green, 155; blue, 155 }  ,fill opacity=0.26 ][line width=1.5]  (262,123) .. controls (262,92.62) and (286.62,68) .. (317,68) .. controls (347.38,68) and (372,92.62) .. (372,123) .. controls (372,153.38) and (347.38,178) .. (317,178) .. controls (286.62,178) and (262,153.38) .. (262,123) -- cycle ;
\draw [line width=1.5]    (218.75,169.75) -- (283.5,121) ;
\draw [line width=1.5]    (218.75,169.75) -- (317,156) ;
\draw [line width=1.5]    (243.75,199.75) -- (283.5,121) ;
\draw [line width=1.5]    (243.75,199.75) -- (317,156) ;
\draw [line width=1.5]    (334.75,91.75) -- (421.5,122) ;
\draw [line width=1.5]    (348.75,110.75) -- (421.5,122) ;
\draw [line width=1.5]    (339,147) -- (421.5,122) ;
\draw [line width=1.5]    (317,156) -- (427.5,160) ;
\draw [line width=1.5]    (348.75,110.75) -- (427.5,160) ;
\draw [line width=1.5]    (328,25) -- (348.75,110.75) ;
\draw [line width=1.5]    (328,25) -- (296.5,104) ;
\draw [line width=1.5]    (293,30) -- (334.75,91.75) ;
\draw [line width=1.5]    (293,30) -- (283.5,121) ;
\draw  [color={rgb, 255:red, 208; green, 2; blue, 27 }  ,draw opacity=1 ][fill={rgb, 255:red, 208; green, 2; blue, 27 }  ,fill opacity=1 ] (344,110.75) .. controls (344,108.13) and (346.13,106) .. (348.75,106) .. controls (351.37,106) and (353.5,108.13) .. (353.5,110.75) .. controls (353.5,113.37) and (351.37,115.5) .. (348.75,115.5) .. controls (346.13,115.5) and (344,113.37) .. (344,110.75) -- cycle ;
\draw  [color={rgb, 255:red, 208; green, 2; blue, 27 }  ,draw opacity=1 ][fill={rgb, 255:red, 208; green, 2; blue, 27 }  ,fill opacity=1 ] (330,91.75) .. controls (330,89.13) and (332.13,87) .. (334.75,87) .. controls (337.37,87) and (339.5,89.13) .. (339.5,91.75) .. controls (339.5,94.37) and (337.37,96.5) .. (334.75,96.5) .. controls (332.13,96.5) and (330,94.37) .. (330,91.75) -- cycle ;
\draw  [color={rgb, 255:red, 65; green, 117; blue, 5 }  ,draw opacity=1 ][fill={rgb, 255:red, 65; green, 117; blue, 5 }  ,fill opacity=1 ] (312,151) -- (322,151) -- (322,161) -- (312,161) -- cycle ;
\draw  [color={rgb, 255:red, 65; green, 117; blue, 5 }  ,draw opacity=1 ][fill={rgb, 255:red, 65; green, 117; blue, 5 }  ,fill opacity=1 ] (334,142) -- (344,142) -- (344,152) -- (334,152) -- cycle ;
\draw  [color={rgb, 255:red, 74; green, 144; blue, 226 }  ,draw opacity=1 ][fill={rgb, 255:red, 74; green, 144; blue, 226 }  ,fill opacity=1 ] (283.5,115) -- (291,127) -- (276,127) -- cycle ;
\draw  [color={rgb, 255:red, 74; green, 144; blue, 226 }  ,draw opacity=1 ][fill={rgb, 255:red, 74; green, 144; blue, 226 }  ,fill opacity=1 ] (296.5,98) -- (304,110) -- (289,110) -- cycle ;
\draw  [color={rgb, 255:red, 208; green, 2; blue, 27 }  ,draw opacity=1 ][fill={rgb, 255:red, 208; green, 2; blue, 27 }  ,fill opacity=1 ] (214,169.75) .. controls (214,167.13) and (216.13,165) .. (218.75,165) .. controls (221.37,165) and (223.5,167.13) .. (223.5,169.75) .. controls (223.5,172.37) and (221.37,174.5) .. (218.75,174.5) .. controls (216.13,174.5) and (214,172.37) .. (214,169.75) -- cycle ;
\draw  [color={rgb, 255:red, 65; green, 117; blue, 5 }  ,draw opacity=1 ][fill={rgb, 255:red, 65; green, 117; blue, 5 }  ,fill opacity=1 ] (288,25) -- (298,25) -- (298,35) -- (288,35) -- cycle ;
\draw  [color={rgb, 255:red, 65; green, 117; blue, 5 }  ,draw opacity=1 ][fill={rgb, 255:red, 65; green, 117; blue, 5 }  ,fill opacity=1 ] (323,20) -- (333,20) -- (333,30) -- (323,30) -- cycle ;
\draw  [color={rgb, 255:red, 74; green, 144; blue, 226 }  ,draw opacity=1 ][fill={rgb, 255:red, 74; green, 144; blue, 226 }  ,fill opacity=1 ] (421.5,116) -- (429,128) -- (414,128) -- cycle ;
\draw  [color={rgb, 255:red, 74; green, 144; blue, 226 }  ,draw opacity=1 ][fill={rgb, 255:red, 74; green, 144; blue, 226 }  ,fill opacity=1 ] (427.5,154) -- (435,166) -- (420,166) -- cycle ;
\draw  [color={rgb, 255:red, 208; green, 2; blue, 27 }  ,draw opacity=1 ][fill={rgb, 255:red, 208; green, 2; blue, 27 }  ,fill opacity=1 ] (239,199.75) .. controls (239,197.13) and (241.13,195) .. (243.75,195) .. controls (246.37,195) and (248.5,197.13) .. (248.5,199.75) .. controls (248.5,202.37) and (246.37,204.5) .. (243.75,204.5) .. controls (241.13,204.5) and (239,202.37) .. (239,199.75) -- cycle ;
\draw  [dash pattern={on 5.63pt off 4.5pt}][line width=1.5]  (182,125) .. controls (182,61.49) and (244.46,10) .. (321.5,10) .. controls (398.54,10) and (461,61.49) .. (461,125) .. controls (461,188.51) and (398.54,240) .. (321.5,240) .. controls (244.46,240) and (182,188.51) .. (182,125) -- cycle ;
\draw  [color={rgb, 255:red, 208; green, 2; blue, 27 }  ,draw opacity=1 ][fill={rgb, 255:red, 208; green, 2; blue, 27 }  ,fill opacity=1 ] (489,44.75) .. controls (489,42.13) and (491.13,40) .. (493.75,40) .. controls (496.37,40) and (498.5,42.13) .. (498.5,44.75) .. controls (498.5,47.37) and (496.37,49.5) .. (493.75,49.5) .. controls (491.13,49.5) and (489,47.37) .. (489,44.75) -- cycle ;

\end{tikzpicture}
        \caption{Consider $k$-coloring for $k = 3$. In Phase 3, for each vertex $v$ not colored in an earlier phase, vertices in the good core and those colored in Phase 2 are sampled. If vertices of every color but one are found, $v$ is colored with the remaining color.}
        \label{fig:propagation-2}
    \end{figure}
\end{enumerate}
Note that the above procedure runs in time $T_1=O(n \cdot k)$ as claimed.
We call a graph $G$ \emphdef{awesome} if, informally, this stage is likely to color it. More specifically, an awesome graph must have many good cores, and all of these cores must color all other vertices via the two-stage propagation algorithm described above. We give a formal definition of awesome graphs in~\Cref{sec:int:prop}. 

\subsubsection{Remaining Stages}\label{sec:second-stage}
We handle the non-awesome graphs as follows. We keep the basic structure of Stage 1, but modify it in two main ways:
\begin{enumerate}
    \item We now enumerate over \textit{all} cores of size $\nu = \text{poly}(k)$, for a larger core size, and attempt to produce a certified unique coloring for each.
    \item After attempting to color all vertices by looking for neighbors in $S_i$ in Phase 3, if there are $B=\tO(k)$ extra vertices that we have not yet assigned a color, we exhaustively search for a coloring on these remaining vertices that is consistent with the partial coloring created so far. This takes time $k^B \cdot B \cdot n$ and we call this Phase 4.

    \begin{figure}[h]
        \centering
        \include{Visuals/propagation-3}
        \caption{Consider $k$-coloring for $k = 3$. If $B = \widetilde{O}(k)$ extra vertices are not yet assigned a color, exhaustively search for the lexicographically first $k$-coloring of these vertices that is consistent with the partial coloring constructed so far.}
        \label{fig:propagation-3}
    \end{figure}
\end{enumerate}
Note that this modified procedure runs in time $T_2=\poly(\binom{n}{\nu}+k^B \cdot B \cdot n)$ 
as claimed. We again call a graph \emphdef{okay} if, essentially, this stage colors it. The key difference from awesome graphs is that we allow a small number of exceptional vertices that are not colored by propagation (and we choose a slightly larger core size). 

We prove that all but a $k^{-2n}$ fraction of graphs are okay. Thus, if $G$ is not colored in stage 2, the algorithm can tolerate exhaustively searching for the lexicographically first coloring.

\begin{remark}\label{rem:remaining-vertices}

    The small number of vertices that fail to be colored in Phase 2 is \textbf{not} an artifact of our proof. Fixing a vertex $u\in A_i$ and considering the edges to $S_j\subseteq A_j$ for some $j\neq i$ as sampled independently with probability $p=1/2$, the probability that $u$ has \textbf{no} edges to $S_j$ is approximately $2^{-|S_j|}\approx 2^{-n/k}$ (and thus certainly does not have a positive fraction of edges to $S_j$, and hence is not inferred by $S$). Since we wish to prove that our fast algorithm succeeds with probability at least $1-2^{-n}\gg 1-2^{-n/k}$ (so that we can afford to brute force on the bad cases), the \say{fast} algorithm must be able to deal with some vertices failing to be inferred by $S$. Luckily, the probability that more than $B=\tO(k)$ vertices fail in this way simultaneously is approximately $(2^{-n/k})^B\ll 2^{-n}$. Thus with all but exponentially small probability in $n$, the number of non-connected vertices is bounded by $O_k(1)$.
\end{remark}

The formal algorithmic pseudocode for each of the phases is given in \Cref{sec:algorithmic-subroutines}. The final $k$-coloring algorithm that combines these phases is given in \Cref{alg:main-rand}.

\subsection{The Graph Structure}\label{sec:int:prop} 
\newcommand{\cUnk}{\cU}
In order to argue that the above algorithm obtains a fast runtime, we must show that many graphs are awesome (i.e., colored in the first stage), and almost all are okay (i.e., colored in the second stage). To do so, we will first define several properties (motivated by the algorithm) that we hope graphs have, then prove that the desired fraction of graphs have them.

\subsubsection{Good cores and inferred colors}\label{sec:int:cert}

To begin with, let \textsc{Kucera1995} be the algorithm referenced in \Cref{thm:kuvcera}, which is a deterministic $k$-coloring algorithm of Ku\v{c}era \cite{kuvcera1995expected} that has average runtime $O(n^2/k)$.\footnote{\textsc{Kucera1995} works in all settings of $k$ that our algorithm is claimed for.} Recall that our algorithm begins by looking for a ``good core'' in a graph, defined as follows.  A good core $D$ is a subgraph that has three ``nice'' properties. 
First, the color classes of the core are ``approximately balanced,'' meaning roughly size $|D|/k$. 
Second,  \textsc{Kucera1995} produces a proper $k$-coloring in expected time $O(|D|^2/k)$. 
Third, the core satisfies simple conditions (approximate regularity of degrees and shared neighbor counts) that imply this coloring is unique. 
This last point is essential: while it is easy to show that most cores are uniquely colorable, 
we need to ensure that we \textit{know} when we do not have a uniquely colorable core, since a false positive could cause a correctness error.
The certification of unique colorability via degrees and shared neighbor counts relies on ideas from algorithmic graph regularity and quality control \cite{qualitycontrol}. 

We now, informally, define a good core.

\begin{definition}[Good core (informal; see \Cref{def:goodCore})]\label{def:goodCore-intro}
    For a partition $D_1, D_2, \dots, D_k$ of a (sub)graph $D$, we say the partition satisfies \textsc{Degree-Codegree} if, for each pair $D_i, D_j$, each vertex $u \in D_i$ has degree $\approx |D_j|/2$ into $D_j$, and each pair $u, v \in D_i$ has $\approx |D_j|/4$ shared neighbors in $D_j$.

    We say a (sub)graph $D$ is a \emphdef{good core} if \textsc{Kucera1995} produces such a coloring in expected time $O(|D|^2/k)$, each color class produced by \textsc{Kucera1995} has size $\approx |D|/k$, and the color classes of $D$ satisfy \textsc{Degree-Codegree}, which is sufficient for unique $k$-colorability.
\end{definition}

We say that a graph $G$ has \emphdef{many good cores} if a positive fraction of the subgraphs of size $\widetilde{\Theta}(k^9)$ are good cores. 

Recall that, for a core $D$ to be useful, a large fraction of vertices should have their color inferred by edges to $D$. To formalize this, we define $H$-inferred and strongly $H$-inferred colorings of a subgraph $H$. For a vertex $v$, let $\Gamma(v)$ be its neighborhood.

\begin{definition}\label{def:inferred-coloring}
    Let $H$ be a properly $k$-colored subgraph of a graph $G$, with color classes $\{H_1, H_2, \dots, H_k\}$. We say that a vertex $v \in G$ has an \emphdef{$H$-inferred coloring} if $|\Gamma(v) \cap H_i| \geq 1$ for all but one $i \in [k]$. We say $v$ has a \emphdef{strongly $H$-inferred coloring} if $|\Gamma(v) \cap H_i|/|H_i| \geq 0.01$ for all but one $i \in [k]$.
\end{definition}

If a vertex $v$ has an $H$-inferred coloring, then we can color $v$ by checking its adjacencies with all $w \in H$, which are already colored. If a vertex $v$ has a strongly $H$-inferred coloring and $|H_i| \approx |H|/k$ for all $i \in [k]$, we can color $v$ in $O(k)$ expected time by checking adjacencies with random $w \in H_i$. 
We remark that we will use the guarantees of $H$-inferred and strongly $H$-inferred colorings to build an average-case LCA for $k$-coloring (see \Cref{sec:LCA-implementation}). 

\subsubsection{The density of awesome and okay graphs}
We now define awesome and okay graphs. We define a property called $(\sigma, B)$-linked, which will capture both awesome and okay graphs by choosing different parameters. In words, a $(\sigma, B)$-linked graph has the property that the coloring of \textit{any} set $D$ of size in $[\sigma k, O(k^2 \log k)]$ with approximately balanced color classes can be propagated to all but $B$ vertices in the graph. This is accomplished in two phases by the definition of linked: First, a $0.99$ fraction of vertices in each color class will have their color determined by adjacencies to all but one color class of $D$. All but $B$ of the remaining vertices will have many adjacencies to vertices already colored as all but one remaining color.

\begin{definition}[$(\sigma, B)$-linked (informal; see~\Cref{def:linked})]
   We say that a graph $G$ is $(\sigma, B)$-linked if, for some approximately balanced valid coloring $(A_1, A_2, \ldots, A_k)$, every collection of sets $D_i \subseteq A_i$ of size in $[\sigma, O(k \log k)]$ has the following properties:
   \begin{enumerate}
       \item Let $D = \bigcup_{i \in [k]} D_i$. First, a $0.99$ fraction of vertices in each color class $A_i$, $i \in [k]$ have a strongly $D$-inferred coloring; call the coloring of these vertices $S_1, S_2, \dots, S_k$, and $S := \bigcup_{i \in [k]} S_i$.
       \item Second, all but $B$ vertices in the graph have a strongly $S$-inferred coloring.
   \end{enumerate}
\end{definition}
A graph $G$ is awesome if it has many good cores and is $(\sigma=O(\log k),B=0)$-linked, and okay if it has many good cores and is $(\sigma=O(k\log k),B=O(k\log k))$-linked. 

The structural theorem below is a tail-bound on graphs not being $(\sigma,B)$-linked, which will enable our coloring algorithm to run quickly. Let $\cUnk$ denote the uniform distribution over labeled $k$-colorable graphs with vertex set $[n]$.

\begin{theorem}[Structural Theorem (informal; see \Cref{thm:good_property_probability})]\label{thm:good_property_probability-intro}
    Let $c_1, c_2$ be some specified constants, where $c_1 \in (0, 1)$ and $c_2 \in \mathbb{N}$.
    
    Assume $k\le n^{c_1}$. Then, for every $B\in \N$ and $\sigma\ge c_2 \log k$, with probability at least 
    \[1-\exp(-\Omega(n\sigma/k))-\exp(-\Omega(n(B+1)/k))- \exp(-\Omega(n \log k))\]
    $G\sim \cUnk$ is $(\sigma,B)$-linked (\Cref{def:linked}) and has many good cores.
\end{theorem}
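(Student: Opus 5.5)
We first work in the planted model rather than directly in $\cUnk$. Fix an ordered partition $(A_1,\dots,A_k)$ of $[n]$ with all $|A_i|\in[(1-\delta)n/k,(1+\delta)n/k]$. Include each cross-class pair independently with probability $1/2$. Following Dyer--Frieze, we transfer to $\cUnk$ at the end. A uniformly random $k$-colorable graph has, with probability $1-\exp(-\Omega(n\log k))$, such an approximately balanced coloring. Moreover, conditioned on a fixed partition being a valid coloring, the graph is distributed as the planted model up to reweighting by the number of valid colorings. In the planted model that number is $k!$ times one, except on an event of probability $\exp(-\Omega(n\log k))$. So it suffices to prove each failure bound in the planted model for every balanced partition, and then union-bound over partitions weighted as in the transfer lemma. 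This accounts for the $\exp(-\Omega(n\log k))$ term.

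\textbf{Step 1: many good cores.} For a fixed $\nu$-subset $D$ with $\nu=\tilde\Theta(k^9)$, the induced planted graph has balanced classes with probability $1-o(1)$. Chernoff bounds give, for every vertex and every pair, the degree and codegree conditions into each other class, since $\nu/k$ is large enough to beat the union bound over $\nu^2 k$ events. The expected runtime bound of \textsc{Kucera1995} on $D$, via Markov, also holds with constant probability. Hence the expected fraction of good $\nu$-cores is at least a constant. The fraction is a function of the independent edge indicators, and changing the edges at one vertex alters it by at most $O(\nu/n)$. McDiarmid's inequality, applied with vertex exposure, therefore gives concentration with failure probability $\exp(-\Omega(n/\nu^2))$. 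This is at most $\exp(-\Omega(n\log k))$ exactly when $k\le n^{c_1}$ for small $c_1$. Unique colorability of good cores is \Cref{lem:core-unique-color}, so only the probability estimate is needed here.

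\textbf{Step 2: first propagation.} Fix $i$ and a choice of sets $D_j\subseteq A_j$ ($j\ne i$) of sizes $s\in[\sigma,O(k\log k)]$. For $v\in A_i$, the events ``$v$ has fewer than $0.01|D_j|$ neighbours in $D_j$'' are independent over $v$ and have probability $e^{-\Omega(s)}$. So the probability that more than $0.01|A_i|$ vertices of $A_i$ fail is at most $\binom{|A_i|}{0.01|A_i|}e^{-\Omega(s n/k)}=e^{-\Omega(sn/k)}$ once $s\ge c_2\log k$. This independence needs care: we consider only $v\notin D$, which is fine since $D$ is tiny. Union-bounding over the at most $n^{O(k^2\log k)}$ choices of $D$ costs a factor $\exp(O(k^2\log k\log n))$, which is absorbed because $n/k\gg k^2\log k\log n$ for $k\le n^{c_1}$. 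Summing over $s\ge\sigma$ gives the $\exp(-\Omega(n\sigma/k))$ term.

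\textbf{Step 3: second propagation, the main obstacle.} Here the set $S$ depends on the edges, so we cannot condition on it. The plan is to strengthen Step 1: we show that, with the stated probability, every set $T_j\subseteq A_j$ with $|T_j|\ge 0.99|A_j|$ works simultaneously. Call a vertex $v$ bad if there exist two classes $j\ne j'$ different from $v$'s class into which $v$ has fewer than $0.02|A_j|$ neighbours. Every non-bad $v$ then has at least $0.01|T_j|$ neighbours in each $T_j$ for all but one $j$, which makes it strongly $S$-inferred for any admissible $S$. A fixed $v$ is bad with probability $\exp(-\Omega(n/k))$. The main difficulty is that badness events of different vertices share edges, so they are not independent. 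To handle this, for a fixed set $W$ of $B+1$ vertices we count only edges from $W$ into $[n]\setminus W$; these are independent across $W$. The probability that all of $W$ is bad is then $\exp(-\Omega(n(B+1)/k))$, and the union bound over $\binom{n}{B+1}$ choices of $W$ is absorbed for $B\le n/(k\log n)$. Larger $B$ is trivial: if the claim holds at $B=n/(k\log n)$, then at most that many vertices are bad, which is at most $B$. Combining the three steps with the transfer lemma yields the stated bound.
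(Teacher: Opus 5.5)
Your architecture matches the paper's: planted model $\cBnA$, then good cores, first and second propagation, then transfer to $\cUnk$. However, Step 1 does not give the bound you claim, and this is a genuine gap.

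\paragraph{The concentration step in Step 1.}
With vertex exposure, McDiarmid has $n$ coordinates with differences $c_v=\nu/n$. So $\sum_v c_v^2=\nu^2/n$, and the tail is $\exp(-\Omega(n/\nu^2))=\exp(-\Omega(n/(k^{18}\log^2 k)))$. This is never at most $\exp(-\Omega(n\log k))$. Both exponents are linear in $n$, so no assumption $k\le n^{c_1}$ can close the gap.

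The resulting failure probability $\exp(-n/\mathrm{poly}(k))$ is far above $2^{-n}$. Beating $2^{-n}$ is exactly what the third term is for: okay graphs must fail with probability $\le 3k^{-2n}$ so that brute force is affordable.

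The fix is to expose edges, as in \Cref{lem:manyGoodCores}. Each of the $\binom n2$ indicators changes the good-core status of at most $\binom{n-2}{\nu-2}$ subsets, i.e.\ a $\nu(\nu-1)/(n(n-1))$ fraction. Hence $\sum_e c_e^2=O(\nu^4/n^2)$, and the tail is $\exp(-\Omega(n^2/\nu^4))$. This is at most $k^{-2n}$ once $n\ge C\nu^4\log k$. With $\nu=\tilde\Theta(k^9)$, that is precisely where the restriction $k\le n^{1/36-\delta}$ comes from. You also need the expected fraction of good cores to exceed $2/3$ by a constant, not merely to be ``a constant''.

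\paragraph{The transfer.}
Your claim that $G\sim\cBnA$ has exactly $k!$ colorings except with probability $\exp(-\Omega(n\log k))$ is false. Some $v\in A_i$ has no neighbour in $A_j$ with probability about $2^{-n/k}$, and then $v$ can be moved to $A_j$. So non-uniqueness has probability $\exp(-\Theta(n/k))$. Used as an additive term, this would swamp the bound when $\sigma,B=\Theta(k\log k)$.

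The claim is also unnecessary. Write
$\Pr_{\cUnk}[\text{bad}]\le\max_{A\text{ bal}}\Pr_{\cBnA}[\text{bad}]\cdot\sum_{A\text{ bal}}\Pr_{\cUnk}[E_A]+\sum_{A\text{ not bal}}\Pr_{\cUnk}[E_A]$.
Bound the first sum by $k!\,\E[NC(G)]\le 2k!$ using \Cref{thm:expColorBounded}. This is a multiplicative factor, absorbed since $\log k!=o(n/k)$. Bound the second sum by $k^n2^{-\Omega(n^2/k^4)}$; this is where an $\exp(-\Omega(n\log k))$ term legitimately arises.

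\paragraph{Step 3.}
Your notion of ``bad'' is off by one. A vertex $v\in A_i$ has density $0$ in $S_i$, so being strongly $S$-inferred requires density $\ge 0.01$ in every $S_j$ with $j\ne i$. A vertex that is not bad in your sense may still be sparse into one $A_{j_0}$ with $j_0\ne i$, which gives two exceptional classes. Define bad as sparse into at least one other class; the probability is still $k\,e^{-\Omega(n/k)}$.

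Also, \Cref{def:linked} quantifies over $|S_j|\ge 0.9|A_j|$, not $0.99|A_j|$. Your density threshold $0.02$ therefore has to be raised above $0.11$; the paper uses $0.49$.
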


While our main contributions are algorithmic, we briefly discuss the proof of~\Cref{thm:good_property_probability-intro}. To prove \Cref{thm:good_property_probability-intro}, we begin by considering the following family of distributions $\cBnA$ over $k$-colorable graphs: Consider a partition $A:[n]\ra[k]$ such that the preimages of each $i \in [k]$ under $A$ have size approximately $n/k$ (we call this ``balanced''). To sample a graph from the distribution $\cBnA$, for each pair $(u, v) \in A_i \times A_j$ for $i \neq j \in [k]$, add an edge between $u$ and $v$ independently with probability $1/2$. To prove that $G \sim \cBnA$ has many good cores and is $(\sigma, B)$-linked with high probability, we use a standard concentration analysis. 
Ultimately, we are interested not in the distribution $\cBnA$, but in the uniform distribution over $k$-colorable graphs, $\cUnk$. To transfer our results to the uniform distribution, we \textit{simplify a reduction between the distributions} that was given by Dyer-Frieze \cite{dyer1989solution} (and appears in alternative forms in other papers, such as \cite{promel1995random}). 
To transfer from one model to the other, we essentially show that $G\sim \cUnk$ is extremely unlikely to contain an imbalanced coloring (i.e., one where one set is much larger than the others), and the distribution conditioned on \textit{every} balanced coloring is likewise good.

\subsection{Making the Algorithm Deterministic}\label{sec:int:derand}

Prior papers on average-case $k$-coloring~\cite{dyer1989solution,kuvcera1977expected} use separate sections of the input graph as a source of randomness, while we directly argue that the algorithm can fix all random bits by appealing to symmetry properties of the input distribution. 

First, let us look at which steps of the algorithm described so far were randomized. The two randomized steps are sampling random vertices to find a good core (which is certifiably uniquely and efficiently colorable with a balanced coloring), and testing adjacencies to the sets $S_j$ (the already-colored vertices in color class $j$, which are either in the core or colorable via adjacencies to the core) via sampling. Let us begin with the first part of choosing a good core deterministically.

To choose a good core deterministically, since our input is average-case, we can use the inherent randomness of this distribution to argue we can \say{fix} the random choices, while preserving the average runtime. 
Instead of randomly sampling vertices to be in the good core, the algorithm considers a fixed vertex-disjoint collection of possible cores $\cP=(P_1,\ldots,P_{y_0})$ for $y_0 = \tO(k^2)$.
We prove that one of the first $i$ such cores is good with probability $1-\exp(-\Omega(i))$.
This follows from the \textit{vertex symmetry} of the distribution over random $k$-colorable graphs, which is the property that permuting the vertices does not change the probability of sampling a graph under any distribution we consider. 

Our algorithm, therefore, will now enumerate over the $y_0$ vertex-disjoint subsets. If any of these is a good core, we use it; otherwise, we exhaustively search over all sets of size $\nu$ for some $\nu = \text{poly}(k)$ in an arbitrary order to find a core to use. This exhaustive step is inefficient, but it is run with sufficiently low probability, so we can tolerate it.

Once the algorithm has found a good core, it constructs the sets $S_1,\ldots,S_k$ based on adjacencies to the core in a deterministic fashion -- for every $v$ we enumerate over all possible edges to the core. 
The final step we must modify is, for a vertex $v$ that is not yet colored, testing if $|\Gamma(v)\cap S_j|$ is nonzero for all but one $S_j$. 
We again fix a set of vertices $\cL$, and for every vertex $v$, enumerate over $u\in \cL$ and test if $(v,u)\in E$ and $u\in S_j$ for some $S_j$. 
If we find witnesses $u_j$ for every $j\neq i$ in $\cL$, we color $v$ with color $i$. Proving that, with high probability, we find such witnesses in $\cL$ follows the same strategy as before, where we exploit the vertex symmetry of the relevant graph distribution.

In fact, as described, this algorithm will give a runtime of $n\cdot k^c$ for $c\ge 4$. 
Our actual algorithm gradually increases the number of adjacencies we test (i.e., the size of $\cL$) to obtain an expected $\tO(nk)$ runtime. That is, as the size of $\cL$ increases, the success probability increases; by gradually considering larger sets $\cL$ we achieve the right balance between runtime and success probability.

\subsection{Lower Bound}\label{sec:lb-intro-overview}

In \Cref{sec:LB}, we show that any $k$-coloring algorithm that is worst-case correct (i.e., on all inputs and random strings it returns a valid coloring) must make at least $\Omega(nk)$ queries on \textit{every} input graph. Since we require a worst-case correctness guarantee, this immediately implies a lower bound of $\Omega(nk)$ queries on average.

The lower bound relies on a simple counting argument. 
If an algorithm examines only $o(n \cdot k)$ entries of the adjacency structure, then by an averaging argument a constant fraction of the vertices are under-determined --- that is, each is incident to fewer than $k-2$ revealed edges. If the algorithm assigns the coloring $C:[n]\ra[k]$, within one color class there must exist two under-determined vertices whose mutual edge was never queried. 
An adversary can essentially add this single unqueried edge: the resulting graph remains $k$-colorable,\footnote{Our argument is slightly more complicated since this does not hold exactly as written.} but every valid coloring must assign different colors to these two vertices. 
Since the algorithm makes the same decisions on this modified graph, it must output the same coloring $C$, contradicting its worst-case correctness.

\section{Preliminaries}\label{sec:prelims}

\subsection{Prior Results in Average k-Coloring of Graphs}

We recall prior work on coloring that we use, both as subroutines and for the analysis. Let $\cUnk$ be the uniform distribution over $k$-colorable graphs on $n$ vertices. Throughout, graphs are \emph{labeled}: the vertex set is $[n]$, and two isomorphic graphs with different vertex labelings are counted as distinct. $\cUnk$ is uniform over labeled $k$-colorable graphs on $[n]$.

We will utilize that when $k \le n^{1/2}$, the expected number of colorings is bounded, as proven by Dyer and Frieze \cite{dyer1989solution}:

\begin{theorem}[Theorem 3.1~\cite{dyer1989solution}]\label{thm:expColorBounded}
    Let $k\le n^{1/2}$. Let $NC(G)$ be the number of $k$-colorings of $G$ (up to isomorphism). Then $\E_{G\sim \cUnk}[NC(G)]\le 2$.
\end{theorem}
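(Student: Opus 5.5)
The statement is the Dyer--Frieze bound $\E_{G\sim\cUnk}[NC(G)]\le 2$ for $k\le n^{1/2}$. I would prove it by double counting pairs (graph, coloring).

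Let $\mathcal{C}$ be the set of unordered partitions of $[n]$ into at most $k$ (possibly empty) classes. A partition $P$ with class sizes $a_1,\dots,a_k$ is a proper coloring of exactly $2^{e(P)}$ labeled graphs, where $e(P)=\binom n2-\sum_i\binom{a_i}{2}$ is the number of cross pairs. Let $N$ be the number of $k$-colorable labeled graphs on $[n]$. Then
\[
\E[NC(G)] = \frac{1}{N}\sum_{P\in\mathcal{C}} 2^{e(P)}.
\]
Every $k$-colorable graph has at least one coloring, so $N\ge \max_P 2^{e(P)}$. A sharper lower bound on $N$ is what is needed: it should be comparable to the whole sum, up to a factor of $2$.

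The plan is to compare $\sum_P 2^{e(P)}$ against the number of graphs having exactly one coloring. Write $N\ge \sum_P 2^{e(P)}\Pr_{G\sim\cB_P}[\text{$P$ is the only coloring of $G$}]$, where $\cB_P$ is the planted model with partition $P$, each cross edge present independently with probability $1/2$. It then suffices to show that the partitions carrying most of the mass have uniqueness probability at least $1/2$, while the remaining partitions contribute negligibly.

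\textbf{Step 1 (concentration of the mass on balanced partitions).} The weight $2^{e(P)}=2^{\binom n2-\sum\binom{a_i}2}$ is maximized at balanced sizes. Moving one vertex from a class of size $a$ to a class of size $b\le a-2$ multiplies the weight by $2^{a-b-1}$. Meanwhile the number of partitions with given sizes is a multinomial coefficient divided by symmetries, which changes by a factor of at most about $n$. Summing a geometric-type series shows that partitions with some class size deviating by more than $O(\sqrt{n\log n}\,)$-ish from $n/k$ carry an $o(1)$ fraction of the total. This step uses $k\le n^{1/2}$, so that $n/k\ge\sqrt n$ dominates the deviations.

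\textbf{Step 2 (uniqueness for near-balanced planted graphs).} For $G\sim\cB_P$ with $P$ near-balanced, a second coloring $Q\neq P$ must either split some class of $P$ in a nontrivial way, or merge pieces across classes. In the latter case there are sets $X\subseteq P_i$ and $Y\subseteq P_j$, $i\neq j$, with no edges between them. I would union bound over the "distance" $t$ between $P$ and $Q$, meaning the minimum number of vertices to move. When $t$ vertices are moved, each moved vertex $v$ must have no edges into its new class restricted to the part coming from $P$'s other classes, which has size roughly $n/k-t$. This gives probability about $2^{-(n/k-t)}$ per vertex, so roughly $\binom{n}{t}k^t2^{-t(n/k-t)}$ overall. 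For $t\le n/(2k)$ this is $\le (nk2^{-n/(2k)})^t$, summable and tiny since $n/k\ge \sqrt n\gg\log n$. For large $t$ (at least a constant fraction of a class is rearranged), I would instead use that $Q$ induces $k$ independent sets. Some class of $Q$ must meet two classes of $P$ in sets of sizes $\ge s$, with $s$ large, and there are no edges between these pieces. That event has probability about $2^{-s\cdot s'}$, which beats the $k^n$ choices of $Q$ once $ss'\gg n\log k$. So $\Pr[\text{not unique}]\le 1/2$, indeed $o(1)$.

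\textbf{Combining.} Steps 1 and 2 give $N\ge(1-o(1))\sum_P2^{e(P)}$, hence $\E[NC]\le 1+o(1)\le 2$. The main obstacle is Step 2 in the intermediate range of $t$. There neither the per-vertex bound nor the large-set bound is immediate, and the pieces can have sizes around $\sqrt{n\log k}$. I would first try to handle it by choosing $Q$ minimizing the distance to $P$ and arguing that every moved vertex then has no neighbor in a set of size $\ge n/(2k)$. If that fails, I would split by the sizes of the largest intersection pieces $|Q_a\cap P_i|$.
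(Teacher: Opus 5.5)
The paper gives no proof of this statement: it cites Dyer--Frieze and remarks that their argument extends to $k\le n^{1/2}$. So your proposal has to stand on its own, and it does not yet.

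\textbf{What works.} The double-counting reduction is sound. Step 1 is fine in substance, but the window you state is wrong: $\sqrt{n\log n}$ exceeds $n/k$ at $k=\sqrt n$. By \Cref{fact:easy_counting_fact}, size vectors with $\sum_i(a_i-n/k)^2\ge D^2$ carry relative mass at most $n^{O(k)}2^{-D^2/2}$, so the correct window is $O(\sqrt{k\log n})=o(n/k)$. The small-$t$ part of Step 2 is also correct. There, each forbidden pair joins a moved vertex to an unmoved one, so different moved vertices involve disjoint sets of pairs.

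\textbf{Where it fails.} The rest of Step 2 is the entire content of the theorem, and it does not go through.

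Your large-$t$ bound union-bounds over $k^n$ colorings against a single empty $s\times s'$ block. That needs $ss'\gg n\log k$. Both pieces lie in $P$-classes of size about $n/k$, so $ss'\lesssim n^2/k^2=O(n)$ when $k=\Theta(\sqrt n)$.

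Worse, a large $t$ need not produce two large pieces in one $Q$-class at all:
\begin{itemize}
\item If $n=k^2$, a partition of $[n]$ into $k$ transversals of $P$ has every piece of size $1$.
\item For $k\approx n^{1/3}$, split $P_1$ into $k$ pieces, keep one as a class, and attach the others to $P_2,\ldots,P_k$. This gives $t\approx n/k$, yet the largest product is $n^2/k^3\approx n$.
\end{itemize}
These $Q$ are unlikely only through all their forbidden pairs together. Their probabilities are $2^{-k\binom{k}{2}}$ and about $2^{-(n/k)^2}$, against $(k!)^{k-1}$ and about $k^{n/k}$ choices respectively.

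Your proposed repair for intermediate $t$ is also false. Let $Q$ consist of the two halves $H,H'$ of $P_1$, the merged class $P_2\cup P_3$, and $P_4,\ldots,P_k$. In every distance-minimizing labeling, one of $H,H'$ gets a label other than $1$. Its roughly $n/(2k)$ moved vertices lie in a class contained in $P_1$, so they must avoid no edges at all. The whole cost $2^{-|P_2||P_3|}$ sits on the merge.

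\textbf{What is missing.} You need a global first-moment bound,
$\E_{\cB_P}[NC]-1=\sum_{Q\neq P}2^{-f(P,Q)}$, where $f(P,Q)=\sum_a\sum_{i<j}|Q_a\cap P_i||Q_a\cap P_j|$. Here $2^{-f(P,Q)}$ is exactly the probability that $Q$ is proper under $\cB_P$. By Markov, this bound also yields your uniqueness statement. Your fallback of organizing by the largest pieces points the right way, but it must be paired with this full product and a matching entropy count. One route, with $m=n/k$ and $P$ near-balanced:
\begin{itemize}
\item Label each $Q$-class by a plurality $P$-class, and call the other $t$ vertices foreign. A foreign $v\in Q_c$ must avoid all of $Q_c\setminus P_{P(v)}$, which has size at least $|Q_c|/2$.
\item Suppose $s$ classes of $Q$ (counting empty ones) have size below $m/4$. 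Then at least $s$ classes of $P$ are the plurality of no larger class, which forces $t\ge(3/4-o(1))sm$. The small classes hold fewer than $sm/4$ vertices.
\item Hence roughly two thirds of the foreign vertices sit in classes of size at least $m/4$, so $f(P,Q)\ge c\,tm$.
\item The number of $Q$ with $t$ foreign vertices is at most $\binom{n}{t}k^t$. An extra factor $4^ke^{O(t)}$ is needed only when some $P$-class is not the plurality of exactly one $Q$-class. In that case some $P$-class is entirely foreign, so $t=\Omega(m)$.
\end{itemize}
Summing gives $\sum_{t\ge1}(enk\,2^{-cm})^t+4^k2^{-\Omega(m^2)}=o(1)$ for $k\le\sqrt n$. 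Without an argument of this kind, your proposal establishes the reduction but not the theorem.
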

This result is stated for $k$ constant in~\cite{dyer1989solution}, but the proof holds for $k\le n^{1/2}$.

In our analysis, we will work with an alternative model of random $k$-colorable graphs that is easier to analyze, defined in \Cref{def:mod1}. We work with partitions $A:[n]\ra[k]$ (which we think of as colorings). Denote $A_i=A^{-1}(i)$ and $a_i=|A_i|$.
We define partitions (i.e., colorings) to be approximately balanced or weakly balanced as follows:
\begin{definition}\label{def:apx-balanced}
    We say a partition $A:[n]\ra[k]$ is \emphdef{approximately balanced} if $a_i \in (1 \pm \frac{1}{1000k}) n / k$ for every $i$. We say it is \emphdef{weakly balanced} if $a_i \in (1 \pm \frac{1}{100k}) n / k$ for every $i$. 
\end{definition}

For a partition $A:[n]\ra[k]$, define distribution $\cBnA$ as follows.

\begin{definition}[See Model 1 of \cite{dyer1989solution}]\label{def:mod1}
   For every partition $A:[n]\ra[k]$ let $\cBnA$ be the following distribution over graphs. For each pair $(u, v) \in A_i \times A_j$ for $i \neq j \in [k]$, add an edge between $u$ and $v$ independently with probability $1/2$. 
\end{definition}

More generally, for any $m \in \N$ and partition $A:[m]\ra[k]$, let $\cB_{m,k,A}$ be the analogous distribution over graphs on $m$ vertices (so that $\cBnA = \cB_{n,k,A}$). 

Next, we recall a $k$-coloring algorithm of Ku\v{c}era with average time $O(n^2 / k)$.
\begin{theorem}[Theorem 5.7~\cite{kuvcera1995expected}]\label{thm:kuvcera}
    Let $n, k$ be such that $k = o\left(\sqrt{n/\log^2(n)}\right)$. For every weakly balanced partition $A:[n]\ra[k]$, there exists a deterministic algorithm for $k$-coloring $G\sim \cBnA$ with average runtime $O(n^2 / k)$.  Call this algorithm \textsc{Kucera1995}.
\end{theorem}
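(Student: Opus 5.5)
The statement is Ku\v{c}era's theorem, which the paper imports as a black box. If I had to reprove it, I would use the following \emph{guess, verify, fall back} scheme. It relies on one observation: worst-case correctness does not need a certified unique coloring. It is enough to output \emph{some} proper $k$-coloring, and checking that a candidate partition $(C_1,\dots,C_k)$ with $|C_i|\approx n/k$ is proper means reading only the within-class pairs. That costs $\sum_i \binom{|C_i|}{2}=O(n^2/k)$, which is exactly the target budget. So the algorithm has three parts: (i) cheaply produce a candidate partition that equals $A$ with high probability, (ii) verify it in $O(n^2/k)$, (iii) on failure, fall through a hierarchy of slower but more robust procedures, ending in $\tO(k^n)$ exhaustive search.

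\textbf{Step 1: candidate by codegree statistics.} For $G\sim\cBnA$ with $A$ weakly balanced, consider two vertices $u,v$.
\begin{itemize}
\item If they are in the same class, their expected number of common neighbours is $(n-a_i)/4$.
\item If they are in different classes $i\neq j$, it is $(n-a_i-a_j)/4$.
\end{itemize}
The gap is about $n/(4k)$, while Chernoff fluctuations are $O(\sqrt{n\log n})$. The gap wins exactly when $k=o(\sqrt{n/\log n})$, which is consistent with the hypothesis $k=o(\sqrt{n/\log^2 n})$.

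The algorithm builds classes greedily, keeping one representative per class found so far.
\begin{itemize}
\item For a new vertex $v$, it first discards every representative adjacent to $v$, at $O(1)$ cost each.
\item It compares $v$ to the remaining representatives by codegree, computed on a random sample $R$ of size $m=\Theta(k^2\log n)$. On $R$ the gap $m/(4k)$ still dominates $\sqrt{m\log n}$.
\end{itemize}
This costs $O(nk\cdot m)$, which is $o(n^2/k)$ in the main range of $k$. In the remaining range, the constant factor can be controlled by computing codegrees only against $O(1)$ surviving representatives in expectation. With probability $1-n^{-\omega(1)}$ the output is exactly $A$, and verification then succeeds.

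\textbf{Step 2: the fallback hierarchy (the main obstacle).} The difficulty, exactly as in \Cref{rem:remaining-vertices} and the awesome/okay discussion of \Cref{sec:alg-overview-coloring}, is the tail. Exhaustive search costs about $k^n$, so the final fallback may be reached only with probability $k^{-n}\cdot O(n^2/k)$. Sampled codegrees only give failure probability $\exp(-\Theta(m/k^2))$.

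I would insert intermediate tiers, each verified in $O(n^2/k)$.
\begin{itemize}
\item \emph{Tier 2:} full codegrees over all $n$ vertices, costing $O(n^3)$ and failing with probability $\exp(-\Omega(n/k^2))$. Its cost is paid only with the tier-1 failure probability $n^{-\omega(1)}$.
\item \emph{Tier 3:} a Dyer--Frieze-style procedure running in time $n^{\poly(k)}$. It enumerates all small vertex sets together with their colorings and propagates each via forced colors, with brute force on the last $O(k\log k)$ unforced vertices. I would prove it fails only when some large pair $X\subseteq A_i$, $Y\subseteq A_j$ spans few edges, or when many vertices have near-empty neighbourhood into some class. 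By a union bound, both events have probability $\exp(-\Omega(n\log k))\le k^{-2n}$, in the same manner as the paper's \Cref{thm:good_property_probability-intro}.
\end{itemize}

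\textbf{Step 3: the runtime calculation.} Summing cost times probability over the tiers gives
\[
O(n^2/k)+n^{-\omega(1)}n^3+e^{-\Omega(n/k^2)}n^{\poly(k)}+k^{-2n}\tO(k^n)=O(n^2/k),
\]
using $k=o(\sqrt{n/\log^2 n})$ to make the third term negligible. The delicate step is establishing the $k^{-2n}$ tail for tier 3 uniformly over weakly balanced $A$. That is where I expect most of the work to lie.
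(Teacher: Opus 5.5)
The paper gives no proof here (it imports Ku\v{c}era's theorem), so judged on its own terms your cascade has a genuine gap: the accounting in Step~3 does not close in the upper part of the stated range. Your tier~3 enumerates seeds. A seed that forces any vertex must meet $k-1$ classes, so the logarithm of the tier-3 cost is at least $(k-1)\log(n/k)$. The hypothesis $k=o(\sqrt{n/\log^2 n})$, however, only gives $n/k^2=\omega(\log^2 n)$. Hence $e^{-\Omega(n/k^2)}n^{\poly(k)}$ is negligible only when roughly $k^3\log n=o(n)$. At $k=n^{0.4}$, which the hypothesis allows, the term is $\exp(\Omega(n^{0.4}\log n)-O(n^{0.2}))$. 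This is not slack in your tier-2 bound: a fixed cross-class pair already reaches same-class codegree with probability $e^{-O(n/k^2)}$, so a codegree-based tier really does fail that often. The hard part is therefore not the $k^{-2n}$ tail of tier~3, which is a routine union bound of the kind in \Cref{lem:coreExpand}. It is bridging from failure probability $e^{-\Theta(n/k^2)}$ to an $n^{\Omega(k)}$-cost fallback. That requires further intermediate tiers whose failure probability decays much faster than $e^{-n/k^2}$ while their cost stays below $e^{cn/k^2}$, and your sketch supplies none.

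Three further problems remain.
\begin{itemize}
\item \emph{Step~1 cost.} Step~1 costs $\Theta(nkm)=\Theta(nk^3\log n)$, which exceeds $n^2/k$ once $k\gg(n/\log n)^{1/4}$. Your patch does not help: $v$ is non-adjacent to each other-class representative with probability $1/2$, so about $k/2$ representatives survive the adjacency filter, not $O(1)$. The overrun is polynomial, not a constant factor.
\item \emph{Determinism.} The statement asks for a \emph{deterministic} algorithm that works for every weakly balanced $A$, but your $R$ is random, and a fixed $R$ fails. The codegree gap on $R$ between $v$'s own class and class $j$ is $|R\cap A_j|/4$. A weakly balanced $A$ can place $R$ inside few classes (inside $A_1$ alone when $m<a_1$), which makes this gap vanish for every other class and pushes almost every graph to the $O(n^3)$ tier. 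You would need random bits extracted from the graph itself, independent of the edges tested later, as Ku\v{c}era and Dyer--Frieze do.
\item \emph{Regime the paper needs.} The paper only runs \textsc{Kucera1995} on cores of size $\nu=\Theta(k^9\log k)$, i.e.\ with $k\approx\nu^{1/9}$. There $\nu k^3\log\nu=o(\nu^2/k)$, and $e^{-\Omega(\nu/k^2)}=e^{-\Omega(k^7)}$ swamps a $\nu^{O(k^2\log k)}$-cost third tier. So in that regime your cascade, once derandomized and with tier~3 worked out, faces no such obstruction. The gap concerns the theorem in its full stated range.
\end{itemize}
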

We use this result to quickly color the small core graph that we sample. If we instead applied a worst-case coloring algorithm (or searched for a core graph that is trivial to quickly color, analogously to the approach of~\cite{hypergraphcoloring}), we would obtain an exponential ($2^{\text{poly}(k)}$) dependence on $k$.

\subsection{Graph Regularity}

We will utilize the notion of $\varepsilon$-regularity, defined below.

\begin{definition}[$\varepsilon$-regular]
    A pair $(A, B)$ is $\varepsilon$-regular if for every $X \subset A$ and $Y \subset B$ satisfying $|X| \geq \varepsilon |A|$ and $|Y| \geq \varepsilon |B|$, we have:
$\left|\frac{e(A, B)}{|A||B|} - \frac{e(X, Y)}{|X||Y|} \right| \leq \varepsilon.$
\end{definition}

\begin{definition}
    For two vertices $u, v$ in a graph, the codegree of $u$ and $v$ is the number of shared neighbors between these two vertices. 
\end{definition}

\cite{DBLP:journals/jal/AlonDLRY94} proved that consistency of vertex degrees and codegrees implies $\varepsilon$-regularity.

\begin{lemma}[Modification/implication of Lemma 3.2 of \cite{DBLP:journals/jal/AlonDLRY94}]\label{regularity-from-codegree}
    Let $C' = (C_1, C_2)$ be a bipartite graph with classes $|C_1|, |C_2| \in (1 \pm 1/(100k)) s/k$. Consider $2 (k/s)^{1/4} < \varepsilon < 1/16$. If at most $(1 - 1/(100k)) \cdot \varepsilon^4 s/(8k) $ vertices of each of $C_1, C_2$ have degree not in $(1 \pm \varepsilon^4)s/(2k)$ and every pair of vertices $(u, v) \in (C_1 \times C_1) \cup (C_2 \times C_2)$ have codegree in $(1 \pm \varepsilon^3)s/(4k)$, then $C'$ is $\varepsilon$-regular.
\end{lemma}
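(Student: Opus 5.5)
This lemma is a rescaled form of Lemma 3.2 of \cite{DBLP:journals/jal/AlonDLRY94}, so the plan is to prove it the way that lemma is proved: first show that a single pair $(X,Y)$ violating regularity forces a large deviation in a sum over $Y$, then control that sum by Cauchy--Schwarz using only degrees and codegrees. Write $m = s/k$, so that $|C_1|,|C_2| \in (1\pm 1/(100k))m$, the typical degree is about $m/2$, and the typical codegree is about $m/4$. The density $d = e(C_1,C_2)/(|C_1||C_2|)$ is then $1/2 \pm O(\varepsilon^4)$. To see this, sum the degrees over $C_1$. The good vertices contribute $(1\pm\varepsilon^4)m/2$ each. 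At most $\varepsilon^4 m/8$ exceptional vertices contribute at most $|C_2|$ each. Dividing by $|C_1||C_2|$ gives the claim.

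Now suppose $(C_1,C_2)$ is not $\varepsilon$-regular, with witnesses $X\subseteq C_1$, $Y\subseteq C_2$, $|X|\ge \varepsilon|C_1|$, $|Y|\ge\varepsilon|C_2|$, and $|e(X,Y)/(|X||Y|) - d| > \varepsilon$. For $y\in C_2$ set $\delta_y = \deg_X(y) - |X|/2$. Since $d\approx 1/2$ up to $O(\varepsilon^4)\ll\varepsilon$, we get $\bigl|\sum_{y\in Y}\delta_y\bigr| \ge (\varepsilon/2)|X||Y|$. Cauchy--Schwarz then gives
\[
\sum_{y\in C_2}\delta_y^2 \;\ge\; \frac{1}{|Y|}\Bigl(\sum_{y\in Y}\delta_y\Bigr)^2 \;\ge\; \frac{\varepsilon^2}{4}|X|^2|Y| \;\ge\; \frac{\varepsilon^3}{4}|X|^2|C_2|.
\]
This is the lower bound that regularity failure forces.

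For the upper bound, expand
\[
\sum_{y\in C_2}\delta_y^2 = \sum_y \deg_X(y)^2 - |X|\sum_y \deg_X(y) + |C_2||X|^2/4 .
\]
Double counting gives $\sum_y \deg_X(y) = \sum_{x\in X}\deg(x)$ and
\[
\sum_y \deg_X(y)^2 = \sum_{x\in X}\deg(x) + \sum_{x\ne x'\in X}\mathrm{codeg}(x,x').
\]
Every codegree is $(1\pm\varepsilon^3)m/4$, so the second term is $|X|^2(1\pm\varepsilon^3)m/4$ up to the $O(|X|m)$ diagonal term. For degrees, the good vertices of $X$ give $(1\pm\varepsilon^4)|X|m/2$. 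The exceptional vertices number at most $\varepsilon^4 m/8 \le (\varepsilon^3/8)|X|$, using $|X|\ge \varepsilon m(1-1/(100k))$; this is exactly where the hypothesis's factor $(1-1/(100k))$ is used. They contribute an error of at most $(\varepsilon^3/8)|X|m$. Substituting, and using $|C_2| = m(1\pm 1/(100k))$, the main terms $m|X|^2/4 - m|X|^2/2 + m|X|^2/4$ cancel. What remains is
\[
\sum_y\delta_y^2 \le \bigl(\varepsilon^3/4 + \varepsilon^3/8 + O(1/k)\bigr)\,|X|^2 m/\ldots + O(|X| m).
\]

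The main obstacle is this final bookkeeping. The upper bound must come out strictly below $\tfrac{\varepsilon^3}{4}|X|^2|C_2|$, and the slack is thin, so constants need care. There are three error sources. The first is the term $|C_2||X|^2/4$ versus $m|X|^2/4$, of relative size $1/(100k)$. If this is too large, I would center instead at $d_X = |X|\cdot(\text{average degree})/m$ rather than at $1/2$, so the class-size error enters only through $d$, which was already shown to be $1/2\pm O(\varepsilon^4)$. The second is the diagonal term $O(|X|m)$, which is negligible because $|X| \ge \varepsilon m/2 \gg 1/\varepsilon^{3}$; this is where $\varepsilon > 2(k/s)^{1/4}$, i.e.\ $\varepsilon^4 m > 16$, enters. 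The third is the exceptional-degree contribution. I would first run the computation with the lower bound sharpened to $\varepsilon^2/4\cdot|X|^2|Y|$ and with $\varepsilon<1/16$, absorbing the $\varepsilon^4$ terms. If the constants still fail, I would fall back on the exact constant bookkeeping of \cite{DBLP:journals/jal/AlonDLRY94}, rescaled by $m=s/k$. Either route yields a contradiction, so the pair is $\varepsilon$-regular.
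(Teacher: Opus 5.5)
\textbf{Your proposal has a genuine gap: you never control $|C_2|$ to the precision the argument needs.} The overall route is sound. It is the Alon--Duke--Lefmann--R\"odl--Yuster argument that the paper simply cites: Cauchy--Schwarz over $Y$ for the lower bound, and expanding $\sum_y\deg_X(y)^2$ into degree and codegree sums over $X$ for the upper bound. The problem starts with your claim $d=\tfrac12\pm O(\varepsilon^4)$, which does not follow from summing degrees. That sum gives $e(C_1,C_2)=\tfrac m2|C_1|(1\pm O(\varepsilon^4))$, hence only $d=\tfrac{m}{2|C_2|}(1\pm O(\varepsilon^4))$. With the stated hypothesis $|C_2|\in(1\pm\tfrac{1}{100k})m$, this gives merely $d=\tfrac12\pm O(1/k)$. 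The claim is also false in general: for suitable parameters, a random $\tfrac m2$-regular bipartite graph with both classes of size $(1+\varepsilon^3/2)m$ satisfies all the hypotheses, yet $d\approx\tfrac12-\tfrac{\varepsilon^3}{4}$.

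Now $1/(100k)$ can be far larger than $\varepsilon^3$; the paper applies the lemma with $\varepsilon=1/(10k)$, so $\varepsilon^3=1/(1000k^3)$. So the residue of your ``cancelling'' main terms is of order $|X|^2m/k$, which swamps the $\varepsilon^3|X|^2m$ you must beat. Recentering does not remove this. For every center $t$, $\sum_{y\in C_2}(\deg_X(y)-t)^2\ge\sum_{y}\deg_X(y)^2-e(X,C_2)^2/|C_2|$ with $e(X,C_2)\approx|X|m/2$. So your upper bound necessarily contains $\tfrac{|X|^2m}{4}\bigl(1-\tfrac{m}{|C_2|}\bigr)$, and you need $|C_2|\le(1+O(\varepsilon^3))m$. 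Falling back on the ADLRY bookkeeping does not avoid this either. Their lemma assumes equal classes and measures deviations from the true average degree, not from the fixed values $s/(2k)$ and $s/(4k)$. Translating between the two normalizations is exactly the missing step.

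\textbf{The missing idea is that the degree and codegree hypotheses themselves force the class sizes to be correct up to $O(\varepsilon^3)$.}
\begin{itemize}
\item Double counting $e(C_1,C_2)$ from both sides gives $|C_1|=|C_2|(1\pm O(\varepsilon^4))$.
\item Summing the codegree lower bound over ordered pairs in $C_1$ gives $\sum_{y\in C_2}\deg(y)^2\ge|C_1|(|C_1|-1)(1-\varepsilon^3)\tfrac m4$.
\item The degree hypothesis on $C_2$ gives $\sum_{y\in C_2}\deg(y)^2\le|C_2|(1+\varepsilon^4)^2\tfrac{m^2}{4}+\tfrac{\varepsilon^4 m}{8}|C_1|^2$.
\item Combining these with $1/m<\varepsilon^4/16$ yields $|C_2|\le(1+\varepsilon^3+O(\varepsilon^4))m$.
\end{itemize}

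With this in hand, center at the mean $t=e(X,C_2)/|C_2|$. The at most $\varepsilon^3|X|/8$ exceptional vertices of $X$ move $t/|X|$ only $O(\varepsilon^3)$ away from $d$. So irregularity gives $\sum_{y\in Y}(\deg_X(y)-t)^2\ge(1-O(\varepsilon^2))\varepsilon^2|X|^2|Y|\ge(1-O(\varepsilon^2))\varepsilon^3|X|^2|C_2|$. On the other side, the expansion gives $\sum_{y\in C_2}(\deg_X(y)-t)^2\le(\tfrac{9}{16}+O(\varepsilon))\varepsilon^3|X|^2m+|X||C_2|$. The three $\varepsilon^3$ contributions come from the codegree window, the exceptional vertices of $X$, and the bound on $|C_2|$. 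The last term satisfies $|X||C_2|\le\tfrac1{16}\varepsilon^3|X|^2m(1+O(\varepsilon^4))$ because $\varepsilon^4m>16$. Since $\tfrac{10}{16}<1$ and $\varepsilon<1/16$, this is a contradiction.

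Sharpening the lower bound is also not optional. With your $\varepsilon/2$ loss, the lower bound is only $\tfrac{\varepsilon^3}{4}|X|^2|C_2|$. The codegree-window term $\tfrac{\varepsilon^3}{4}|X|^2m$ in the upper bound already matches that, so that version can never close.
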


\subsection{Concentration Inequalities}

We will make use of the following concentration inequality, called McDiarmid's Inequality.

\begin{theorem}[McDiarmid's Inequality \cite{mcdiarmid1989method}, lower tail]\label{thm:mcdiarmid}
    Consider a function $f: \mathcal{X}_1 \times \mathcal{X}_2 \times \dots \times \mathcal{X}_n \to \mathbb{R}$. Suppose that the function $f$ satisfies the following \textit{bounded differences} property with constants $(c_i)_{i \in [n]}$: for $i \in [n]$ and all $(x_1, x_2, \dots, x_n) \in \mathcal{X}_1 \times \mathcal{X}_2 \times \dots \times \mathcal{X}_n$, 
    $$\sup_{y_i \in \mathcal{X}_i} \left|f(x_1, x_2, \dots, x_{i - 1}, x_i, x_{i + 1}, \dots,  x_n) - f(x_1, x_2, \dots, x_{i - 1}, y_i, x_{i + 1}, \dots,  x_n) \right| \leq c_i.$$

    Consider random variables $X_i \in \mathcal{X}_i$ for $i  \in [n]$. Then, 
    $$\mathbb{P}\left[f(X_1, X_2, \dots, X_n) \leq \mathbb{E}\left[ f(X_1, X_2, \dots, X_n) \right] - \varepsilon \right] \leq \exp\left( - \frac{2 \varepsilon^2}{\sum_{i = 1}^n c_i^2} \right).$$
\end{theorem}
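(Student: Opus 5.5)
The statement to prove is McDiarmid's inequality (lower tail), stated as Theorem~\ref{thm:mcdiarmid}. I read it with the standard hypothesis that $X_1,\dots,X_n$ are independent; it is false without independence. The plan is the Doob martingale plus Azuma--Hoeffding route.

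\textbf{Step 1 (martingale).} Set $f = f(X_1,\dots,X_n)$ and define the Doob martingale
\[
Z_i = \E[f \mid X_1,\dots,X_i],
\]
so that $Z_0=\E[f]$ and $Z_n=f$. The increments are $D_i = Z_i - Z_{i-1}$. It suffices to bound $\Pr[Z_n - Z_0 \le -\varepsilon]$. Applying the argument to $-f$, this reduces to the upper tail $\Pr[\sum_i D_i' \ge \varepsilon]$ for $D_i' = -D_i$.

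\textbf{Step 2 (bounded range of increments).} This is the key use of independence, and the step needing the most care. Conditional on $X_1,\dots,X_{i-1}$, write
\[
Z_i = g(X_1,\dots,X_i), \qquad g(x_1,\dots,x_i) = \E_{X_{i+1},\dots,X_n}\bigl[f(x_1,\dots,x_i,X_{i+1},\dots,X_n)\bigr].
\]
Independence is what lets the expectation be taken over the product distribution of $X_{i+1},\dots,X_n$ regardless of $x_i$.

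Define
\[
L_i = \inf_{y} g(X_1,\dots,X_{i-1},y) - Z_{i-1}, \qquad U_i = \sup_{y} g(X_1,\dots,X_{i-1},y) - Z_{i-1}.
\]
Then $L_i \le D_i \le U_i$, and both $L_i$ and $U_i$ are measurable with respect to $X_1,\dots,X_{i-1}$. The bounded-differences property, applied pointwise inside the expectation, gives $U_i - L_i \le c_i$. We also have $\E[D_i \mid X_1,\dots,X_{i-1}] = 0$.

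\textbf{Step 3 (Hoeffding's lemma and Chernoff).} Hoeffding's lemma states that if $\E[Y]=0$ and $a\le Y\le b$, then
\[
\E[e^{\lambda Y}] \le e^{\lambda^2 (b-a)^2/8}.
\]
Apply it conditionally to each increment, using the predictable range from Step 2. Peeling off the increments one at a time from the top via the tower property gives
\[
\E\bigl[e^{\lambda (Z_n - Z_0)}\bigr] \le \exp\Bigl(\lambda^2 \sum_i c_i^2 / 8\Bigr).
\]
By Markov's inequality,
\[
\Pr[Z_n - Z_0 \ge \varepsilon] \le \exp\Bigl(-\lambda\varepsilon + \lambda^2 \sum_i c_i^2/8\Bigr).
\]
Choosing $\lambda = 4\varepsilon/\sum_i c_i^2$ yields the bound $\exp(-2\varepsilon^2/\sum_i c_i^2)$. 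Running the same argument on $-f$ gives the stated lower-tail inequality.

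The only nonroutine ingredient is Hoeffding's lemma itself. Its proof bounds $e^{\lambda y}$ by the chord between $a$ and $b$ (using convexity), and then shows that the logarithm of the resulting expression, as a function of $u=\lambda(b-a)$, is at most $u^2/8$ by a second-order Taylor bound.
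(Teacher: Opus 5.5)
Your proof is correct and follows the standard route. The paper gives no proof of this statement and simply cites McDiarmid, whose original argument is this same Doob-martingale argument: the predictable range bound $U_i - L_i \le c_i$ is fed into the conditional Hoeffding lemma, and the resulting exponent is optimized at $\lambda = 4\varepsilon/\sum_i c_i^2$.

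You are also right that the statement as written must additionally assume that the $X_i$ are independent and that $\varepsilon>0$. Both hold in the paper's only application (\Cref{lem:manyGoodCores}), where under $\cBnA$ each edge indicator is either an independent Bernoulli$(1/2)$ variable or a constant.
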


\section{Properties of Most k-Colorable Graphs}\label{sec:goodProp}

In this section, we present the formal statement of the key technical lemma (see \Cref{thm:good_property_probability}, informally described in \Cref{thm:good_property_probability-intro}) contributing to \Cref{thm:mainRand,thm:mainDet}. The technical lemma says that some strong structural properties hold in all but an exponentially small fraction of $k$-colorable graphs, with respect to the uniform distribution over $k$-colorable graphs.

We additionally prove that subgraphs with certain properties (balanced color classes, and approximate degree and codegree regularity) are uniquely colorable (meaning that the coloring is unique up to permutation of the colors), in \Cref{sec:good-core-colorable}.

\subsection{The properties}

We first formally define these properties, starting with a good core.

\begin{definition}\label{def:goodCore}
    We say a (sub)graph $H$ is a \emphdef{good core} if the following hold. Let $\varepsilon$ be $1/(10k)$.
    \begin{itemize}
        \item $|H| \geq c_0 k^5$, where $c_0 = 20^4$.
        \item The algorithm of~\Cref{thm:kuvcera} (from \cite{kuvcera1995expected}) produces a coloring in time $O(|H|^2/k)$.
        \item Each color class in the coloring produced has size in $(1 \pm \frac{1}{100k}) |H|/k$.
        \item For every pair of independent sets $H_i, H_j \subseteq H$, and every $u \in H_i$, $u$ has degree $(1 \pm \varepsilon^4)|H|/(2k)$ into $H_j$. 
        \item For every pair of independent sets $H_i, H_j \subseteq H$, each pair of vertices in $H_i$ has codegree in $(1 \pm \varepsilon^3) |H|/(4k)$ into $H_j$.
    \end{itemize}
\end{definition}

We prove in \Cref{lem:core-unique-color} that such an $H$ is \textit{uniquely colorable}.

We say a graph has many good cores if a positive fraction of subgraphs (of a fixed size) are good cores: 

\begin{definition}\label{def:manyGoodCores}
    Fix $\nu \defeq \lceil 2c_\nu k^9 \log k\rceil$, where $c_\nu>0$ is a sufficiently large
    constant (chosen so that \Cref{lem:num-good-subgraphs,lem:manyGoodCores} hold). We say a graph
    $G$ has \emphdef{many good cores} if at least $2/3$ of its subgraphs of size $\nu$ are good
    cores (\Cref{def:goodCore}).
\end{definition}

Finally, we say a graph is linked if, essentially, every core contains edges to most vertices, and all but a tiny number of exceptional vertices have many edges to the former subset.

\begin{restatable}{definition}{linked}\label{def:linked}
    We say $G$ is $(\sigma,B)$\emphdef{-linked} if there is an approximately balanced valid coloring $(A_1, A_2, \ldots,A_k)$ such that the following holds. For every collection of sets $D_i\subseteq A_i$ with $|D_i|\in [\sigma,O(k \log k)]$, the following two properties hold:
    \begin{enumerate}
        \item\label{itm:core-inferred} For every $i$, at least 99\% of the vertices in $A_i$ are strongly $(D_1,\ldots,D_k)$-inferred (\Cref{def:inferred-coloring}).

        \item\label{itm:s-inferred} For every collection of sets $S_i\subseteq A_i$ where $|S_i|/|A_i|\ge 0.9$, all but $B$ vertices are strongly $(S_1,\ldots,S_k)$-inferred.
    \end{enumerate}
\end{restatable}

\subsection{A good core is uniquely colorable}\label{sec:good-core-colorable}

We prove that good cores are uniquely colorable. The proof of this lemma is where regularity plays a crucial role. Regularity implies that the simple, efficiently checkable conditions of a good core certify the unique colorability of a good core.

\begin{lemma}\label{lem:core-unique-color}
    If a (sub)graph is a good core, it is uniquely colorable.
\end{lemma}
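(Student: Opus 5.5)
Let $H$ be a good core with $s := |H|$, and let $H_1,\dots,H_k$ be the coloring produced by \textsc{Kucera1995}. We may take $\varepsilon = 1/(10k)$ as in \Cref{def:goodCore}. The plan is to show that every pair $(H_i,H_j)$, $i\neq j$, is $\varepsilon$-regular with density about $1/2$, and then to show that regularity forbids any second coloring that is not a relabeling of $(H_1,\dots,H_k)$.

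\textbf{Step 1 (regularity).} For each $i\ne j$, I apply \Cref{regularity-from-codegree} to the bipartite graph $(H_i,H_j)$.
\begin{itemize}
\item The class sizes lie in $(1\pm 1/(100k))s/k$ by the balance condition.
\item Every vertex has degree $(1\pm\varepsilon^4)s/(2k)$ into the other class, so the number of exceptional vertices is zero.
\item Every same-side pair has codegree in $(1\pm\varepsilon^3)s/(4k)$.
\end{itemize}
The lemma also requires $\varepsilon > 2(k/s)^{1/4}$. This follows from $s\ge c_0k^5 = 20^4k^5$, since then $(k/s)^{1/4}\le 1/(20k)$. The density $d_{ij}=e(H_i,H_j)/(|H_i||H_j|)$ lies in $(1\pm\varepsilon^4)\frac{s/(2k)}{|H_j|}$. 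This is roughly $1/2$, and certainly at least $1/3 > \varepsilon$.

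\textbf{Step 2 (no other coloring).} Suppose $\chi$ is any proper $k$-coloring of $H$ with classes $C_1,\dots,C_k$. For each $i$, pigeonhole gives some color $\pi(i)$ with $|H_i\cap C_{\pi(i)}|\ge |H_i|/k \ge \varepsilon|H_i|$, since $1/k \ge 1/(10k)$.

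Suppose $\pi(i)=\pi(j)=c$ for some $i\ne j$. Then $X=H_i\cap C_c$ and $Y=H_j\cap C_c$ are sets of relative size at least $\varepsilon$ with $e(X,Y)=0$, because $C_c$ is independent. Regularity then gives $d_{ij}\le\varepsilon$, which contradicts $d_{ij}\ge 1/3$. Hence $\pi$ is a bijection.

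\textbf{Step 3 (the main obstacle).} The pigeonhole argument only controls a $1/k$ fraction of each class, so it does not by itself show that every vertex of $H_i$ gets color $\pi(i)$. To upgrade it, I use the degree condition directly.

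Suppose some $v\in H_i$ has $\chi(v)=\pi(j)$ with $j\ne i$. Then $v$ has no neighbors in $Y := H_j\cap C_{\pi(j)}$, and $|Y|\ge |H_j|/k$. I then want to contradict the codegree and degree conditions. Take the $\ge\varepsilon|H_i|$ vertices of $H_i\cap C_{\pi(i)}$, or better, argue pointwise: $v$'s neighborhood in $H_j$ has size about $|H_j|/2$ and avoids $Y$. I am not sure this is immediately contradictory, since a single vertex missing a $1/k$-fraction set is not forbidden by regularity.

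So the first thing I would try is to strengthen Step 2 to show that $|H_j\setminus C_{\pi(j)}|$ is small. Suppose $Z=H_j\cap C_c$ with $c\ne\pi(j)$ and $|Z|\ge\varepsilon|H_j|$. Write $c=\pi(l)$. Then $Z$ and $H_l\cap C_c$ form an edgeless pair, both of relative size at least $\varepsilon$, which contradicts regularity of $(H_j,H_l)$. Hence each class satisfies $|H_j\setminus C_{\pi(j)}|<k\varepsilon|H_j| = |H_j|/10$, so $|Y|\ge 0.9|H_j|$.

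Now $v$ has about $|H_j|/2$ neighbors in $H_j$, all lying outside $Y$, which has size under $0.1|H_j|$. This contradicts the degree condition $(1\pm\varepsilon^4)s/(2k)$. Therefore $\chi(v)=\pi(i)$ for every $v\in H_i$, and $\chi$ coincides with the original coloring up to the permutation $\pi$, so $H$ is uniquely colorable.
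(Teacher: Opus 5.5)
Your proof is correct. It uses the same two ingredients as the paper's proof. The first is $\varepsilon$-regularity of every pair of certified classes, obtained from \Cref{regularity-from-codegree} exactly as you verify it; it forbids an edgeless pair of linear-size subsets in two different classes. The second is the pointwise degree condition, which forbids a single vertex from sitting in the wrong class once the colorings are nearly aligned.

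The organization differs. The paper fixes one class $D_1$ of the hypothetical second coloring that meets at least two certified classes. A preliminary size analysis (its Cases A and B) ensures $|D_1|>0.99s/k$ and a largest overlap of at least $0.99s/k^2$. It then splits on that largest overlap. If it is at least $0.8s/k$, a vertex in a second overlap violates the degree condition. Otherwise, two linear-size overlaps violate regularity.

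You argue globally instead. Pigeonhole plus regularity gives a bijection $\pi$ between the classes of the two colorings. A second use of regularity shows each $H_j$ lies more than $90\%$ inside $C_{\pi(j)}$. Only then does the degree condition remove every misplaced vertex.

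Your route drops the size analysis of the second coloring's classes and the $0.8s/k$ threshold. It also separates what regularity alone gives from the one step where the pointwise degree condition is needed. Regularity alone shows every proper $k$-coloring uses all $k$ colors and agrees, up to relabeling, with the certified one on most of each class. The paper's route is more local: it only has to exhibit one offending class and never builds the matching.

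Two cosmetic points. Delete the false start at the beginning of Step 3, since the argument you end with is complete. Also, $s\ge 20^4k^5$ only gives $2(k/s)^{1/4}\le\varepsilon$, not the strict inequality \Cref{regularity-from-codegree} asks for. This boundary case is harmless, and the paper's proof has it too.
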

\begin{proof}
    
    Let $H$ be a good core on $s$ vertices, and let $C = (C_1, C_2, \dots, C_k)$ be the output of the algorithm of \Cref{thm:kuvcera} (from \cite{kuvcera1995expected}) on $H$ that satisfies the properties of an acceptable $k$-coloring of a good core. Suppose, towards contradiction, that there is another proper $k$-coloring $D = (D_1, D_2, \dots, D_k)$ of $H$. This coloring $D$ does not necessarily satisfy the conditions (balancedness, degrees, codegrees) of \Cref{def:goodCore}, so we cannot make any further assumptions about it.

    First, we need a lower bound on the size of some $D_i$ and its intersection with some color classes $C_j$ and $C_{j'}$. We prove that there exists a color class $D_i$ such that $|D_i| > (1 - \frac{k-1}{100k}) s /k$ and there are $j, j' \in [k]$ with $|C_j \cap D_i| > (1 - \frac{k-1}{100k}) s /k^2$ and $|C_{j'} \cap D_i| \geq 1$. We consider two cases. 
    \begin{enumerate}
        \item Case A: Suppose there is a color class $D_i$ of size $> (1 + \frac{1}{100k}) s /k$. Since each color class of $C$ has size in $(1 \pm \frac{1}{100k}) s /k$, $D_i$ cannot be contained in only one color class of $C$. By the pigeonhole principle over the sizes of the intersections of color classes of $C$ with $D_i$, there must exist color classes $C_j, C_{j'}$ such that $|C_j \cap D_i| \geq |D_i|/k > (1 + \frac{1}{100k}) s /k^2$ and $|C_{j'} \cap D_i| \geq 1$.
        \item Case B: Suppose that all color classes of $D$ are of size $< (1 + \frac{1}{100k}) s /k$. Then, each color class of $D$ must be of size $> (1 - \frac{k-1}{100k}) s /k$. Since $D \neq C$ (as partitions, i.e., up to relabeling of colors), there exists a $D_i$ such that $D_i$ crosses at least two color classes of $C$. By the pigeonhole principle, there must exist color classes $C_j, C_{j'}$ such that $|C_j \cap D_i| \geq |D_i|/k > (1 - \frac{k-1}{100k}) s /k^2$ and $|C_{j'} \cap D_i| \geq 1$.
    \end{enumerate}
    Thus, we have proven the required statement.
    
    Next, without loss of generality, suppose $D$ satisfies $|D_1| \geq 0.99 s/k$, $|C_1 \cap D_1| \geq 0.99 s/k^2$, and $|C_2 \cap D_1| \geq 1$, and let $C_1$ and $C_2$ be the color classes of $C$ with the largest and second largest intersection size with $D_1$. We consider two cases. 
    \\\\
    \noindent\textit{Case 1.} 
    First, we consider the case where $C_1$ and $D_1$ agree a lot, but not completely. Suppose $|C_1 \cap D_1| \geq  0.8 s/k$. This contradicts the degree condition required for a good core, as vertices $v \in C_2 \cap D_1$ \textit{cannot} be connected to any vertices in $C_1 \cap D_1$, and thus cannot have a degree of more than $(1+\frac{1}{100k})s/k - 0.8 s/k = s/k \cdot (1 + \frac{1}{100k} - 0.8) = (0.2 + \frac{1}{100k})s/k$ into $C_1$. This is much smaller than $(1 - \varepsilon^4) s/(2k)$, which is a contradiction.

    \begin{figure}[H]
        \centering
        \tikzset{every picture/.style={line width=0.75pt}} 

\begin{tikzpicture}[x=0.75pt,y=0.75pt,yscale=-1,xscale=1]

\draw [draw opacity=0][fill={rgb, 255:red, 208; green, 2; blue, 27 }  ,fill opacity=0.3 ][line width=3] [line join = round][line cap = round]   (257.75,59.5) .. controls (267.5,63) and (273.5,69) .. (288.5,76) .. controls (290.25,80.5) and (310.29,91) .. (323.29,95) .. controls (331.29,102) and (340.5,105) .. (351.5,108) .. controls (362.5,116) and (376.32,120.91) .. (388.5,127) .. controls (397.25,131.5) and (405.5,137) .. (406.75,145.5) .. controls (405.94,148.76) and (406.5,152) .. (400.5,153) .. controls (393.5,156) and (387.5,156) .. (381.75,155.5) .. controls (372.15,158.7) and (361.29,157.33) .. (350.75,158.5) .. controls (322.5,158) and (282.75,159.84) .. (248.75,159.5) .. controls (246.05,159.47) and (236.55,151.04) .. (234.5,149.5) .. controls (228.5,144.5) and (226.15,136.73) .. (224.5,128.5) .. controls (223.5,131.5) and (224.86,117.64) .. (224.75,114.5) .. controls (224.02,93.28) and (228.47,79.06) .. (235.5,65) .. controls (242.5,60) and (251.5,55) .. (262.5,64) ;
\draw  [line width=1.5]  (252.45,13.3) .. controls (271.29,13.37) and (286.42,49.02) .. (286.24,92.92) .. controls (286.07,136.82) and (270.65,172.35) .. (251.82,172.27) .. controls (232.98,172.2) and (217.85,136.55) .. (218.03,92.65) .. controls (218.2,48.75) and (233.62,13.22) .. (252.45,13.3) -- cycle ;
\draw  [line width=1.5]  (396.45,13.3) .. controls (415.29,13.37) and (430.42,49.02) .. (430.24,92.92) .. controls (430.07,136.82) and (414.65,172.35) .. (395.82,172.27) .. controls (376.98,172.2) and (361.85,136.55) .. (362.03,92.65) .. controls (362.2,48.75) and (377.62,13.22) .. (396.45,13.3) -- cycle ;
\draw  [fill={rgb, 255:red, 0; green, 0; blue, 0 }  ,fill opacity=1 ] (247,44.75) .. controls (247,42.13) and (249.13,40) .. (251.75,40) .. controls (254.37,40) and (256.5,42.13) .. (256.5,44.75) .. controls (256.5,47.37) and (254.37,49.5) .. (251.75,49.5) .. controls (249.13,49.5) and (247,47.37) .. (247,44.75) -- cycle ;
\draw  [fill={rgb, 255:red, 0; green, 0; blue, 0 }  ,fill opacity=1 ] (247,118.75) .. controls (247,116.13) and (249.13,114) .. (251.75,114) .. controls (254.37,114) and (256.5,116.13) .. (256.5,118.75) .. controls (256.5,121.37) and (254.37,123.5) .. (251.75,123.5) .. controls (249.13,123.5) and (247,121.37) .. (247,118.75) -- cycle ;
\draw  [fill={rgb, 255:red, 0; green, 0; blue, 0 }  ,fill opacity=1 ] (247.38,92.79) .. controls (247.38,90.16) and (249.51,88.04) .. (252.13,88.04) .. controls (254.76,88.04) and (256.88,90.16) .. (256.88,92.79) .. controls (256.88,95.41) and (254.76,97.54) .. (252.13,97.54) .. controls (249.51,97.54) and (247.38,95.41) .. (247.38,92.79) -- cycle ;
\draw  [fill={rgb, 255:red, 0; green, 0; blue, 0 }  ,fill opacity=1 ] (247,67.75) .. controls (247,65.13) and (249.13,63) .. (251.75,63) .. controls (254.37,63) and (256.5,65.13) .. (256.5,67.75) .. controls (256.5,70.37) and (254.37,72.5) .. (251.75,72.5) .. controls (249.13,72.5) and (247,70.37) .. (247,67.75) -- cycle ;
\draw  [fill={rgb, 255:red, 0; green, 0; blue, 0 }  ,fill opacity=1 ] (247,143.75) .. controls (247,141.13) and (249.13,139) .. (251.75,139) .. controls (254.37,139) and (256.5,141.13) .. (256.5,143.75) .. controls (256.5,146.37) and (254.37,148.5) .. (251.75,148.5) .. controls (249.13,148.5) and (247,146.37) .. (247,143.75) -- cycle ;
\draw  [fill={rgb, 255:red, 0; green, 0; blue, 0 }  ,fill opacity=1 ] (390,45.75) .. controls (390,43.13) and (392.13,41) .. (394.75,41) .. controls (397.37,41) and (399.5,43.13) .. (399.5,45.75) .. controls (399.5,48.37) and (397.37,50.5) .. (394.75,50.5) .. controls (392.13,50.5) and (390,48.37) .. (390,45.75) -- cycle ;
\draw  [fill={rgb, 255:red, 0; green, 0; blue, 0 }  ,fill opacity=1 ] (390,119.75) .. controls (390,117.13) and (392.13,115) .. (394.75,115) .. controls (397.37,115) and (399.5,117.13) .. (399.5,119.75) .. controls (399.5,122.37) and (397.37,124.5) .. (394.75,124.5) .. controls (392.13,124.5) and (390,122.37) .. (390,119.75) -- cycle ;
\draw  [fill={rgb, 255:red, 0; green, 0; blue, 0 }  ,fill opacity=1 ] (390.38,93.79) .. controls (390.38,91.16) and (392.51,89.04) .. (395.13,89.04) .. controls (397.76,89.04) and (399.88,91.16) .. (399.88,93.79) .. controls (399.88,96.41) and (397.76,98.54) .. (395.13,98.54) .. controls (392.51,98.54) and (390.38,96.41) .. (390.38,93.79) -- cycle ;
\draw  [fill={rgb, 255:red, 0; green, 0; blue, 0 }  ,fill opacity=1 ] (390,68.75) .. controls (390,66.13) and (392.13,64) .. (394.75,64) .. controls (397.37,64) and (399.5,66.13) .. (399.5,68.75) .. controls (399.5,71.37) and (397.37,73.5) .. (394.75,73.5) .. controls (392.13,73.5) and (390,71.37) .. (390,68.75) -- cycle ;
\draw  [fill={rgb, 255:red, 0; green, 0; blue, 0 }  ,fill opacity=1 ] (390,144.75) .. controls (390,142.13) and (392.13,140) .. (394.75,140) .. controls (397.37,140) and (399.5,142.13) .. (399.5,144.75) .. controls (399.5,147.37) and (397.37,149.5) .. (394.75,149.5) .. controls (392.13,149.5) and (390,147.37) .. (390,144.75) -- cycle ;
\draw  [color={rgb, 255:red, 74; green, 144; blue, 226 }  ,draw opacity=1 ][fill={rgb, 255:red, 208; green, 2; blue, 27 }  ,fill opacity=0.3 ][line width=1.5]  (251.77,57.51) .. controls (267.73,57.57) and (280.57,80.8) .. (280.46,109.4) .. controls (280.34,137.99) and (267.31,161.12) .. (251.36,161.05) .. controls (235.4,160.99) and (222.56,137.76) .. (222.67,109.17) .. controls (222.79,80.57) and (235.82,57.45) .. (251.77,57.51) -- cycle ;
\draw  [color={rgb, 255:red, 74; green, 144; blue, 226 }  ,draw opacity=1 ][fill={rgb, 255:red, 208; green, 2; blue, 27 }  ,fill opacity=0.3 ][line width=1.5]  (394.8,132.5) .. controls (401.44,132.52) and (406.81,138.03) .. (406.78,144.8) .. controls (406.75,151.57) and (401.34,157.03) .. (394.7,157) .. controls (388.06,156.98) and (382.69,151.47) .. (382.72,144.7) .. controls (382.75,137.93) and (388.16,132.47) .. (394.8,132.5) -- cycle ;
\draw [color={rgb, 255:red, 208; green, 2; blue, 27 }  ,draw opacity=1 ][line width=1.5]  [dash pattern={on 5.63pt off 4.5pt}]  (251.77,57.51) -- (394.8,132.5) ;
\draw [color={rgb, 255:red, 208; green, 2; blue, 27 }  ,draw opacity=1 ][line width=1.5]  [dash pattern={on 5.63pt off 4.5pt}]  (251.36,161.05) -- (394.7,157) ;

\draw (242,192.4) node [anchor=north west][inner sep=0.75pt]    {$C_{1}$};
\draw (383,190.4) node [anchor=north west][inner sep=0.75pt]    {$C_{2}$};
\draw (225,97.4) node [anchor=north west][inner sep=0.75pt]  [color={rgb, 255:red, 74; green, 144; blue, 226 }  ,opacity=1 ]  {$D_{1}$};
\draw (427,140.4) node [anchor=north west][inner sep=0.75pt]  [color={rgb, 255:red, 74; green, 144; blue, 226 }  ,opacity=1 ]  {$D_{1}$};

\end{tikzpicture}
        \caption{Case 1: Towards contradiction, suppose there are two $k$-colorings $C = (C_1, C_2, \dots, C_k)$ and $D = (D_1, D_2, \dots, D_k)$ of a good core. Suppose $|C_1 \cap D_1| \geq 0.8 s/k$ and $|C_2 \cap D_1| \geq 1$. Every vertex $v \in C_2 \cap D_1$ cannot have any edges to $C_1 \cap D_1$. Since $C_1 \setminus D_1$ is so small, $v \in C_2 \cap D_1$ contradicts the degree requirement imposed on vertices in a good core. In the figure, the red region represents where edges are forbidden.}
        \label{fig:good-core-case-1}
    \end{figure}
    
    \bigskip

    \begin{figure}[h]
        \centering
        \tikzset{every picture/.style={line width=0.75pt}} 

\begin{tikzpicture}[x=0.75pt,y=0.75pt,yscale=-1,xscale=1]

\draw [color={rgb, 255:red, 208; green, 2; blue, 27 }  ,draw opacity=1 ][line width=1.5]  [dash pattern={on 5.63pt off 4.5pt}]  (254.1,104.49) -- (403.1,103.49) ;
\draw  [line width=1.5]  (255.45,11.3) .. controls (274.29,11.37) and (289.42,47.02) .. (289.24,90.92) .. controls (289.07,134.82) and (273.65,170.35) .. (254.82,170.27) .. controls (235.98,170.2) and (220.85,134.55) .. (221.03,90.65) .. controls (221.2,46.75) and (236.62,11.22) .. (255.45,11.3) -- cycle ;
\draw  [line width=1.5]  (403.45,12.3) .. controls (422.29,12.37) and (437.42,48.02) .. (437.24,91.92) .. controls (437.07,135.82) and (421.65,171.35) .. (402.82,171.27) .. controls (383.98,171.2) and (368.85,135.55) .. (369.03,91.65) .. controls (369.2,47.75) and (384.62,12.22) .. (403.45,12.3) -- cycle ;
\draw  [fill={rgb, 255:red, 0; green, 0; blue, 0 }  ,fill opacity=1 ] (250,42.75) .. controls (250,40.13) and (252.13,38) .. (254.75,38) .. controls (257.37,38) and (259.5,40.13) .. (259.5,42.75) .. controls (259.5,45.37) and (257.37,47.5) .. (254.75,47.5) .. controls (252.13,47.5) and (250,45.37) .. (250,42.75) -- cycle ;
\draw  [fill={rgb, 255:red, 0; green, 0; blue, 0 }  ,fill opacity=1 ] (250,116.75) .. controls (250,114.13) and (252.13,112) .. (254.75,112) .. controls (257.37,112) and (259.5,114.13) .. (259.5,116.75) .. controls (259.5,119.37) and (257.37,121.5) .. (254.75,121.5) .. controls (252.13,121.5) and (250,119.37) .. (250,116.75) -- cycle ;
\draw  [fill={rgb, 255:red, 0; green, 0; blue, 0 }  ,fill opacity=1 ] (250.38,90.79) .. controls (250.38,88.16) and (252.51,86.04) .. (255.13,86.04) .. controls (257.76,86.04) and (259.88,88.16) .. (259.88,90.79) .. controls (259.88,93.41) and (257.76,95.54) .. (255.13,95.54) .. controls (252.51,95.54) and (250.38,93.41) .. (250.38,90.79) -- cycle ;
\draw  [fill={rgb, 255:red, 0; green, 0; blue, 0 }  ,fill opacity=1 ] (250,65.75) .. controls (250,63.13) and (252.13,61) .. (254.75,61) .. controls (257.37,61) and (259.5,63.13) .. (259.5,65.75) .. controls (259.5,68.37) and (257.37,70.5) .. (254.75,70.5) .. controls (252.13,70.5) and (250,68.37) .. (250,65.75) -- cycle ;
\draw  [fill={rgb, 255:red, 0; green, 0; blue, 0 }  ,fill opacity=1 ] (250,141.75) .. controls (250,139.13) and (252.13,137) .. (254.75,137) .. controls (257.37,137) and (259.5,139.13) .. (259.5,141.75) .. controls (259.5,144.37) and (257.37,146.5) .. (254.75,146.5) .. controls (252.13,146.5) and (250,144.37) .. (250,141.75) -- cycle ;
\draw  [fill={rgb, 255:red, 0; green, 0; blue, 0 }  ,fill opacity=1 ] (398,43.75) .. controls (398,41.13) and (400.13,39) .. (402.75,39) .. controls (405.37,39) and (407.5,41.13) .. (407.5,43.75) .. controls (407.5,46.37) and (405.37,48.5) .. (402.75,48.5) .. controls (400.13,48.5) and (398,46.37) .. (398,43.75) -- cycle ;
\draw  [fill={rgb, 255:red, 0; green, 0; blue, 0 }  ,fill opacity=1 ] (398,117.75) .. controls (398,115.13) and (400.13,113) .. (402.75,113) .. controls (405.37,113) and (407.5,115.13) .. (407.5,117.75) .. controls (407.5,120.37) and (405.37,122.5) .. (402.75,122.5) .. controls (400.13,122.5) and (398,120.37) .. (398,117.75) -- cycle ;
\draw  [fill={rgb, 255:red, 0; green, 0; blue, 0 }  ,fill opacity=1 ] (398.38,91.79) .. controls (398.38,89.16) and (400.51,87.04) .. (403.13,87.04) .. controls (405.76,87.04) and (407.88,89.16) .. (407.88,91.79) .. controls (407.88,94.41) and (405.76,96.54) .. (403.13,96.54) .. controls (400.51,96.54) and (398.38,94.41) .. (398.38,91.79) -- cycle ;
\draw  [fill={rgb, 255:red, 0; green, 0; blue, 0 }  ,fill opacity=1 ] (398,66.75) .. controls (398,64.13) and (400.13,62) .. (402.75,62) .. controls (405.37,62) and (407.5,64.13) .. (407.5,66.75) .. controls (407.5,69.37) and (405.37,71.5) .. (402.75,71.5) .. controls (400.13,71.5) and (398,69.37) .. (398,66.75) -- cycle ;
\draw  [fill={rgb, 255:red, 0; green, 0; blue, 0 }  ,fill opacity=1 ] (398,142.75) .. controls (398,140.13) and (400.13,138) .. (402.75,138) .. controls (405.37,138) and (407.5,140.13) .. (407.5,142.75) .. controls (407.5,145.37) and (405.37,147.5) .. (402.75,147.5) .. controls (400.13,147.5) and (398,145.37) .. (398,142.75) -- cycle ;
\draw  [color={rgb, 255:red, 208; green, 2; blue, 27 }  ,draw opacity=1 ][fill={rgb, 255:red, 208; green, 2; blue, 27 }  ,fill opacity=0.3 ][line width=1.5]  (254.1,104.49) .. controls (262.66,104.52) and (269.55,116.76) .. (269.49,131.83) .. controls (269.43,146.9) and (262.44,159.09) .. (253.89,159.05) .. controls (245.33,159.02) and (238.44,146.78) .. (238.5,131.71) .. controls (238.56,116.64) and (245.55,104.45) .. (254.1,104.49) -- cycle ;
\draw  [color={rgb, 255:red, 208; green, 2; blue, 27 }  ,draw opacity=1 ][fill={rgb, 255:red, 208; green, 2; blue, 27 }  ,fill opacity=0.3 ][line width=1.5]  (403.1,103.49) .. controls (411.66,103.52) and (418.55,115.76) .. (418.49,130.83) .. controls (418.43,145.9) and (411.44,158.09) .. (402.89,158.05) .. controls (394.33,158.02) and (387.44,145.78) .. (387.5,130.71) .. controls (387.56,115.64) and (394.55,103.45) .. (403.1,103.49) -- cycle ;
\draw [color={rgb, 255:red, 208; green, 2; blue, 27 }  ,draw opacity=1 ][line width=1.5]  [dash pattern={on 5.63pt off 4.5pt}]  (253.89,159.05) -- (402.89,158.05) ;
\draw [draw opacity=0][fill={rgb, 255:red, 208; green, 2; blue, 27 }  ,fill opacity=0.3 ][line width=3] [line join = round][line cap = round]   (252.5,104.5) .. controls (284.5,104.5) and (316.54,101.9) .. (348.5,103.5) .. controls (355.5,103.85) and (362.5,102.24) .. (369.5,102.5) .. controls (382.5,103.5) and (391.15,102.4) .. (403.1,103.49) .. controls (406.83,103.83) and (413.76,112.01) .. (415.5,115.5) .. controls (419.5,127.5) and (417.37,136.16) .. (414.5,150.5) .. controls (413.94,153.32) and (399.83,159.56) .. (396.5,159.5) .. controls (377.5,159.17) and (358.5,159.17) .. (339.5,159.5) .. controls (331.05,159.65) and (308.25,159.28) .. (299.5,159.5) .. controls (278.44,160.04) and (263.5,160.5) .. (248.5,156.5) .. controls (237.5,141.5) and (241.2,126.42) .. (241.5,120.5) .. controls (241.55,119.55) and (239.5,122.5) .. (241.5,114.5) .. controls (245.5,109.5) and (250.28,105.5) .. (253.5,105.5) ;

\draw (245,190.4) node [anchor=north west][inner sep=0.75pt]    {$C_{1}$};
\draw (393,189.4) node [anchor=north west][inner sep=0.75pt]    {$C_{2}$};
\draw (204,124.4) node [anchor=north west][inner sep=0.75pt]  [color={rgb, 255:red, 74; green, 144; blue, 226 }  ,opacity=1 ]  {$D_{1}$};
\draw (440,121.4) node [anchor=north west][inner sep=0.75pt]  [color={rgb, 255:red, 74; green, 144; blue, 226 }  ,opacity=1 ]  {$D_{1}$};

\end{tikzpicture}
        \caption{Case 2: Towards contradiction, suppose there are two $k$-colorings $C = (C_1, C_2, \dots, C_k)$ and $D = (D_1, D_2, \dots, D_k)$ of a good core. Suppose $|C_1 \cap D_1| \in [0.99 s/k^2, 0.8 s/k)$ and $|C_2 \cap D_1| \geq 0.1 s/k^2$. There can be no edges between $C_1 \cap D_1$ and $C_2 \cap D_1$, since these are subsets of the same color class $D_1$ in the coloring $D$. However, by \Cref{regularity-from-codegree} (from \cite{DBLP:journals/jal/AlonDLRY94}), each pair of large subsets in different independent sets (here, $C_1$ and $C_2$) in a good core must have many edges between them. This is a contradiction. In the figure, the red region represents where edges are forbidden.}
        \label{fig:good-core-case-2}
    \end{figure}

    \noindent\textit{Case 2.}  Next, we consider the case where $|C_1 \cap D_1|$ is not only lower-bounded but also upper-bounded. In this case, we will identify that there is another color class $C_2$ such that $|C_2 \cap D_1|$ is large. Recall that, by \Cref{def:goodCore} of a good core, we have verified that each vertex's degree is in some range and each pair of vertices has a codegree in some range. By \Cref{regularity-from-codegree} (from \cite{DBLP:journals/jal/AlonDLRY94}), this implies regularity of $C$ (the good coloring of the good core), which implies that the bipartite graph between any two large enough subsets of different color classes of $C$ must be sufficiently dense.

    We now state all this more formally. Suppose that $|C_1 \cap D_1| \in [0.99 s/k^2, 0.8 s/k)$. By the pigeonhole principle, there is a color class (WLOG $C_2$) such that $|C_2 \cap D_1| \geq (|D_1| - |C_1 \cap D_1|)/k \geq (0.99 s/k - 0.8 s/k)/k > 0.1 s/k^2$. There are no edges between $C_1 \cap D_1$ and $C_2 \cap D_1$. These are two sets in different partition classes, each of size at least $0.1 s/k^2$, that must have \textit{zero} edges between them. Therefore we have both $|C_1 \cap D_1|, |C_2 \cap D_1| \geq 0.1 s/k^2$ and $e(C_1 \cap D_1, C_2 \cap D_1) = 0$. Note that each of these sets has size at least $0.1 s/k^2$, which is at least an $\varepsilon$-fraction of its color class (of size $\approx s/k$), since $\frac{0.1 s/k^2}{s/k} = \frac{1}{10k} = \varepsilon$. Hence they are large enough for $\varepsilon$-regularity to apply.
    
    However, this violates the $\varepsilon$-regularity property, for $\varepsilon = 1/(10k)$: The approximate degree and codegree properties of the good core from \Cref{def:goodCore} together imply $\varepsilon$-regularity (by \Cref{regularity-from-codegree} from \cite{DBLP:journals/jal/AlonDLRY94}), which says that any two reasonably sized subsets behave as if they had the expected edge density $\bar{d}$. To apply \Cref{regularity-from-codegree} with $\varepsilon = 1/(10k)$ we need $2(k/s)^{1/4} < \varepsilon < 1/16$, i.e., $s > (20 k^{5/4})^4 = 20^4 k^5$, which holds since $s = |H| \geq c_0 k^5$ by \Cref{def:goodCore}. 
    
    More formally, by the definition of $\varepsilon$-regular and since $|C_1 \cap D_1|, |C_2 \cap D_1| \geq 0.1 s/k^2$, for $$\bar{d} \geq \frac{(1 - \varepsilon^4)}{2(1 + \frac{1}{100k})}\frac{(s/k)^2}{(s/k)^2} > 0.45,$$ the edge density between $C_1 \cap D_1$ and $C_2 \cap D_1$ must be at least $\bar{d}-\varepsilon>0$.  
    Thus,  we must have 
    $$0 = \frac{e(C_1 \cap D_1, C_2 \cap D_1)}{|C_1 \cap D_1| |C_2 \cap D_1|} \geq \bar{d} - \varepsilon > 0.$$
    That is, there must be zero edges between $C_1 \cap D_1$ and $C_2 \cap D_1$ since they are each subsets of the same color class $D_1$ in $D$. But because $C_1 \cap D_1 \subseteq C_1$ and $C_2 \cap D_1 \subseteq C_2$ are subsets of different color classes of $C$, it must be the case that the edge density between these subsets is at least $\bar{d} - \varepsilon > 0$. Contradiction. 
    
    \medskip\noindent Thus it must be the case that the coloring $D$ equals the coloring $C$. Therefore, when a subgraph is a good core, it is uniquely colorable.
\end{proof}

\subsection{Structural properties for propagating the coloring}\label{sec:structural-propagating}

In this section, we prove \Cref{thm:good_property_probability}, which states that structural properties that allow us to successfully propagate the unique coloring of a good core to most of the remaining graph hold with very high probability.

We bound the probability of a graph having many good cores and being linked as follows.
\begin{theorem}\label{thm:good_property_probability}
    Assume $k\le n^{1/36-\delta}$. There is a constant $c_\sigma>0$ such that
    for every $B\in \N$ and $\sigma\ge c_\sigma\log k$, with probability at least 
    \[1-\exp(-\Omega(n\sigma/k))-\exp(-\Omega(n(B+1)/k))-2k^{-2n}\]
    $G\sim \cUnk$ is $(\sigma,B)$-linked (\Cref{def:linked}) and has many good cores (\Cref{def:manyGoodCores}).
\end{theorem}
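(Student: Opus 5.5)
The plan is to prove the statement first in the planted model $\cBnA$ for a fixed approximately balanced partition $A$, and then transfer it to $\cUnk$ via the Dyer--Frieze style reduction.

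\textbf{Transfer.} For the transfer, I would bound $\Pr_{G\sim\cUnk}[G\text{ fails}]$ by the sum of two terms. The first is the probability that $G$ has no approximately balanced coloring. The second is a sum, over approximately balanced $A$, of the probability that $G$ admits coloring $A$ and fails.

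For the first term, count pairs (partition, graph) with an imbalanced partition. The number of graphs compatible with $A$ is $2^{\sum_{i<j}a_ia_j}$, which drops by a factor of about $2^{-\Omega(n^2\eta^2/k^2)}$ when some class deviates by $\eta n/k$. With $\eta=1/(1000k)$ this is $2^{-\Omega(n^2/k^4)}$, far below $k^{-2n}$ when $k\le n^{1/36}$. Then compare with the total count of $k$-colorable graphs, which is at least $2^{(1-1/k)n^2/2}$.

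For the second term, $G\sim\cUnk$ conditioned on being compatible with $A$ is exactly $\cBnA$. Theorem~\ref{thm:expColorBounded} then lets us pay only a constant-factor overcount when summing over colorings. So it suffices to show $\Pr_{\cBnA}[\text{fail}]$ is at most the claimed bound, up to constants.

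\textbf{Property~\ref{itm:core-inferred} of linkedness in $\cBnA$.} Fix $D_i\subseteq A_i$ with $|D_i|=d\in[\sigma,O(k\log k)]$. For $v\in A_i$ and $j\ne i$, the degree $|\Gamma(v)\cap D_j|$ is $\mathrm{Bin}(d,1/2)$. Hence $\Pr[|\Gamma(v)\cap D_j|<0.01d]\le 2^{-cd}$, and a union bound over $j$ gives failure probability $k2^{-cd}\le 2^{-c'd}$ once $\sigma\ge c_\sigma\log k$.

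Vertices of $A_i\setminus D$ are independent given $D$. So the probability that more than $1\%$ of $A_i$ fail is at most $\binom{n/k}{n/(100k)}2^{-c'd\,n/(100k)}=2^{-\Omega(nd/k)}$.

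Union over the choices of $D$ costs $\binom{n}{d}^k\le 2^{dk\log n}$. This is dominated because $nd/k\gg dk\log n$ when $k\le n^{1/3}$. Summing over $d\ge\sigma$ gives $\exp(-\Omega(n\sigma/k))$.

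\textbf{Property~\ref{itm:s-inferred} of linkedness in $\cBnA$.} I would reduce this to a statement independent of $S$. If $v\in A_i$ fails to be strongly $S$-inferred, then some $j\ne i$ has $|\Gamma(v)\cap S_j|<0.01|S_j|$. That forces $|\Gamma(v)\cap A_j|\le 0.01|A_j|+0.1|A_j|<0.2|A_j|$.

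So it suffices that at most $B$ vertices have degree below $0.2|A_j|$ into some $A_j$. Each vertex does so with probability $k2^{-\Omega(n/k)}$. For $B+1$ distinct vertices, the relevant edge sets are nearly disjoint: pairs inside the chosen set overlap in at most $(B+1)^2$ edges, which is negligible while $B\ll n/k$. For larger $B$ the bound is implied by monotonicity at $B\approx n/k$. The union over $\binom{n}{B+1}$ sets then gives $\exp(-\Omega(n(B+1)/k))$.

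\textbf{Many good cores (the main obstacle).} First, for a uniformly random $\nu$-subset $H$, I would show $\Pr[H\text{ not good}]\le 1/10$ in expectation over $G\sim\cBnA$.

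\begin{itemize}
\item Balance of $H\cap A_i$ follows from hypergeometric tails, using $\nu\gg k^3\log k$.
\item The degree and codegree conditions with $\varepsilon=1/(10k)$ need relative precision $\varepsilon^3\sim k^{-3}$ on quantities of size $\nu/(4k)$. Chernoff plus a union over $\nu^2k$ pairs requires $\nu/k\cdot k^{-6}\gg\log\nu$, which is exactly $\nu=\Theta(k^9\log k)$.
\item The coloring Kučera's algorithm finds on $H$ is $A|_H$ because the core is uniquely colorable by Lemma~\ref{lem:core-unique-color}.
\item The runtime bound is Theorem~\ref{thm:kuvcera} plus Markov.
\end{itemize}

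Second, I would concentrate $f(G)=$ fraction of good $\nu$-subsets using McDiarmid (Theorem~\ref{thm:mcdiarmid}) with the vertex exposure martingale. Changing the edges at one vertex alters only subsets containing it, a $\nu/n$ fraction, so $c_v=\nu/n$. The lower tail at deviation $0.2$ is then $\exp(-\Omega(n/\nu^2))=\exp(-\Omega(n/k^{18}\log^2k))$.

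This must beat $k^{-2n}$ after transfer, and here I expect the real difficulty. The plain vertex-exposure bound is not strong enough. I would instead apply McDiarmid over a finer decomposition, or first condition on the $\exp(-\Omega(n\log k))$-likely event that every vertex set's degrees behave, so that the martingale has bounded differences in a typical region. I expect this balance to be what forces $k\le n^{1/36-\delta}$, giving the final $2k^{-2n}$ term.
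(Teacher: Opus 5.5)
Your transfer step, both linkedness items, and the bound on the expected fraction of good $\nu$-sets follow the paper's route. Your transfer, which uses the same $1/(1000k)$ threshold for both the balanced and imbalanced sums, is if anything tidier.

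The genuine gap is the one you flag yourself: the tail bound for having many good cores. Vertex exposure gives $c_v=\nu/n$ and hence only $\exp(-\Omega(n/\nu^2))$. This is always far weaker than $k^{-2n}=\exp(-2n\ln k)$. The remedies you float (an unspecified ``finer decomposition'', or conditioning on a typical region) are not carried out. So your proposal does not establish the $2k^{-2n}$ term, and this is exactly the step that needs the hypothesis $k\le n^{1/36-\delta}$.

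The missing idea is to apply McDiarmid's inequality (\Cref{thm:mcdiarmid}) directly to the independent edge indicators of $\cBnA$. Whether a $\nu$-set $H$ is a good core depends only on $G[H]$, including the run of \textsc{Kucera1995} on it. So flipping a single pair $\{u,v\}$ can change the status only of the $\binom{n-2}{\nu-2}$ sets containing both $u$ and $v$. There are at most $\binom{n}{2}$ variables, the deviation is $(\tfrac{9}{10}-\tfrac{2}{3})\binom{n}{\nu}$, and $\binom{n}{\nu}/\binom{n-2}{\nu-2}=\frac{n(n-1)}{\nu(\nu-1)}$. Hence the failure probability is
\[
\exp\left(-\Omega\left(\frac{\binom{n}{\nu}^2}{\binom{n}{2}\binom{n-2}{\nu-2}^2}\right)\right)=\exp\left(-\Omega\left(\frac{n^2}{\nu^4}\right)\right).
\]
This improves on vertex exposure precisely because $\nu^2\ll n$. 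With $\nu=\Theta(k^9\log k)$ and $k\le n^{1/36-\delta}$, we get $n^2/\nu^4\ge n^{1+36\delta}/\polylog(k)\gg n\log k$, so the bound is at most $k^{-2n}$. This is exactly where $36=4\cdot 9$ comes from, and no conditioning is needed.

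Two smaller corrections:
\begin{itemize}
\item Summing $\Pr[E_A(G)]$ over labeled balanced $A$ costs $k!\,\E[NC(G)]\le 2k!$, not a constant, since $NC$ counts colorings only up to permuting colors. This is harmless: $k\log k=o(n\sigma/k)$ and $k!\,e^{-\Omega(n^2/\nu^4)}\le k^{-2n}$.
\item The $k^9\log k$ requirement comes from the degree condition at precision $\varepsilon^4$, which needs $\varepsilon^8\nu/k\gg\log(\nu k)$. Your codegree computation only forces $\nu\gg k^7\log k$.
\end{itemize}
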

To analyze both algorithms, we define two levels of how well-linked a graph is:
\begin{definition}\label{def:awesomeokay}
    We say $G$ is \emphdef{awesome} if it has many good cores and is $(c_\sigma\log k,0)$-linked. We say $G$ is \emphdef{okay} if it has many good cores and is $(ck\log k,ck\log k)$-linked, where $c>0$ is chosen such that the bound in~\Cref{thm:good_property_probability} is at least $1-3k^{-2n}$.
\end{definition}

For readability, we name the parameters of these two definitions:
$$ \sigma_A \defeq c_\sigma\log k, \qquad
    \sigma_O \defeq ck\log k, \qquad
    B_O \defeq ck\log k, $$
so that awesome corresponds to $(\sigma_A,0)$-linked with many good cores, and okay corresponds to
$(\sigma_O,B_O)$-linked with many good cores.

To prove~\Cref{thm:good_property_probability}, we first work with the alternative model $\cBnA$ of \Cref{def:mod1}.
We then transfer from this distribution to the uniform distribution over $k$-colorable graphs with a proof inspired by (but simpler than) that of Dyer-Frieze~\cite{dyer1989solution}.

First, for an approximately balanced $A$ (as in \Cref{def:apx-balanced}), there are many good cores in $G\sim \cBnA$ with extremely high probability. Recall from \Cref{def:goodCore} that a subgraph $S$ is a \textit{good core} if it has color classes that are weakly balanced, the algorithm of~\Cref{thm:kuvcera} (from \cite{kuvcera1995expected}) produces a proper $k$-coloring in expected time $O(|S|^2/k)$, and a sufficient condition for unique colorability (approximate regularity of degrees and codegrees into color classes) holds.

\begin{lemma}\label{lem:num-good-subgraphs}
    Let $A$ be approximately balanced. Over $G \sim \cBnA$, the expected number of size-$\nu$ subgraphs that are good cores is at least $\frac{9}{10} \binom{n}{\nu}$, where 
    $\nu = \Omega( k^9 \log(k))$.
\end{lemma}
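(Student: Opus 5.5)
By linearity of expectation, it suffices to show that a uniformly random $\nu$-subset $H \subseteq [n]$, with $G\sim\cBnA$, is a good core with probability at least $9/10$. Sampling $H$ first and then the edges of $G[H]$, the subgraph $G[H]$ is distributed as $\cB_{\nu,k,A|_H}$. This uses that the edges are independent of the choice of $H$ and that $\cBnA$ restricted to $H$ is again a planted model. So we bound, over the choice of $H$ and then over the edges, the probability that each bullet of \Cref{def:goodCore} fails. We then take a union bound, aiming for total failure probability at most $1/10$.

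\textbf{Step 1 (balance of $A|_H$).} For each $i$, $|H\cap A_i|$ is hypergeometric with mean $\nu a_i/n \in (1\pm\frac{1}{1000k})\nu/k$. By Chernoff--Hoeffding for the hypergeometric distribution, it lies within $(1\pm\frac{1}{200k})\nu/k$ except with probability
\[
2\exp\bigl(-\Omega(\nu/k \cdot 1/k^2)\bigr) = \exp\bigl(-\Omega(\nu/k^3)\bigr).
\]
A union bound over $k$ classes keeps this tiny since $\nu=\Omega(k^9\log k)$. Hence $A|_H$ is weakly balanced with probability $1-o(1)$. Also $|H|=\nu\ge c_0k^5$ trivially.

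\textbf{Step 2 (degrees and codegrees with respect to the planted partition $H_i = H\cap A_i$).} Condition on a weakly balanced $A|_H$. For $u\in H_i$, its degree into $H_j$ is $\mathrm{Bin}(|H_j|,1/2)$. Taking $\varepsilon=1/(10k)$, we need relative accuracy $\varepsilon^4 \approx k^{-4}/10^4$ around $|H|/(2k)$. This also has to absorb the $\frac{1}{100k}$-type deviation of $|H_j|$ from $\nu/k$. I would handle that either by sharpening Step~1 to accuracy $\varepsilon^4/4$, which costs $\exp(-\Omega(\nu\varepsilon^8/k))$ and is why $\nu$ must be about $k^9$, or by reading the definition as measured against $|H_j|/2$. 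Either way, Chernoff gives failure probability
\[
\exp\bigl(-\Omega(\varepsilon^8\nu/k)\bigr) = \exp\bigl(-\Omega(\nu/k^9)\bigr).
\]
Similarly, the codegree of a pair $u,v\in H_i$ into $H_j$ is $\mathrm{Bin}(|H_j|,1/4)$, which needs accuracy $\varepsilon^3$ and fails with probability $\exp(-\Omega(\nu/k^7))$. There are at most $\nu^2 k$ (vertex or pair, class) events, so the union bound yields
\[
\nu^2k\,\exp\bigl(-\Omega(\nu/k^9)\bigr) \le 1/100
\]
once $\nu = 2c_\nu k^9\log k$ with $c_\nu$ large. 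The $\log k$ factor, together with a $\log\nu = O(\log k)$ term, is exactly what beats the polynomial union-bound factor. This step is the main obstacle: it dictates the $k^9\log k$ scaling, and the bookkeeping between the planted class sizes and the $(1\pm\varepsilon^4)|H|/(2k)$ window must be done carefully.

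\textbf{Step 3 (the coloring found is the planted one).} On the event of Steps 1--2, $G[H]$ together with the planted coloring satisfies every condition of \Cref{def:goodCore} except the \textsc{Kucera1995} bullet. By the argument of \Cref{lem:core-unique-color}, which uses only the degree, codegree and balance conditions, this planted coloring is the unique $k$-coloring of $G[H]$. Hence whatever proper coloring \textsc{Kucera1995} outputs equals the planted one up to relabeling and inherits all the properties. For the runtime bullet, apply \Cref{thm:kuvcera} to $\cB_{\nu,k,A|_H}$. This is valid because $A|_H$ is weakly balanced and $k=o(\sqrt{\nu/\log^2\nu})$ holds since $\nu\gg k^2\log^2 k$. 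The expected runtime is therefore at most $C\nu^2/k$. By Markov, the runtime exceeds $100C\nu^2/k$ with probability at most $1/100$, and we interpret ``time $O(|H|^2/k)$'' with this constant. Summing the failure probabilities gives at most $1/10$, so the expected number of good cores is at least $\frac{9}{10}\binom{n}{\nu}$.
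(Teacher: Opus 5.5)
Your outline is the paper's proof. It uses linearity of expectation over a uniformly random $\nu$-set $H$, hypergeometric concentration for the class sizes $|H\cap A_i|$, and Markov's inequality on the expected running time of \textsc{Kucera1995} on $\cB_{\nu,k,A|_H}$. It then applies Chernoff plus a union bound for degrees and codegrees, and a final union bound over the bullets. Your Step~3 makes explicit two steps the paper passes over silently. One is using \Cref{lem:core-unique-color} to identify the coloring \textsc{Kucera1995} outputs with the planted partition $H\cap A_i$, which is the partition you actually checked. The other is verifying that $k=o(\sqrt{\nu/\log^2\nu})$.

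\textbf{The first of your two fixes in Step~2 would fail.} No sharpening of Step~1 can place $|H_j|$ in $(1\pm\varepsilon^4/4)\nu/k$. The quantity $|H_j|$ concentrates around $\nu a_j/n$, and approximate balance allows $a_j$ to differ from $n/k$ by a relative $\frac{1}{1000k}$. That is vastly more than $\varepsilon^4=10^{-4}k^{-4}$. Concentration controls fluctuations about the mean, not the offset of the mean from $\nu/k$.

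In fact, with the window read literally as $(1\pm\varepsilon^4)|H|/(2k)$, the lemma fails for an $A$ with $a_1=(1+\frac{1}{1000k})n/k$. With high probability $|H_1|\ge(1+\frac{1}{2000k})\nu/k$. The planted coloring is then the unique coloring of $G[H]$, by your own Step~3 argument. Every $u\in H_2$ has degree into $H_1$ concentrated near $|H_1|/2$, which lies outside the window. So the expected number of good cores is $o(\binom{n}{\nu})$.

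Your second reading is therefore forced, not optional: degrees in $(1\pm\varepsilon^4)|H_j|/2$ and codegrees in $(1\pm\varepsilon^3)|H_j|/4$. Alternatively, you could strengthen the balance hypothesis on $A$ to relative error $O(\varepsilon^4)$. Under either, your Chernoff bounds $\exp(-\Omega(\nu/k^9))$ and $\exp(-\Omega(\nu/k^7))$ and the union bound go through as you wrote them. The paper's proof hides the same issue in the phrase ``the expected degree into another color class is approximately $\nu/(2k)$.''

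If you adopt the relative reading, two adjustments follow. \Cref{regularity-from-codegree} is stated with the normalizations $s/(2k)$ and $s/(4k)$, so Step~3 needs the degree/codegree-to-regularity argument normalized by the actual class sizes $|C_j|$; the standard Cauchy--Schwarz proof gives this. The test in \textsc{Build-Good-Core} should also be changed to match.
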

\begin{proof}
    Let $H$ be the vertex set of a uniformly random subgraph of size $\nu$, and let $H_i = H \cap A_i$ and $h_i = |H_i|$.

    We first prove that $h_i \in (1 \pm \frac{1}{100k}) \nu/k$ with probability at least $1 - 0.025$. The color-class counts $(h_1,\ldots,h_k)$ are distributed as the multivariate hypergeometric distribution with $\nu$ trials and counts $(a_1,\ldots,a_k)$. Recall that $a_i\in (1 \pm \frac{1}{1000k}) n / k$. By standard concentration results (see e.g.,~\cite{CHVATAL}), with probability at least $1 - 0.025$ we have that $h_i\in (1 \pm \frac{1}{100k}) \nu/k$, when $\nu \geq c k^3 \log k$ for a constant $c$. We say that $H$ is \emphdef{weakly balanced} if this holds (and note that it is equivalent to condition 3 of $H$ being a good core).

    Second, conditioning on the event that $H$ induces some weakly balanced partition $(H_1,\ldots,H_k)$, the conditional distribution of the induced subgraph is $\cB_{\nu,k,H}$ (\Cref{def:mod1}). Thus, the algorithm of~\Cref{thm:kuvcera} produces a coloring in expected time $O(|H|^2/k)$, and by Markov's inequality produces a coloring in $O(|H|^2/k)$ time with probability at least $1 - 0.025$.
 
    Third, we prove that each vertex has degree in $(1 \pm \varepsilon^4) \nu/(2k)$ into each other color class, for $\varepsilon = O(1/k)$, with probability at least $1 - 0.025$. This follows via Chernoff bounds for $\nu \geq c k^2 \log(k)/\varepsilon^8$ for a constant $c$ and $\varepsilon = O(1/k)$, together with a union bound. This is because the expected degree into another color class is approximately $\nu/(2k) \geq k \log (k)$, we are considering an error bound of $\varepsilon^4$, and there are $\nu k$ pairs of vertices and other color classes to consider.

    Fourth, we prove that each pair of vertices in each $H_i$ has codegree in $(1 \pm \varepsilon^3)\nu/(4k)$ into each other color class, for $\varepsilon = O(1/k)$, with probability at least $1 - 0.025$. The proof is the same as for degrees, since the expected codegree is approximately $\nu/(4k) \geq k^2 \log k$, and there are at most $\nu \cdot \nu/k \cdot k$ vertex-color class pairs to consider.

    For a subgraph $H$, let $E_1$ be the event that the subgraph is balanced, $E_2$ be the event that the algorithm of~\Cref{thm:kuvcera} produces a proper $k$-coloring in time $O(|H|^2/k)$, $E_3$ be the event that degrees are almost regular, and $E_4$ be the event that codegrees are almost regular. Each of the four bounds above is on the \emph{marginal} probability of the corresponding event, and each was shown to fail with probability at most $0.025$. By a union bound,
    $$\Pr_H\left(\neg(E_1 \land E_2 \land E_3 \land E_4)\right) \leq \sum_{i=1}^4 \Pr_H(\neg E_i) \leq 4 \cdot 0.025 = 0.1,$$
    so $\Pr_H(H \text{ is a good core}) \geq \Pr_H(E_1 \land E_2 \land E_3 \land E_4) \geq 9/10$. By linearity of expectation, the expected number of subgraphs of size $\nu$ that are good cores is at least $\frac{9}{10} \binom{n}{\nu}$.
\end{proof}

\begin{lemma}\label{lem:manyGoodCores}
    Assume $k\le n^{1/36-\delta}$. 
    For every approximately balanced $A$, $G \sim \cBnA$ has many good cores with probability at least $1 - k^{-2n}$. 
\end{lemma}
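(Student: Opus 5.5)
Let $f(G)$ be the fraction of size-$\nu$ vertex subsets of $G$ that are good cores. By \Cref{lem:num-good-subgraphs}, $\E_{G\sim\cBnA}[f(G)]\ge 9/10$, and $G$ has many good cores exactly when $f(G)\ge 2/3$. So it suffices to show that $f$ is concentrated, namely
\[
\Pr[f(G)\le \E f-0.2]\le k^{-2n},
\]
and we will get this from McDiarmid's inequality (\Cref{thm:mcdiarmid}). The edge-exposure martingale has $\Theta(n^2)$ coordinates and would be too weak, so we use \emph{vertex exposure}. For each vertex $v$, let $X_v$ be the vector of edge indicators between $v$ and the vertices of smaller index in other color classes of $A$. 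These $X_v$ are independent under $\cBnA$, and $G$ is a function of $(X_1,\ldots,X_n)$.

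The bounded-differences constant is computed as follows. Changing $X_v$ only alters edges incident to $v$, so it can only change the good-core status of subsets containing $v$. Whether a subset $H$ is a good core depends only on the induced subgraph $G[H]$: the running time of the deterministic algorithm \textsc{Kucera1995} on $G[H]$, the color class sizes, and the degrees and codegrees are all functions of $G[H]$. The fraction of size-$\nu$ subsets containing $v$ is $\nu/n$, so $c_v=\nu/n$. McDiarmid then gives
\[
\Pr[f\le \E f-0.2]\le \exp\!\Big(-\frac{2(0.2)^2}{n(\nu/n)^2}\Big)=\exp\!\big(-\Omega(n/\nu^2)\big).
\]

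It remains to compare this with $k^{-2n}=\exp(-2n\log k)$. The bound above only gives $\exp(-\Omega(n/\nu^2))$ with $\nu=\Theta(k^9\log k)$, which is far weaker than $k^{-2n}$, so a direct application fails. This is the main obstacle, and the hypothesis $k\le n^{1/36-\delta}$ (where $36=4\cdot 9$) suggests the intended fix. I would first try to amplify by working with disjoint blocks rather than arbitrary subsets.

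Concretely, fix a uniformly random partition of $[n]$ into $m=n/\nu$ disjoint blocks of size $\nu$. The good-core events of distinct blocks depend on disjoint sets of edges, so they are independent under $\cBnA$. Each block is a good core with probability at least $9/10$: by symmetry every fixed $\nu$-set has the same distribution up to its color profile, and averaging over a random block gives the bound of \Cref{lem:num-good-subgraphs}. A Chernoff bound then says at least a $4/5$ fraction of the blocks are good except with probability $\exp(-\Omega(n/\nu))$.

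This is still not $k^{-2n}$, so independence alone does not suffice either. The remaining slack must come from a second concentration step applied to the random subset, in which the exceptional probability is pushed down to $\exp(-\Omega(n^{1-4\cdot 9c}))$ for some slack. The plan at this step is to use a bounded-differences argument in which each vertex influences only a $\nu/n$ fraction of subsets, combined with a union bound over the choice of partition. With $k\le n^{1/36-\delta}$ we have $\nu^4\le n^{1-36\delta'}$, so $\exp(-n^{\Omega(\delta)}\cdot n\log k/\ldots)$-type bounds dominate $k^{-2n}$ once the exponent in $n$ exceeds $n\log k$ by the $\nu^4$ budget. I expect this quantitative step — getting the tail from $\exp(-n/\mathrm{poly}(\nu))$ all the way down to $k^{-2n}$ — to be the hardest part. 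I would first try to verify it by bounding the variance proxy in McDiarmid more cleverly, using that $f$ averages nearly independent indicators, so that the effective sum of squared differences is about $\nu^4/n^{2}\cdot n$ rather than $\nu^2/n$.
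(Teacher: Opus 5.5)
Your proposal has a genuine gap: it never proves the $k^{-2n}$ tail. You correctly show that vertex exposure gives only $\exp(-\Omega(n/\nu^2))$ and that disjoint blocks give only $\exp(-\Omega(n/\nu))$. The final step, a ``second concentration step'' with a union bound over partitions, is left as an unexecuted plan. The block route also bounds the wrong quantity. Knowing that $4/5$ of the blocks of one partition are good cores says nothing about the fraction of \emph{all} $\nu$-subsets that are good, and averaging over partitions to recover that fraction discards the concentration you obtained.

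The missing idea is the approach you dismissed at the start. Your claim that the edge-exposure version ``has $\Theta(n^2)$ coordinates and would be too weak'' is false, because McDiarmid's bound depends on $\sum_i c_i^2$, not on the number of coordinates. Take the coordinates to be the $\binom{n}{2}$ edge indicators, which are independent under $\cBnA$. Flipping the pair $\{u,v\}$ can only change the status of $\nu$-subsets containing both $u$ and $v$. So $c_{\{u,v\}}=\binom{n-2}{\nu-2}$ in absolute count, which is a $\nu(\nu-1)/(n(n-1))\approx \nu^2/n^2$ fraction of all subsets. Then $\sum c_i^2\approx \binom{n}{2}\,\nu^4/n^4\approx \nu^4/(2n^2)$, roughly $2n/\nu^2$ times smaller than your vertex-exposure value $\nu^2/n$. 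With deviation $9/10-2/3=7/30$ this gives
\[
\Pr\Big[G \text{ has fewer than } \tfrac{2}{3}\tbinom{n}{\nu} \text{ good cores}\Big]\le \exp\Big(-\frac{2(7/30)^2\binom{n}{\nu}^2}{\binom{n}{2}\binom{n-2}{\nu-2}^2}\Big)=\exp\big(-\Omega(n^2/\nu^4)\big).
\]
Moreover, $n^2/\nu^4\ge 2n\ln k$ once $n\gtrsim \nu^4\log k=\Theta(k^{36}\log^5 k)$, which $k\le n^{1/36-\delta}$ guarantees for large $n$. This is exactly where $36=4\cdot 9$ comes from: an edge's influence is quadratic in $\nu/n$, and McDiarmid squares it. Your closing heuristic of a variance proxy of order $\nu^4/n^2\cdot n=\nu^4/n$ is miscomputed and would be worse than vertex exposure; the correct order is $\nu^4/n^2$. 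This single edge-exposure application of McDiarmid is the paper's entire proof.
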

\begin{proof}
     We apply McDiarmid's inequality (\Cref{thm:mcdiarmid}) to provide a high-probability lower bound on the number of good cores of size $\nu$ in $G \sim \cBnA$. For each unordered pair $\{u, v\}$ of vertices, define the edge indicator $X_{\{u, v\}} = \mathbf{1}(\{u, v\} \in E(G))$, and let $P = \binom{V(G)}{2}$ be the set of all such pairs. Define $f: \{0,1\}^{P} \to \mathbb{N}$ to be the function that maps a tuple of edge indicators $(x_{\{u,v\}})_{\{u,v\} \in P}$ to the number of good cores of size $\nu$ in the corresponding graph. First, from \Cref{lem:num-good-subgraphs}, we know that the expected number of good cores of size $\nu$ in $G \sim \cBnA$ is at least $\frac{9}{10} \binom{n}{\nu}$. We now analyze bounded differences. Since each edge can affect whether a subset $H$ is a good core for up to $\binom{n-2}{\nu-2}$ subsets of size $\nu$, we apply McDiarmid's inequality with bounded differences $\binom{n-2}{\nu-2}$.

    McDiarmid's inequality tells us that the probability that the number of good cores of size $\nu$ in $G$ is at most $\frac{2}{3} \binom{n}{\nu}$ is upper bounded by:
    $$\exp\left(- \frac{7^2 \binom{n}{\nu}^2}{30^2 \binom{n}{2} \cdot \binom{n-2}{\nu - 2}^2}\right) \leq \exp\left(-C n^2 / \nu^4 \right) \leq k^{-2n},$$
    where $C > 0$ is some constant, and the final inequality uses $\nu =O( k^9\log k)$.
\end{proof}

Next, we show that every core expands to cover almost all vertices, with failure probability decaying with the number of exceptional vertices:

\begin{lemma}\label{lem:coreExpand}
    Assume $k\le n^{1/3-\delta}$. There is $c_\sigma>0$ so that for every approximately balanced $A$ and $B\in \N$ and $\sigma \ge c_\sigma\log k$, $G \sim \cBnA$ is $(\sigma,B)$-linked with probability at least 
    \[1 -\exp(-\Omega(n\sigma/k))-\exp(-\Omega(n(B+1)/k)).\]
\end{lemma}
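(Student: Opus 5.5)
We work directly in $\cBnA$ with the planted coloring $A$ as the witness coloring in \Cref{def:linked}, and prove the two items separately by union bounds over all relevant sets; the failure probabilities $\exp(-\Omega(n\sigma/k))$ and $\exp(-\Omega(n(B+1)/k))$ correspond to items~\ref{itm:core-inferred} and~\ref{itm:s-inferred} respectively.

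For item~\ref{itm:core-inferred}, it suffices to consider sets $D_i$ of size exactly $\sigma' \in [\sigma, O(k\log k)]$, or, since strong inference is a density condition, to take a minimal version. Instead we bound directly: fix $(D_1,\dots,D_k)$ with $|D_i|=d_i$. A vertex $v\in A_i\setminus D$ fails to be strongly inferred only if for some $j\ne i$ it has fewer than $0.01 d_j$ neighbors in $D_j$. For independent fair coin flips this has probability $e^{-\Omega(d_j)}\le e^{-\Omega(\sigma)}\le k^{-10}$ once $c_\sigma$ is large, so a union bound over $j$ gives failure probability $p\le k^{-9}$ per vertex. These events are independent across $v\in A_i\setminus D$, since they depend on disjoint edge sets. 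So the probability that more than $0.01|A_i| - |D_i|$ vertices of $A_i$ fail is at most $\binom{n/k}{0.005n/k}p^{0.005 n/k}\le \exp(-\Omega((n/k)\log(1/p)))$. To get $\sigma$ rather than $\log k$ in the exponent, use the sharper per-vertex bound $p\le k e^{-c\sigma}$, which gives $\exp(-\Omega(n\sigma/k))$.

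The number of choices of $(D_1,\dots,D_k)$ is at most $\prod_i\binom{n}{d_i}\le \exp(O(k\cdot k\log k\cdot\log n))$, which is $\exp(o(n/k))$ because $k\le n^{1/3-\delta}$. Summing over the sizes also costs only a polynomial factor. The union bound therefore preserves $\exp(-\Omega(n\sigma/k))$.

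For item~\ref{itm:s-inferred}, the sets $S_i$ range over a doubly exponential family, so we cannot union bound over them. Instead we reduce to a property of $G$ alone. A vertex $v\in A_i$ fails to be strongly $S$-inferred for some admissible $S$ only if there is $j\neq i$ with $|\Gamma(v)\cap A_j| < 0.01|S_j| + 0.1|A_j|$. It suffices that $v$ has at least $0.11|A_j|$ neighbors in every other $A_j$, since then it has at least $0.01|A_j|\ge 0.01|S_j|$ neighbors inside $S_j$. Call $v$ \emph{deficient} if $|\Gamma(v)\cap A_j|<0.11|A_j|$ for some $j\ne i$. By Chernoff, each vertex is deficient with probability $q\le k e^{-\Omega(n/k)}$, independently across vertices, because the relevant edges to the classes $A_j$ involve distinct pairs. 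A caveat is that edges between two vertices are shared; handle this by noting that at most $B+1$ vertices interact in only $O(B^2)$ edges, which changes the counts negligibly. Hence
\[\Pr[\text{more than } B \text{ deficient vertices}]\le \binom{n}{B+1}q^{B+1}\le \exp\bigl((B+1)(\log n+\log k-\Omega(n/k))\bigr)=\exp(-\Omega(n(B+1)/k)),\]
using $k\log n=o(n)$. Every non-deficient vertex is strongly $S$-inferred for all admissible $S$ simultaneously, which proves item~\ref{itm:s-inferred}.

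The main obstacle is the quantifier over all $D$ and all $S$. For $S$ it is removed by the deterministic reduction to deficient vertices. For $D$ it requires the counting check that the $\exp(\tilde O(k^2\log n))$ choices are dominated by $\exp(-\Omega(n\sigma/k))$. I would also double-check the independence claim in the $D$-step: vertices in different $A_i$ use disjoint edge sets toward $D$, so the events are independent.
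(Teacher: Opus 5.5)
Your proposal is correct and follows the paper's proof essentially step for step. Both use the planted coloring $A$ as the witness. For item~1, both bound a single vertex's failure probability by $e^{-\Omega(\sigma)}$, use independence across $A_i$, and then take a union bound over the $n^{O(k^2\log k)}$ choices of $D$; this union bound is where $k\le n^{1/3-\delta}$ is used. For item~2, both reduce to every vertex having constant neighbor density in each full class $A_j$ (the paper uses threshold $0.49$, you use $0.11$) and then take a union bound over $(B+1)$-sets $V_B$. Your fix for shared edges is the paper's device of deleting $V_B$ from each $A_j$. Like the paper's version, it implicitly requires $B$ to be small compared to $n/k$, which covers the values of $B$ used later.
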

\begin{proof}
    We prove that it is linked with respect to the planted coloring $A$. 
    Fix an arbitrary collection $D=(D_1,\ldots,D_k)$ where $|D_i|\in [\sigma,O(k\log k)]$ and a set of $B+1$ possible bad vertices $V_B=(v_1,\ldots,v_{B+1})$. The number of choices is at most
    \[
    \binom{(1+0.01)n/k}{O(k \log k)}^k\cdot \binom{n}{B+1} \le n^{O(B+k^2\log k)},
    \]
    so it suffices to show that~\Cref{itm:core-inferred} and~\Cref{itm:s-inferred} hold with the claimed bound fixing a particular choice of $D,V_B$, since $n/k=\omega(B\log n+k^2\log k\log n)$. 

    First, fix an arbitrary $v\in A_i$. The event that $v$ is not strongly $D$-inferred is equivalent to having fewer than $0.01|D_j|$ edges to $D_j$ for some $j$. This event has probability $k\exp(-\Omega(\sigma))=\exp(-\Omega(\sigma))$, where the latter step uses $\sigma\ge c_\sigma\log k$ for a sufficiently large constant $c_\sigma$. Moreover, these events are independent for every $v,v'\in A_i$ and the partition is approximately balanced, so we obtain that item (1) holds with probability $1-\exp(-\Omega(\sigma n/k))$.

    Next, note that for every $v\in A_i$ and $S\subseteq A_j$ with $|S|/|A_j|\ge 0.9$ we have that
    \[\frac{|\Gamma(v)\cap A_j|}{|A_j|} \ge 0.49 \implies \frac{|\Gamma(v)\cap S|}{|S|} > 0.3\]
    so to prove that $v$ is strongly inferred by every such collection of sets (i.e., item (2) holds for $v$) it suffices to prove it has neighbor density $0.49$ in every other color class. 
    
    Next, let $A_j'$ be $A_j$ with all elements of $V_B$ deleted. 
    For $v\in V_B$ we have that $| \Gamma(v)\cap A_j' |$ is distributed as $\textsc{Bin}(|A_j'|,1/2)$ and hence by tail bounds (and since $A$ is balanced) the intersection is of size at least $(0.495)|A_j|\ge (0.49)|A_j'|$ with probability at least $1-\exp(-\Omega(n/k))$. By a union bound, this holds for all $j\neq i$ with approximately this probability. Moreover, since over the space $V_B\times \{A_1',\ldots,A_k'\}$ the events that the vertices are bad are independent (since they depend on disjoint sets of possible edges), we obtain that all are bad with probability $\exp(-\Omega(n(B+1)/k))$ as claimed.
\end{proof}

\subsection{Transferring between models}\label{sec:transfer-models}

We now transfer these results from $\cBnA$ to $\cUnk$. We require an easy counting fact:
\begin{fact}[\cite{dyer1989solution}]\label{fact:easy_counting_fact}
    Let $a_1,\ldots,a_k\in \N$ satisfy $\sum_i a_i=n$. Then
    \begin{displaymath}
        \sum_{i<j} a_ia_j = \binom{k}{2}(n/k)^2 - \frac{1}{2}\sum_i(a_i-n/k)^2.
    \end{displaymath}
\end{fact}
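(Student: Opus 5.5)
The identity is pure algebra. Write $a_i = n/k + d_i$ with $\sum_i d_i = 0$, and expand $\sum_{i<j} a_i a_j$ using
\[
\sum_{i<j} a_i a_j = \tfrac12\Bigl(\bigl(\textstyle\sum_i a_i\bigr)^2 - \sum_i a_i^2\Bigr).
\]
Since $\sum_i a_i = n$, the first term is $n^2/2$, so everything comes down to computing $\sum_i a_i^2$.

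For that sum, expand $a_i^2 = (n/k)^2 + 2(n/k)d_i + d_i^2$ and sum over $i\in[k]$. This gives
\[
\sum_i a_i^2 = k(n/k)^2 + \tfrac{2n}{k}\sum_i d_i + \sum_i d_i^2 = \tfrac{n^2}{k} + \sum_i (a_i-n/k)^2,
\]
where the cross term drops out because $\sum_i d_i = 0$.

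Substituting back,
\[
\sum_{i<j} a_i a_j = \tfrac12\Bigl(n^2 - \tfrac{n^2}{k}\Bigr) - \tfrac12\sum_i (a_i-n/k)^2.
\]
The first term is $\tfrac{n^2(k-1)}{2k}$, and $\binom{k}{2}(n/k)^2 = \tfrac{k(k-1)}{2}\cdot\tfrac{n^2}{k^2} = \tfrac{n^2(k-1)}{2k}$, so the two agree and the stated identity follows. The only point that needs care is using $\sum_i d_i = 0$ to kill the cross term.
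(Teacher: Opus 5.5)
Your proof is correct: the identity $\sum_{i<j}a_ia_j=\tfrac12\bigl((\sum_i a_i)^2-\sum_i a_i^2\bigr)$, the expansion with $a_i=n/k+d_i$ where $\sum_i d_i=0$, and the check $\tfrac12(n^2-n^2/k)=\binom{k}{2}(n/k)^2$ are all right. The paper gives no proof of its own; it cites this as an easy counting fact from Dyer--Frieze, and your direct algebraic verification is the standard one.
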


We can then prove the result. The remaining step is to transfer between models, from $G \sim \cBnA$ to $G \sim \cUnk$.
\begin{proof}[Proof of~\Cref{thm:good_property_probability}]
    For $A:[n]\ra[k]$, we say that $G$ is colorable by $A$, written $E_A(G)$, if there are no conflicting edges under $A$. Let ``$A \text{ bal}$'' be shorthand for $A$ being approximately balanced.
    
    For brevity, let ``$G \text{ bad}$'' be the event that $G$ is not $(\sigma,B)$-linked or does not have many cores. We have
    \begin{align*}
        \Pr_{G\sim \cUnk}[G\text{ bad}] &\le \sum_S\Pr_{G\sim \cUnk}[G\text{ bad} ~|~ E_A(G)]\Pr_{G\sim \cUnk}[E_A(G)]\\ 
        &\le \sum_{A:A\text{ bal}}\Pr_{G\sim \cUnk}[G\text{ bad} ~|~ E_A(G)]\Pr_{G\sim \cUnk}[E_A(G)] + \sum_{A:A\text{ not bal}}\Pr_{G\sim \cUnk}[E_A(G)]
    \end{align*}
    We will write $\rho_{\text{bal}} \defeq \sum_{A:\,A\text{ bal}} \Pr_{G\sim\cUnk}[E_A(G)]$ and
$\rho_{\text{not bal}} \defeq \sum_{A:\,A\text{ not bal}} \Pr_{G\sim\cUnk}[E_A(G)]$.
    
    Next, for a fixed partition $A$, the distribution $\cUnk$ conditioned on $E_A(G)$ is exactly $\cBnA$. This is true since the event $E_A(G)$ holds iff every edge of $G$ lies between two distinct classes of $A$. Such a graph is $k$-colorable (with $A$ as a proper coloring) and lies in the support of $\cUnk$ with equal probability. So $\cUnk$ conditioned on $E_A(G)$ is uniform over all graphs with edges in $\{(u,v):A(u)\neq A(v)\}$ and thus equals $\cBnA$. Therefore, the expression above equals:
    \begin{align*}
        &=\sum_{A:A\text{ bal}}\Pr_{G\sim \cBnA}[G\text{ bad}]\Pr_{G\sim \cUnk}[E_A(G)]+\rho_{\text{not bal}}\\
        &\le \max_{A:A\text{ bal}}\left\{\Pr_{G\sim \cBnA}[G\text{ bad}]\right\}\cdot \rho_{\text{bal}}+\rho_{\text{not bal}}\\
        &\le (\exp(-\Omega(n\sigma/k))+\exp(-\Omega(n(B+1)/k))+k^{-2n})\cdot \rho_{\text{bal}}+\rho_{\text{not bal}}
    \end{align*}
    where the final inequality follows from \Cref{lem:coreExpand} and \Cref{lem:manyGoodCores}. 

    Next, we use the fact that for a partition $A$ there are at least $2^{\sum_{i<j}a_ia_j}$ distinct graphs $G\in \cUnk$ with $E_A(G)$, and thus by considering an approximately balanced partition and applying \Cref{fact:easy_counting_fact} we have
    \[
    |\cUnk|\ge 2^{\binom{k}{2}(n/k)^2-k/2}.
    \]
    \paragraph{Bounding $\rho_{\text{bal}}$.}
    We have that $\rho_{bal} \le k!\cdot \E_{G}[NC(G)]$ where $NC(G)$ is the number of colorings (up to isomorphism) of $G$, which is bounded by $2$ by~\Cref{thm:expColorBounded}. Then since $n/k=\omega(\log k!)$ we have that this term is negligible. 

    \paragraph{Bounding $\rho_{\text{not bal}}$.}

    Fixing an arbitrary imbalanced partition $A$, we have $a_i \not \in (1 \pm \frac{1}{100k}) \cdot n/k$ for some $i$, so the number of graphs $G$ consistent with $A$ is at most 
    \[
    2^{\sum_{i<j}a_ia_j}\le 2^{\binom{k}{2}(n/k)^2-\frac{1}{2(100k)^2}(n/k)^2}
    \]
    and hence 
    \[
    \Pr_{G\sim \cUnk}[E_A(G)] \le 2^{-\Omega((n/k)^2)}.
    \]
    Finally, there are at most $k^n$ imbalanced partitions $A$, so
    \[
    \sum_{A:\text{$A$ not bal}}\Pr_{G\sim \cUnk}[E_A(G)]\le k^n\cdot 2^{-\Omega((n/k)^2)}\le k^{-4n}
    \]
    where the final inequality follows as $n^2/k^2=\omega(n\log k)$.
\end{proof}

\section{\ts{$k$}{k}-Coloring in Average \ts{$O(n k)$}{Linear} Time}\label{sec:good-n-good-k}
In this section, we analyze the randomized and deterministic algorithms.

We begin by presenting and describing the algorithmic subroutines (i.e., the ``phases'' described in \Cref{sec:alg-overview-coloring}) in \Cref{sec:algorithmic-subroutines}. We then give the algorithm and proofs of correctness and runtime bounds in \Cref{sec:the-algorithm}.

\newcommand{\InnerLoop}{\textsc{Inner-Loop}~}
\newcommand{\BuildGoodCore}{\textsc{Build-Good-Core}~}
\newcommand{\GoodCoreAdj}{\textsc{Good-Core-Neighbors}~}
\newcommand{\GoodCoreAdjNaive}{\textsc{Good-Core-Neighbors-Naive}~}
\newcommand{\ManyAdjRand}{\textsc{Many-Adjacencies-Random}~}
\newcommand{\ManyAdjDet}{\textsc{Many-Adjacencies-Deterministic}~}
\newcommand{\ManyAdjNaive}{\textsc{Many-Adjacencies-Naive}~}

\newcommand{\BadVertices}{\textsc{Bad-Vertices}}
\newcommand{\SampSize}{\textsc{Core-Color-Class-Size}}

\subsection{Algorithmic Subroutines}\label{sec:algorithmic-subroutines}

In this section, we give the algorithmic subroutines that we will leverage in our overall $k$-coloring algorithm. We explain how each subroutine fits into the overall framework and corresponds to the good graph properties proven in the previous section.

In the following algorithms, when an algorithm returns a ``coloring'' or ``partial coloring,'' this means that it returns $k$ disjoint subsets $C_1, C_2, \dots, C_k$ of the vertices. For this to be a \textit{full} coloring, $C_1 \cup C_2 \cup \dots \cup C_k = [n]$. The algorithms below will always return \textit{proper} colorings.

Our algorithm is constructed via a local approach; indeed, we are implicitly constructing a Local Computation Algorithm (LCA). See \Cref{sec:LCA-implementation} for the explicit LCA implementation of the algorithm.

Before presenting the algorithmic subroutines, we briefly recall the structure of the algorithm as we gave in the overview in \Cref{sec:overview}. On a $k$-colorable graph, the algorithm colors the graph in phases. It first samples a core and certifies that it is uniquely colorable. It then extends the core's coloring to most of the graph, coloring all but a small set of vertices. Finally, it colors the remaining small set of vertices by brute force over their few possible colorings. In the rare event that any of the earlier phases fail, the graph is colored directly using the $\tO(2^n)$-time algorithm of~\cite{coloringbruteforce}. A reader may find it useful to read the top-level algorithm in \Cref{sec:the-algorithm} before reading the algorithmic subroutines.

\subsubsection{Finding a good core}\label{sec:good-core-alg}

We begin with an algorithm that, given as input a potential core graph $H$, checks if a given $k$-coloring algorithm returns a coloring efficiently. It also certifies that $H$ is uniquely colorable, so that the coloring returned by the algorithm is known to be uniquely colorable. Certifying $H$'s unique colorability is accomplished via checking degree and codegree conditions of \Cref{def:goodCore} of a good core, which, if they hold, imply that the core is uniquely colorable, by \Cref{lem:core-unique-color}.

\begin{algorithm}[h]
\caption{Phase 1: Finding and $k$-coloring a good core}\label{alg:phase-1}
\KwIn{A core graph $H$, a $k$-coloring algorithm \textsc{Color-Graph}, and \SampSize}
\KwOut{Core sets $\{D_i\}_{i\in [k]}$ or \textsc{FAIL}}

\SetKwProg{Fn}{Procedure}{}{end}
\Fn{\BuildGoodCore}{
    Run \textsc{Color-Graph} on $H$ for up to $O(|H|^2/k)$ timesteps\;
    \If{after $O(|H|^2/k)$ time, \textsc{Color-Graph} returns \textsc{FAIL}}{
        \Return{\textsc{FAIL}}
    }
    \Else{
        \textsc{Color-Graph} outputs a $k$-coloring\;
        Let $C_1, C_2, \dots, C_k$ be the $k$ color classes of $H$ output by  \textsc{Color-Graph}\;
    }
    \If{$C_1, C_2, \dots, C_k$ is not a proper $k$-coloring of $H$}{
        \Return{\textsc{FAIL}}
    }
    Let $\varepsilon = 1/(10k)$\;
    \For{$i \in [k]$}{
    \If{$|C_i| <\SampSize$}{
        \Return{\textsc{FAIL}} 
    }
    \For{$u \in C_i$}{
    \If{$\exists j \neq i$ such that the degree of $u$ into $C_j$ is $\not \in (1 \pm \varepsilon^4) |H|/(2k)$}{
        \Return{\textsc{FAIL}} 
    }
    \For{$v \in C_i \setminus \{u\}$}{
    \If{$\exists j \neq i$ such that the number of common neighbors of $u$ and $v$ in $C_j$ is $\not \in (1 \pm \varepsilon^3) |H|/(4k)$}{
        \Return{\textsc{FAIL}} 
    }
    }
    }
    Let $D_i$ be the lexicographically first $\SampSize$ vertices in $C_i$\;
    }
    \Return{$D_1, D_2, \dots, D_k$}\;
}
\end{algorithm}

\medskip

\Cref{alg:phase-1} computes Phase 1 (Step 1) in the informal algorithm given in \Cref{sec:alg-overview-coloring}. This algorithm attempts to build a good core (\Cref{def:goodCore}), which satisfies: it has roughly balanced color classes, the algorithm \textsc{Kucera1995} of \Cref{thm:kuvcera} succeeds in producing a $k$-coloring on the core, and the degrees and codegrees are consistent. When we check these conditions, subgraphs that are accepted are uniquely colorable. Additionally, observe that the runtime of \BuildGoodCore is $O(|H|^3 / k)$. The size of $H$ will be chosen to be polynomial in $k$.

One part of \Cref{alg:phase-1} that we did not mention in \Cref{sec:alg-overview-coloring} is its \textit{subselection} 
component, where the algorithm returns the lexicographically first $\SampSize$ vertices in $C_i$, instead of all vertices in $C_i$, as the core's vertices in color class $i$. The algorithm initially considers a core of a larger size since our analysis needs a larger core to imply unique coloring of most potential cores. However, as this coloring is then propagated to the rest of the graph, the algorithm only needs to consider a smaller core. While subselection is not necessary for the runtime analysis of our randomized algorithm, we need this to obtain an $\widetilde{O}(nk)$ runtime in the case of the deterministic algorithm (without it, we would obtain an $O(n k^2)$ runtime since testing if $v$ is $D$-linked would run in time $O(k^2)$ per vertex).

\subsubsection{Expanding from a core}\label{sec:expanding-algo}

We first describe an algorithm that, given sets $D_1, D_2, \dots, D_k$ that are color classes of a good core, and given a vertex $u$ that is not in the core, checks adjacencies of $u$ to the core in order to color $u$ based on the unique coloring of the core.

\begin{algorithm}[H]
\caption{Phase 2, local: See if a vertex is adjacent to all but one color class of the good core}\label{alg:phase-2}
\KwIn{A $k$-colorable graph $G$, sets $D_1, D_2, \dots, D_k$, and a vertex $u \in V(G)$}
\KwOut{Color class $i \in [k]$, or \textsc{FAIL}}

\SetKwProg{Fn}{Procedure}{}{end}
\Fn{\textsc{Local-}\GoodCoreAdj}{
    \If{$u \in D_i$ for some $i \in [k]$}{\Return{$i$}}

    Let $\textsc{COLORS}=[k]$\;
        \For{$r\in [200k]$}{
        Sample $i\in \textsc{COLORS}$ and $v\in D_i$\;
        \lIf{$(u,v)\in G$}{Remove $i$ from $\textsc{COLORS}$}
        \lIf{$|\textsc{COLORS}|=1$}{
            Let $j$ be the remaining element. \Return{$j$}}
        }
    
    \Return{\textsc{FAIL}}\;
}
\end{algorithm}

\medskip
Observe that we will never have $|\textsc{COLORS}| = 0$ because we know that the graph is $k$-colorable. We may not know the color of $u$ from this procedure, but then we will return $\textsc{FAIL}$, not an incorrect coloring of $u$. We will never return an incorrect coloring of $u$ or find that it is not actually possible to color $u$ (we will only possibly find that we do not have enough information to color it yet).

We now give the global algorithm \GoodCoreAdj, which runs \textsc{Local}-\GoodCoreAdj on all vertices not in the core.

\medskip 

\begin{algorithm}[H]
\caption{Phase 2: Find vertices adjacent to all but one color class of the good core}\label{alg:phase-2-global}
\KwIn{A $k$-colorable graph $G$ and sets $D_1, D_2, \dots, D_k$}
\KwOut{Sets $S_1, S_2, \dots, S_k$, or \textsc{FAIL}}

\SetKwProg{Fn}{Procedure}{}{end}
\Fn{\GoodCoreAdj}{
    Initialize $S_1, S_2, \dots, S_k \leftarrow \emptyset$\;
    \ForEach{$u \in V(G)\setminus (D_1 \cup D_2 \cup \dots \cup D_k)$}{
        Let $j \leftarrow$ \textsc{Local-}\GoodCoreAdj$(G, D_1, D_2, \dots, D_k, u)$\;
        \lIf{$j \neq$ \textsc{FAIL}}{Add $u$ to $S_{j}$}
    }
    
    \Return{$S_1, S_2, \dots, S_k$}\;
}
\end{algorithm}

\medskip

We also give ``naive'' versions of these algorithms, which are less efficient but have a higher chance of success. These subroutines will be run with very low probability by our final algorithm.

\medskip

\begin{algorithm}[H]
\caption{Phase 2, local: See if a vertex is adjacent to all but one color class of the good core}\label{alg:phase-2-naive}
\KwIn{A $k$-colorable graph $G$, sets $D_1, D_2, \dots, D_k$, and a vertex $u \in V(G)$}
\KwOut{Color class $i \in [k]$, or \textsc{FAIL}}

\SetKwProg{Fn}{Procedure}{}{end}
\Fn{\textsc{Local-}\GoodCoreAdjNaive}{
    \If{$u \in D_i$ for some $i \in [k]$}{\Return{$i$}}
    \If{there exists $j \in [k]$ such that $u$ is adjacent to some vertex in each $D_i$ for every $i \neq j$}{
        \Return{$j$}
    }
    \Return{\textsc{FAIL}}\;
}
\end{algorithm}

\medskip

\begin{algorithm}[H]
\caption{Phase 2: Find vertices adjacent to all but one color class of the good core}\label{alg:phase-2-naive-global}
\KwIn{A $k$-colorable graph $G$ and sets $D_1, D_2, \dots, D_k$}
\KwOut{Sets $S_1, S_2, \dots, S_k$, or \textsc{FAIL}}

\SetKwProg{Fn}{Procedure}{}{end}
\Fn{\GoodCoreAdjNaive}{
    Initialize $S_1, S_2, \dots, S_k \leftarrow \emptyset$\;
    \ForEach{$u \in V(G)\setminus (D_1 \cup D_2 \cup \dots \cup D_k)$}{
        Let $j \leftarrow$ \textsc{Local-}\GoodCoreAdjNaive$(G, D_1, D_2, \dots, D_k, u)$\;
        \lIf{$j \neq$ \textsc{FAIL}}{Add $u$ to $S_{j}$}
    }
    \Return{$S_1, S_2, \dots, S_k$}\;
}
\end{algorithm}

\medskip

\Cref{alg:phase-2-global} corresponds to Phase 2 (Step 2) in the informal algorithm given in \Cref{sec:alg-overview-coloring}. This step colors vertices that have a $D$-inferred coloring (see \Cref{def:inferred-coloring}), for $D = \bigcup_{i \in [k]} D_i$. The algorithm \textsc{Good-Core-Neighbors} uses the property that many vertices in the graph have a \textit{strongly $D$-inferred coloring} in order to color a large fraction of vertices in the graph. If $D$ is \textit{uniquely} colorable, then vertices colored by this algorithm will also be uniquely colored. There is a small chance, however, this algorithm, or some other step starting with this core, will fail. In this case, if after enough iterations with a certain size of a core we have not found a core that succeeded, we do something more naive. We run \textsc{Good-Core-Neighbors-Naive}, which just uses that even more vertices have a \textit{$D$-inferred coloring} (not necessarily strongly inferred). The algorithm \textsc{Good-Core-Neighbors-Naive} checks \textit{all} adjacencies of a vertex $v$ to $D$ to try to color $v$.

\subsubsection{Sample neighbors for remaining vertices}\label{sec:sample-neighbors-algo}

Next, given large sets of vertices $S_1, S_2, \dots, S_k$ colored with each of the $k$ colors, and a vertex $u$ that is not already colored, the following algorithm attempts to color $u$ by sampling vertices in each of $S_1, S_2, \dots, S_k$ and checking adjacencies. We use the structural properties of \Cref{thm:good_property_probability} to argue that, with overwhelming probability, the remaining vertices can be efficiently colored in this way.

\medskip

\begin{algorithm}[H]
\caption{Phase 3 (randomized): See if a vertex is adjacent to many vertices colored in Phase 2 in all but one color class}\label{alg:phase-3-random}

\KwIn{A $k$-colorable graph $G$, sets $S_1, S_2, \dots, S_k$, and a vertex $u \in V(G)$}
\KwOut{A pair $(j, \textsc{Iter})$ with $j \in [k]$, or \textsc{FAIL}}

\SetKwProg{Fn}{Procedure}{}{end}
\Fn{\textsc{Local-}\ManyAdjRand}{
    \If{$u \in S_i$ for some $i \in [k]$}{\Return{$(i,0)$}}
    Let $\textsc{Iter}=0$\;
    Let \textsc{Possible-Colors} $\leftarrow \{1, 2, \dots, k\}$\;
        \While{$|\textsc{Possible-Colors}| > 1$ and \textsc{Iter} $\leq 100 n k$}{
            Sample $j\in \textsc{Possible-Colors}$ and $v\in S_j$\;
            \If{$u$ is adjacent to $v$}{
                Remove $j$ from \textsc{Possible-Colors}\;
            }
            Increment \textsc{Iter}\;
        }
        \If{$|\textsc{Possible-Colors}| = 1$}{
            Let $j$ be the remaining element of \textsc{Possible-Colors}\;
            \Return{$(j,\textsc{Iter})$}
        }
        \Else{
            \Return{\textsc{FAIL}}
        }
    }

\end{algorithm}

\medskip

We then turn this local algorithm into a global algorithm as follows.

\medskip

\begin{algorithm}[H]
\caption{Phase 3 (randomized): Find vertices adjacent to many vertices colored in Phase 2 in all but one color class}\label{alg:phase-3-random-global}

\KwIn{A $k$-colorable graph $G$ and sets $S_1, S_2, \dots, S_k$}
\KwOut{Sets $S_1', S_2', \dots, S_k'$, or \textsc{FAIL}}

\SetKwProg{Fn}{Procedure}{}{end}
\Fn{\ManyAdjRand}{
    Let $S_i'=\emptyset$ for every $i\in [k]$\;
    Let \textsc{Repeated} $=1$\;
    \ForEach{$u \in V(G)\setminus (S_1 \cup S_2 \cup \dots \cup S_k)$}{
        Let $o \leftarrow$ \textsc{Local-}\ManyAdjRand$(G, S_1, S_2, \dots, S_k, u)$\;
        \If{$o = (j,\textsc{Iter})$ for some $j \in [k]$}{
            Add $u$ to $S'_{j}$\;
            Increment \textsc{Repeated} by \textsc{Iter}\;
        }
        \lIf{\textsc{Repeated} $\ge 100nk$}{\Return{\textsc{FAIL}}}
    }
    \Return{$S_1', S_2', \dots, S_k'$}\;
    
}
\end{algorithm}

\medskip

\Cref{alg:phase-3-random-global} utilizes the fact that all but very few vertices in the graph have a strongly $S$-inferred coloring, where $S = (S_1, S_2, \dots, S_k)$ is a sufficiently large set with approximately balanced color classes. This allows Phase 3 to, with high probability, color most remaining uncolored vertices in the graph.

\medskip

\begin{algorithm}[H]
\caption{Phase 3 (naive): Find vertices adjacent to vertices colored in Phase 2 in all but one color class.}\label{alg:phase-3-random-uncareful}
\KwIn{A $k$-colorable graph $G$ and sets $S_1, S_2, \dots, S_k$}
\KwOut{Sets $S_1', S_2', \dots, S_k'$, or \textsc{FAIL}}

\SetKwProg{Fn}{Procedure}{}{end}
\Fn{\ManyAdjNaive}{
    Initialize $S_1', S_2', \dots, S_k' \leftarrow \emptyset$\;
    \ForEach{$u \in V(G)$}{
        Let \textsc{Possible-Colors} $\leftarrow \{1, 2, \dots, k\}$\;
        \ForEach{$v \in V(G)$}{
        \If{$u$ is adjacent to $v$}{
            Test if $v \in S_i$ for $i \in [k]$ \;
            \If{$v \in S_i$ for some $i \in [k]$}{
                Remove $i$ from \textsc{Possible-Colors}\;
            }
        }
        }
        \If{$|\textsc{Possible-Colors}| = 1$}{
            Let $j$ be the remaining element of \textsc{Possible-Colors}\;
            Add $u$ to $S_j'$\;
        }
        \Else{
            \Return{\textsc{FAIL}}
        }
    }
    \Return{$S_1', S_2', \dots, S_k'$}\;
    }
\end{algorithm}

\medskip

In \Cref{alg:phase-3-random-uncareful} (Phase 3, naive), we simply construct the sets $S_i'$ in brute-force fashion. This is because \Cref{alg:phase-3-random-uncareful} will be executed in a very unlikely case, and so we are not too concerned with polynomial-time factors. Instead, we need to find adjacencies if they exist, so we execute this less sophisticated algorithm.

\subsubsection{Color remaining vertices}

Our global algorithm will be fast on graphs that are $(\sigma, B)$-linked for some properly chosen parameters $\sigma$ and $B$. By definition of linked with these parameters, there will be at most $B$ vertices that may be uncolored after the previous phases have been run. We need to color these remaining vertices to obtain a proper $k$-coloring, which is what this step does.

However, \Cref{alg:phase-4} can fail to find a proper coloring given the existing color classes $F_1, F_2, \dots , F_k$. This can happen if Phase 1 starts with a \textit{bad} core. This core may not have been uniquely colorable, and so possibly the coloring was not able to be extended well to the rest of the graph. This can cause failures here, and when this happens, \textsc{Color-Remaining} returns FAIL. (The global algorithm will then try a different method.)

\begin{algorithm}[H]
\caption{Phase 4: Color remaining vertices}\label{alg:phase-4}
\KwIn{A $k$-colorable graph $G$, color classes $F_1, F_2, \dots, F_k$, and \BadVertices}
\KwOut{A proper $k$-coloring of $G$, or \textsc{FAIL}}

\SetKwProg{Fn}{Procedure}{}{end}
\Fn{\textsc{Color-Remaining}}{
    \If{there are more than \BadVertices ~vertices that are not in $F = F_1 \cup F_2 \cup \dots \cup F_k$ (i.e., not yet colored)}{
        \Return{\textsc{FAIL}}\;
    }
    \Else{
        Try all possible colorings of the at most \BadVertices ~remaining vertices\;
        \If{some coloring corresponds to a proper $k$-coloring of $G$}{
            \Return{the (lexicographically first) such coloring}\;
        }
        \Else{
            \Return{\textsc{FAIL}}\;
        }
    }
}
\end{algorithm}

\medskip 
If the global algorithm's more sophisticated method of coloring did not succeed on an input, we will instead throw out the coloring we have created so far and restart with an exhaustive search for a proper $k$-coloring. \Cref{alg:phase-5} captures this exhaustive search step, which will ensure that our algorithm outputs a proper $k$-coloring on every $k$-colorable graph. Here ``exhaustive search'' means invoking the algorithm of~\cite{coloringbruteforce}, which decides $k$-colorability and produces a coloring in time $\tO(2^n)$, independent of $k$. \Cref{alg:phase-5} will be run so infrequently that its $\tO(2^n)$ runtime will not negatively impact the average runtime. Getting this step to be run so infrequently that this does not impact the average runtime is a key technical challenge that we overcome in this paper.
\medskip

\begin{algorithm}[H]
\caption{Phase 5: Exhaustive search}\label{alg:phase-5}
\KwIn{A $k$-colorable graph $G$}
\KwOut{A proper $k$-coloring of $G$, or \textsc{FAIL}}

\SetKwProg{Fn}{Procedure}{}{end}
\Fn{\textsc{Exhaustive-Search}}{
    Compute a proper $k$-coloring of $G$ in time $\tO(2^n)$ using the algorithm of~\cite{coloringbruteforce}\;
    \Return{the resulting proper $k$-coloring of $G$}\;
}
\end{algorithm}

\subsection{The Randomized Algorithm}\label{sec:the-algorithm}
We now present our randomized algorithm (\Cref{alg:main-rand}) for $k$-coloring $k$-colorable graphs, whose average runtime over the set of all $k$-colorable graphs is $O(n \cdot k)$. In the pseudocode, whenever a subroutine returns \textsc{FAIL}, the current iteration is abandoned and the loop proceeds to the next iteration.

\begin{algorithm}[h]
\caption{$k$-Coloring a $k$-Colorable Graph (Optimal in $n, k$)}\label{alg:main-rand}
\KwIn{A $k$-colorable graph $G$}
\KwOut{A proper $k$-coloring of $G$}

\SetKwProg{Fn}{Procedure}{}{}
\Fn{\textsc{$k$-Coloring-Optimally-Randomized}}{
    Let $\nu$ be as in \Cref{def:manyGoodCores}\;
    \For{$\textsc{Iterations} = 0, 1, \dots, n^2 - 1$}{
        Sample $P\sim \binom{V}{\nu}$\;
        Run Algorithm \BuildGoodCore$(H=G_P,\text{algo}= \textsc{Kucera1995}, \SampSize = c_\sigma \log k)$\;
        Let $D_1, D_2, \dots, D_k$ be the outputs of \BuildGoodCore\unskip\;
        Run Algorithm \GoodCoreAdj$(G,\{D_1, D_2, \dots, D_k\})$\;
        Let sets $S_1, S_2, \dots, S_k$ be the outputs of \GoodCoreAdj\unskip\;
        Run Algorithm \ManyAdjRand$(G,\{S_1, S_2, \dots, S_k\})$\;
        Let sets $S_1', S_2', \dots, S_k'$ be the outputs of \ManyAdjRand\unskip\;
        
        \If{every vertex in $G$ is in some $S_i \cup S_i'$}{
            \Return{$C = \{S_1 \cup S_1', S_2 \cup S_2', \dots, S_k \cup S_k'\}$}\;
        }
    }

    \For{all $P\in \binom{V}{\nu}$}{
        Run Algorithm \BuildGoodCore$(H=G_P,\text{algo}= \textsc{Kucera1995}, \SampSize = c k \log k)$\;
        Let $D_1, D_2, \dots, D_k$ be the outputs of \BuildGoodCore\unskip\;
        Run Algorithm \GoodCoreAdjNaive$(G,\{D_1, D_2, \dots, D_k\})$\;
        Let sets $S_1, S_2, \dots, S_k$ be the outputs of \GoodCoreAdjNaive\unskip\;
        Run Algorithm \textsc{Many-Adjacencies-Naive}$(G,\{S_1, S_2, \dots, S_k\})$\;
        Let sets $S_1', S_2', \dots, S_k'$ be the outputs of \ManyAdjNaive\unskip\;

        Run Algorithm \textsc{Color-Remaining} on inputs: graph $G$, sets $\{F_i=D_i\cup S_i \cup S_i'\}$, and bad vertex bound \BadVertices$=ck\log k$\;
    
    \If{\textsc{Color-Remaining} returns a coloring $(C)$}{
        \Return{$C$}\;
    }
    }

    Run Algorithm \textsc{Exhaustive-Search} on the graph $G$\;
    \Return{the output of \textsc{Exhaustive-Search}}\;
}
\end{algorithm}

\medskip

This algorithm corresponds to the informal algorithm given in \Cref{sec:alg-overview-coloring}. The first outer for loop will succeed on awesome graphs (\Cref{def:awesomeokay}). The second outer for loop will succeed on okay graphs (\Cref{def:awesomeokay}). The exhaustive search will be run in the very rare case where a graph is neither awesome nor okay. We see that there is a close correspondence between the definitions of awesome and okay and the choice of parameters in the algorithm.

We call a single execution that samples a core, certifies and colors it, and attempts to
propagate the coloring an \textsc{Inner-Loop}. Both loops of \Cref{alg:main-rand} consist of repeated \textsc{Inner-Loop}s.

We now prove the main result.
\mainRand*
\begin{proof}
    We first argue correctness, then runtime. We assume that $k\le n^{1/36-\delta}$ for a positive constant $\delta>0$.
    \paragraph{Correctness.} In any iteration, \InnerLoop may succeed in $k$-coloring a $k$-colorable graph. We claim that \InnerLoop never returns an incorrect coloring, so this cannot harm us. First, by \Cref{lem:core-unique-color}, a good core is uniquely colorable, and hence the coloring of the core agrees with every global coloring of $G$. Given a unique coloring of the core, the partial coloring produced by \GoodCoreAdj (or \GoodCoreAdjNaive) and \ManyAdjRand (or \ManyAdjNaive) is unique. Finally, if \textsc{Color-Remaining} is run, then it properly $k$-colors the remainder of the graph given the unique partial coloring produced by the previous steps. Therefore, any iteration \InnerLoop that succeeds produces a proper $k$-coloring.
    
    If no iteration of \InnerLoop returns such a coloring, then \textsc{Exhaustive-Search} is executed, which by definition will ensure that a $k$-coloring of the graph is produced.

    \paragraph{Runtime.}
    We now prove a runtime bound. Let $A(G)$ be the event that $G$ is awesome and let $O(G)$ be the event that $G$ is okay. We have that $G\sim \cUnk$ is awesome with probability $1-\exp(-\Omega(n/k))$ and is okay with probability at least $1-3k^{-2n}$. 

    Let $T_A,T_O$ be the runtimes of the first and second loops.
    \begin{claim}
        $\E_G[T_A]=O(nk)$.
    \end{claim}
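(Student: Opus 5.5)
We split $\E_G[T_A]$ according to whether $G$ is awesome, using that a single \InnerLoop of the first loop costs $O(nk+\poly(k))$ in the worst case. Indeed, \BuildGoodCore on a core of size $\nu=\tO(k^9)$ costs $O(\nu^3/k)=\poly(k)$. \GoodCoreAdj makes at most $200k$ probes per vertex, for $O(nk)$ total. \ManyAdjRand is cut off once \textsc{Repeated} reaches $100nk$, and each local call is also bounded by $100nk$ iterations, so it costs $O(nk)$. The final membership check costs $O(n)$. Since $k\le n^{1/37}$, the $\poly(k)$ term is $O(nk)$ (for instance $\nu^3/k=\tO(k^{26})\le n$), so each iteration costs $O(nk)$. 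The first loop has $n^2$ iterations, so $T_A=O(n^3k)$ always. Non-awesome graphs have probability $\exp(-\Omega(n/k))$ by \Cref{thm:good_property_probability} with $B=0$, $\sigma=\sigma_A$, and $n^3k\cdot e^{-\Omega(n/k)}=o(1)$ since $n/k\ge n^{36/37}$. The non-awesome part therefore contributes $O(1)$.

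For awesome $G$, I would show that each iteration independently succeeds with probability at least a constant, say $1/2$. Then the number of iterations is stochastically dominated by a geometric variable, so $\E[T_A\mid G]=O(nk)$.

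An iteration succeeds under the following chain of events. First, $P$ is a good core, which has probability at least $2/3$ by \Cref{def:manyGoodCores}. Then \BuildGoodCore returns $D_i$ of size $\sigma_A$, and by \Cref{lem:core-unique-color} this coloring agrees, up to relabeling, with the approximately balanced coloring $A$ witnessing linkedness. The lexicographic subselection gives $|D_i|=\sigma_A$, which lies in $[\sigma_A,O(k\log k)]$. By \Cref{itm:core-inferred}, $99\%$ of each $A_i$ is strongly $D$-inferred. For such a vertex $u$, each of the $k-1$ wrong colors is eliminated on a sample of that color with probability at least $0.01$. Each of the $200k$ rounds samples a remaining wrong color with probability at least $1/2$ while at least two colors remain. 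Hence the expected number of rounds to eliminate all wrong colors is $O(k)$ with a small constant, and I would show via Chernoff that $u$ is colored with probability at least $0.99$.

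The main obstacle is to deduce from this that $|S_i|\ge 0.9|A_i|$ for all $i$ simultaneously with constant probability. The per-vertex events are independent given $D$, since the random coins are separate, so Chernoff over roughly $n/k$ vertices per class plus a union bound over $k$ classes gives failure probability $k e^{-\Omega(n/k)}$. After adding $D_i$ back, \Cref{itm:s-inferred} with $B=0$ then says that every remaining vertex is strongly $S$-inferred. Here $|S_j|\approx 0.98|A_j|\ge |A|/(2k)$, which is comparable across classes.

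For the Phase 3 budget, each local call on a strongly $S$-inferred vertex has expected iteration count $O(k)$, more precisely at most $2\cdot 100\cdot k$, by the same elimination argument. The total expected \textsc{Repeated} over at most $0.02n$ uncolored vertices is then at most a small constant times $nk$, so by Markov the cap $100nk$ is exceeded with probability at most about $1/10$. Each local call stays well under its own cap except with negligible probability. Combining these bounds, an iteration fails with probability at most $1/3+0.01+1/10<1/2$. This gives $\E[T_A\mid G\text{ awesome}]=O(nk)$ and hence the claim.
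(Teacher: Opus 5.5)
Your proposal is correct at the level of the paper's own argument and follows it essentially step for step. You split on awesome versus not awesome, and bound the non-awesome part by worst-case cost times $\exp(-\Omega(n/k))$. For awesome $G$ you get per-iteration success at least $1/2$ from four ingredients: a good core with probability $2/3$, Chernoff over independent per-vertex events to get $|S_i|\ge 0.9|A_i|$, \Cref{itm:s-inferred} with $B=0$, and Markov on the Phase 3 counter.

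One caveat is inherited from the paper. If you use only the density $0.01$ that linkedness guarantees, the expected number of Phase 2 rounds is about $100(k+\ln k)$. So the $200k$ budget gives per-vertex success $0.99$ only once $k$ exceeds a moderate constant; for $k=2$ the success probability is about $1-e^{-2}\approx 0.86$, too small to force $|S_i|\ge 0.9|A_i|$. For small $k$ you should strengthen item 1 of linkedness to a constant density such as $0.4$, which the same Chernoff argument in the proof of \Cref{lem:coreExpand} provides.
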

    \begin{proof}
        If $G$ is awesome, we claim each iteration successfully colors $G$ with probability at least $1/2$. First, we sample a good core $H$ with probability at least $2/3$. If $H$ is good, by \Cref{lem:core-unique-color} it is uniquely colorable, and hence every valid coloring $A$ of the entire graph $G$ agrees on $G_H$ (up to permuting the color classes). Thus, letting $A$ be the approximately balanced coloring on which $G$ is $(\sigma_A=c_\sigma\log k,0)$-linked, we have that the sets $D_i$ returned by step 1 (~\BuildGoodCore) satisfy $D_i\subseteq A_i$ and $|D_i|\ge \sigma_A$, and the sets $S_i$ returned by step 2 (~\GoodCoreAdj) satisfy $S_i\subseteq A_i$. 

        We next show that for every $i$ we recover sets $S_i\subseteq A_i$ that satisfy $|S_i|\ge 0.9|A_i|$ with overwhelming probability. By~\Cref{def:linked}, there is a set $L_i\subseteq A_i$, containing at least 99\% of $A_i$, whose vertices are strongly $D$-inferred. When the loop in \GoodCoreAdj reaches $v\in L_i$, the expected number of samples required to find connections to all but one $D_j$ is at most $200k$ (since we waste at most half our samples on our own color class in expectation), and so it is placed into $S_i$ with probability $0.99$. Moreover, these events are independent for every $v\in L_i$. Thus, via a simple concentration bound, we obtain that each $S_i$ is large enough with probability $1-\exp(-\Omega(n/k))\gg 0.99/k$.
        
        Hence, by~\Cref{itm:s-inferred} and the fact that $G$ is awesome, every vertex $v\in A_i$ is strongly $S$-inferred. Thus the expected number of iterations for \ManyAdjRand to color $v$ is at most $8k$, as attempting to color $v$ by its own color at most doubles the expected samples. Thus the expected time to color all vertices is at most $8nk$, and hence all vertices are colored before the iteration counter reaches $100nk$ with probability $0.9$, and hence all vertices are colored with the claimed probability.

        Moreover, each iteration takes time at most $\tO(k^{26})$ to color the core, $O(nk)$ to color vertices adjacent to the core, and $O(nk)$ to attempt to color the remaining vertices. Finally, $G$ is not awesome with probability $\exp(-\Omega(n/k))$, and the worst-case runtime of this phase is $\tO(n^2k^{26} +n^3k)$, so we are done.
    \end{proof}

    \begin{claim}
        $\E_G[T_O]=o(1)$.
    \end{claim}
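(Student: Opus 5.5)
The claim $\E_G[T_O]=o(1)$ concerns the contribution of the second loop, which only runs when all $n^2$ iterations of the first loop fail. I would write
\[
\E_G[T_O] \le \Pr_{G,\text{coins}}[\text{first loop fails}] \cdot \max_G T_O(G),
\]
and bound the two factors separately.

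\textbf{Cost of the second loop.} It enumerates $\binom{n}{\nu}\le n^{\nu}$ cores. Each core costs the following.
\begin{itemize}
\item \BuildGoodCore: $O(\nu^3/k)$.
\item \GoodCoreAdjNaive: $O(n\cdot |D|)=O(nk^2\log k)$.
\item \ManyAdjNaive: $O(n^2)$.
\item \textsc{Color-Remaining}: $k^{B_O}\cdot B_O\cdot n$ with $B_O=ck\log k$.
\end{itemize}
So $\max_G T_O(G)\le n^{\nu}\cdot \poly(n)\cdot k^{ck\log k}=\exp(O(k^9\log k\cdot \log n + k\log^2 k))$. Since $k\le n^{1/36-\delta}$, the exponent is $O(n^{1/4}\log^2 n)$, so $T_O\le \exp(n^{1/4+o(1)})$.

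\textbf{Probability the first loop fails.} I would split on whether $G$ is awesome.
\begin{itemize}
\item If $G$ is not awesome: by \Cref{thm:good_property_probability} with $\sigma=\sigma_A$ and $B=0$, this happens with probability $\exp(-\Omega(n/k))$.
\item If $G$ is awesome: the previous claim shows each iteration succeeds with probability at least $1/2$, independently across iterations since the coins are fresh. So all $n^2$ iterations fail with probability at most $2^{-n^2}$.
\end{itemize}
The total failure probability is therefore $\exp(-\Omega(n/k))+2^{-n^2}$.

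\textbf{Combining.} The product is at most
\[
\exp\bigl(n^{1/4+o(1)}-\Omega(n^{35/36})\bigr)=o(1),
\]
using $n/k\ge n^{35/36}\gg n^{1/4+o(1)}$.

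\textbf{Main obstacle.} The bound is only as good as the claim that the first loop fails with probability $\exp(-\Omega(n/k))$. The one point to check is that the per-iteration success bound of $1/2$ on awesome graphs holds for every awesome $G$ rather than on average. It does: it uses only that at least $2/3$ of cores are good and the linkedness property, and both are deterministic properties of $G$. Everything else is bookkeeping of exponents, where the key point is that $\nu\log n$ is sublinear in $n/k$; this is exactly where $k\le n^{1/36-\delta}$ enters.
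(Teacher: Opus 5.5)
Your proof is correct and takes the same approach as the paper. Awesome graphs reach the second loop with probability at most $2^{-n^2}$, and non-awesome graphs occur with probability $\exp(-\Omega(n/k))$ by \Cref{thm:good_property_probability}; the worst-case cost $n^{\nu}\cdot\poly(n,2^{k\log^2 k})$ of the second loop is then negligible because $\nu\log n=o(n/k)$, and you are slightly more explicit than the paper in also multiplying the $2^{-n^2}$ term by this cost and in comparing the exponents $n^{1/4+o(1)}$ and $n^{35/36}$.
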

    \begin{proof}
        First note that if $G$ is awesome, the probability that this phase executes at all is at most $2^{-n^2}$. Otherwise, the total runtime of the phase is at most $n^{\nu}\cdot \poly\left(n,2^{k\log^2 k}\right)$, and hence the expected runtime is at most 
        \[\Pr[\neg A(G)]\cdot n^{\nu}\cdot\poly\left(n,2^{k\log^2 k}\right)\le \exp(-\Omega(n/k))\cdot \poly(n^\nu)\] 
        which is negligible since $n/k=\omega(\nu\log n)$.
    \end{proof}

    Finally, let $T_{BF}$ be the runtime of the final brute-force stage. We have that the worst-case runtime is $\tO(2^n)$ by~\cite{coloringbruteforce}. We claim that no okay graph reaches this phase, which suffices to establish that $\E[T_{BF}]=o(1)$ by our definition of okay. Such a graph contains a good core $P$. Note that $|P|/2k\ge \sigma_O$ so each color class will be sufficiently large. Once the phase 2 loop reaches $P$ we will identify $S_1,\ldots,S_k$ where $|S_i|\ge 0.99|A_i|$ for the $A$ for which $G$ is $(\sigma_O,B_O)$-linked, and so all but $B$ vertices will be colored by \ManyAdjNaive, and so \textsc{Color-Remaining} will color the entire graph.
\end{proof}

\subsection{The Deterministic Algorithm}\label{sec:deterministic}

We define a new deterministic subroutine used by our deterministic algorithm.
Let $\cP=\{P_1,P_2,\ldots,P_{y_0}\}$ be a collection of vertex-disjoint sets of size $\nu$, where
$y_0 = O(k^2\log k\log n) = \tO(k^2)$ is fixed in the proof of \Cref{thm:mainDet}. Let $\cP_{\le i}$ denote $(P_1,\ldots,P_i)$, and let $\cL$ be an arbitrary set of $y_0\cdot k\cdot \log(n)$ vertices that does not intersect with $\cP$ (and this is possible since $y_0\cdot \nu +y_0k\log(n)=o(n)$) and let $\cL_{\le j}$ denote the first $j$ such vertices.

We give deterministic versions of the \ManyAdjRand subroutine:

\medskip

\begin{algorithm}[H]
\caption{Phase 3 (deterministic): Find vertices adjacent to many vertices colored in Phase 2 in all but one color class}\label{alg:phase-3-det}

\KwIn{A $k$-colorable graph $G$, sets $S_1, S_2, \dots, S_k$, and size $y$}
\KwOut{Sets $S_1', S_2', \dots, S_k'$, or \textsc{FAIL}}

\SetKwProg{Fn}{Procedure}{}{end}
\Fn{\ManyAdjDet}{
    Let $S_i'=\emptyset$ for every $i\in [k]$\;
    \ForEach{$u \in V(G)\setminus (S_1\cup\ldots\cup S_k)$}{
        Let \textsc{Possible-Colors} $\leftarrow \{1, 2, \dots, k\}$\;
        \ForEach{$v\in \cL_{\le \lceil y k \log n \rceil}$}{
        \If{$u$ is adjacent to $v$ and $v\in S_j$ for some $j\in[k]$}{
            Remove $j$ from \textsc{Possible-Colors}\;
        }
        }
        \If{$|\textsc{Possible-Colors}| = 1$}{
            Let $j$ be the remaining element of \textsc{Possible-Colors}\;
            Add $u$ to $S_j'$\;
        }
        \Else{
            \Return{\textsc{FAIL}}
        }
    }
    \Return{$S_1', S_2', \dots, S_k'$}\;
}
\end{algorithm}

\medskip

Recall that the algorithm \textsc{Many-Adjacencies-Random} used the property that all but very few vertices in the graph have a strongly $S$-inferred coloring for $S = (S_1, S_2, \dots, S_k)$. Therefore \textit{sampling} vertices in $S_i$ and checking if they are adjacent to $u$ sufficed. In our deterministic algorithm, of course, we do not want to \textit{sample} vertices. Instead, we look for adjacencies to an arbitrary set of vertices of a fixed size. Our correctness proof will show that this still suffices for our algorithmic purposes, at the cost of $\text{polylog}(n)$ factors in the average runtime of the deterministic algorithm.

Our deterministic algorithm is given in \Cref{alg:main-det}. As in the randomized algorithm, whenever a subroutine returns \textsc{FAIL}, the loop proceeds to the next iteration.

\begin{algorithm}[h]
\caption{$k$-Coloring a $k$-Colorable Graph (Optimal in $n, k$)}\label{alg:main-det}
\KwIn{A $k$-colorable graph $G$}
\KwOut{A proper $k$-coloring of $G$}

\SetKwProg{Fn}{Procedure}{}{}
\Fn{\textsc{$k$-Coloring-Deterministic}}{
    \For{$y\in\{1,\ldots,y_0\}$}{
    \For{$P \in \cP_{\leq y}$}{
        Run Algorithm \BuildGoodCore ($H = G_{P}$, algo $=$ \textsc{Kucera1995}, $\SampSize=c_{\sigma}\log k$)\;
        Let $D_1, D_2, \dots, D_k$ be the outputs of \BuildGoodCore\unskip\;
        Run Algorithm \GoodCoreAdjNaive$(G,\{D_1, D_2, \dots, D_k\})$\;
        Let sets $S_1, S_2, \dots, S_k$ be the outputs of \GoodCoreAdjNaive\unskip\;

        Run Algorithm \ManyAdjDet  ($G$, $\{S_1, S_2, \dots, S_k$\}, $y$)\;
        Let sets $S_1', S_2', \dots, S_k'$ be the outputs of \ManyAdjDet\unskip\;
        
        \lIf{every vertex in $G$ is in some $S_i \cup S_i'$}{
            \Return{$C = \{S_1 \cup S_1', S_2 \cup S_2', \dots, S_k \cup S_k'\}$}
        }
        }}

    \For{$P\in \binom{V}{\nu}$}{
        Run Algorithm \BuildGoodCore ($H = G_{P}$, algo $=$ \textsc{Kucera1995},  $\SampSize=ck\log k$)\;
        Let $D_1, D_2, \dots, D_k$ be the outputs of \BuildGoodCore\unskip\;
        Run Algorithm \GoodCoreAdjNaive$(G,\{D_1, D_2, \dots, D_k\})$\;
        Let sets $S_1, S_2, \dots, S_k$ be the outputs of \GoodCoreAdjNaive\unskip\;
        Run Algorithm \ManyAdjNaive$(G,\{S_1, S_2, \dots, S_k\})$\;
        Let sets $S_1', S_2', \dots, S_k'$ be the outputs of \ManyAdjNaive\unskip\;
        Run Algorithm \textsc{Color-Remaining} on inputs: graph $G$, sets $\{F_i=D_i\cup S_i\cup S_i'\}$, and bad vertex bound $\BadVertices=ck\log k$\;
        \lIf{\textsc{Color-Remaining} returns a coloring ($C$)}{
            \Return{$C$}
        }
    }

    Run Algorithm \textsc{Exhaustive-Search} on the graph $G$\;
    \Return{the output of \textsc{Exhaustive-Search}}\;
}
\end{algorithm}

\medskip

Let us observe what changed compared to the randomized algorithm. Whenever we ran \textsc{Many-Adjacencies-Random} before, we now run \textsc{Many-Adjacencies-Deterministic}. Additionally, our way of choosing $P$ on which we try to find a good core changes, as we will need a more refined way of choosing $P$ to achieve the $\widetilde{O}(nk)$ runtime in the deterministic setting, and we also can no longer \textit{sample} sets $P$.

We now analyze the average runtime of the deterministic algorithm. In proving the average runtime bound of the deterministic algorithm, the most important observation we use is that the distribution $\cUnk$ of $k$-colorable graphs (and all subclasses we use for analysis, such as graphs with many good cores) is vertex-symmetric, of which we make essential use in our derandomization: 
\begin{observation}\label{obs:invariant} 
    For every vertex permutation $\pi$ and vertex-symmetric distribution $\cD$, we have that $\cD$ and $\pi(\cD)$ are equal as distributions. Consequently, for every distribution over permutations $\Xi$, $\pi(G)$ where $\pi \sim \Xi$ and $G\sim \cD$ and $G$ where $G\sim \cD$ are equal as distributions.
\end{observation}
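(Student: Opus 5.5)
The statement has two parts. The first says that if $\cD$ is vertex-symmetric then $\pi(\cD)=\cD$ for every fixed permutation $\pi$. The second lifts this to a random permutation drawn from any $\Xi$. My plan is to unpack the definition for the first part and then condition on $\pi$ for the second. I will take "vertex-symmetric" to mean what the overview says: $\Pr_{G\sim\cD}[G=H]=\Pr_{G\sim\cD}[G=\pi(H)]$ for every labeled graph $H$ on $[n]$ and every $\pi\in S_n$, where $\pi(H)$ is the graph with edge set $\{\{\pi(u),\pi(v)\}:\{u,v\}\in E(H)\}$.

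For the first claim, fix $\pi$ and a graph $H$. Since $\pi$ acts as a bijection on labeled graphs,
\[
\Pr_{G\sim\cD}[\pi(G)=H]=\Pr_{G\sim\cD}[G=\pi^{-1}(H)]=\Pr_{G\sim\cD}[H' = G]\Big|_{H'=\pi^{-1}(H)}.
\]
Applying vertex symmetry with the permutation $\pi$ to $H'=\pi^{-1}(H)$ gives that this equals $\Pr_{G\sim\cD}[G=H]$. Hence $\pi(\cD)$ and $\cD$ assign the same mass to every graph, so they are equal.

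For the second claim, I take $\pi\sim\Xi$ independent of $G\sim\cD$. This independence is implicit and is exactly how the observation is used in the derandomization, where $\Xi$ reflects the fixed choice of $\cP,\cL$. By the law of total probability,
\[
\Pr[\pi(G)=H]=\sum_{\sigma}\Pr_{\Xi}[\pi=\sigma]\,\Pr_{G\sim\cD}[\sigma(G)=H].
\]
By the first part each inner term equals $\Pr_{\cD}[G=H]$, and the weights sum to $1$, so the total is $\Pr_{\cD}[G=H]$.

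It remains to justify that the distributions the paper applies this to are vertex-symmetric. For $\cUnk$ this holds because $\pi$ maps $k$-colorable graphs bijectively onto $k$-colorable graphs, and $\cUnk$ is uniform on that set. For conditioned subclasses, such as graphs with many good cores or awesome graphs, the defining property must be invariant under relabeling. Every property in \Cref{def:goodCore}, \Cref{def:manyGoodCores} and \Cref{def:linked} is defined in a label-free way, quantifying over all subsets or over all colorings. Conditioning a symmetric distribution on an invariant event keeps it symmetric. The only real obstacle is one subtle point: \textsc{Kucera1995}'s running time could depend on labels, so "good core" might not be label-invariant. I would handle this by noting that the relevant events are either defined up to isomorphism or that one conditions only on events closed under relabeling. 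Otherwise the argument is routine.
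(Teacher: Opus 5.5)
Your argument is correct and is the routine one the paper leaves implicit (it states this as an observation without proof): the first part is the definition of vertex symmetry unwound through $\pi^{-1}$, and the second follows by conditioning on the independently drawn $\pi$ and averaging. You are also right that independence of $\pi$ and $G$ is needed; your last paragraph, on which specific classes are vertex-symmetric and the possible label-dependence of \textsc{Kucera1995}, belongs to the separate companion observation, so its unresolved hand-wave does not affect the proof of this statement.
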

\begin{observation}
     The set of $k$-colorable graphs, the set of awesome graphs, and the set of okay graphs (\Cref{def:awesomeokay}) are all vertex-symmetric.
\end{observation}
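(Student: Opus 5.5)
The statement to prove is that the set of $k$-colorable graphs, the set of awesome graphs, and the set of okay graphs (\Cref{def:awesomeokay}) are each vertex-symmetric. Concretely: for every permutation $\pi$ of $[n]$ and every graph $G$ in one of these sets, the relabeled graph $\pi(G)$ (edge set $\{\pi(u)\pi(v): uv\in E(G)\}$) lies in the same set. Since $\pi^{-1}$ is also a permutation, this gives equality of sets, and hence invariance of the uniform distribution on each set, as used in \Cref{obs:invariant}.

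The approach is to check that every defining property is phrased only through adjacency and set cardinalities, so each witness transports along $\pi$.

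\textbf{$k$-colorability.} If $C:[n]\to[k]$ is a proper coloring of $G$, then $C\circ\pi^{-1}$ is a proper coloring of $\pi(G)$.

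\textbf{Linked.} Suppose $G$ is $(\sigma,B)$-linked with witness coloring $(A_1,\dots,A_k)$. Take $(\pi(A_1),\dots,\pi(A_k))$ as the witness for $\pi(G)$; it is still valid and approximately balanced, because class sizes are preserved.
\begin{itemize}
\item Any admissible family $D_i'\subseteq\pi(A_i)$ has the form $\pi(D_i)$ with $D_i\subseteq A_i$ of the same size.
\item Likewise any $S_i'\subseteq\pi(A_i)$ has the form $\pi(S_i)$ with $S_i\subseteq A_i$ of the same size.
\item For every vertex $v$, $\Gamma_{\pi(G)}(\pi(v))\cap\pi(X)=\pi(\Gamma_G(v)\cap X)$. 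So $\pi(v)$ is strongly $\pi(D)$-inferred in $\pi(G)$ if and only if $v$ is strongly $D$-inferred in $G$, and the same holds for $S$.
\end{itemize}
Hence the 99\% condition and the ``all but $B$ vertices'' condition transfer with identical counts.

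\textbf{Many good cores.} The map $P\mapsto\pi(P)$ is a bijection on $\binom{[n]}{\nu}$, and it gives an isomorphism $G_P\cong\pi(G)_{\pi(P)}$. Balance of the color classes, the degree conditions, and the codegree conditions in \Cref{def:goodCore} are all isomorphism-invariant. So the fraction of size-$\nu$ sets that are good cores is the same in $G$ and in $\pi(G)$, and stays at least $2/3$.

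\textbf{Main obstacle.} The only subtle point is the good-core condition that \textsc{Kucera1995} produces a coloring within time $O(|H|^2/k)$. A deterministic algorithm's running time is not obviously label-invariant. I would handle this in one of two ways:
\begin{itemize}
\item Read that condition, as the paper does in the proof of \Cref{lem:num-good-subgraphs}, as a property of the induced subgraph viewed as an unlabeled graph on its own vertex ordering. Under $\pi$, the induced subgraph $G_P$ is presented to the algorithm with vertices listed in the order inherited from $P$. If $\pi$ is not order-preserving on $P$, I would instead define the condition canonically, e.g.\ by running the algorithm on the isomorphism class with a canonical labeling.
\item Alternatively, note that the uses of \Cref{obs:invariant} only need invariance of the distribution under the specific permutations employed. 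These can be taken order-preserving within each block $P_i$ of $\mathcal P$, in which case the timing condition transfers verbatim.
\end{itemize}

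With that settled, awesome and okay are each conjunctions of vertex-symmetric properties, so both sets are vertex-symmetric.
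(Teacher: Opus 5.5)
\textbf{Gap in the time-cutoff clause.} The paper states this observation without proof and relies tacitly on exactly your transport argument. That argument is complete for $k$-colorability, for linkedness, and for the balance, degree, and codegree clauses of \Cref{def:goodCore}. You are right that the remaining clause is not evidently preserved under relabeling: it requires \textsc{Kucera1995} to halt within $O(|H|^2/k)$ steps on $G_P$ as presented in the label order of $P$, and the paper does not address this. However, neither of your patches closes the gap.

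Your second patch fails for the following reason. Applying \Cref{obs:invariant} to the uniform distribution on awesome graphs requires $\pi(G)$ to be awesome for every awesome $G$ and every $\pi$ in the support. Awesomeness depends on the good-core status of all $\binom{n}{\nu}$ sets, not only of the blocks $P_i$. A permutation that preserves order on every $\nu$-set is the identity, since every pair lies in some $\nu$-set. By contrast, the permutations used in \Cref{lem:derand:earlyCore} and \Cref{lem:derand:route} are uniformly random outside a small fixed set.

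Your first patch decouples the definition from the algorithm:
\begin{itemize}
\item If only the definition uses a canonical labeling, then ``many good cores'' no longer implies that \textsc{Build-Good-Core}, which runs \textsc{Kucera1995} on the native order, succeeds on a $2/3$ fraction of sampled sets.
\item If \textsc{Build-Good-Core} also canonizes, it must solve graph canonization on $\nu$ vertices, for which no $\poly(\nu)$-time algorithm is known.
\item In either case \Cref{lem:num-good-subgraphs} is lost. A canonically relabeled graph is not distributed as $\mathcal{B}_{\nu,k,H}$ for a fixed partition $H$, so \Cref{thm:kuvcera} says nothing about its running time.
\end{itemize}

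\textbf{A repair that works.} Since the literal definition may genuinely fail to be invariant, some definitional change is needed. One option is to make the presentation order part of the core:
\begin{itemize}
\item Take cores to be ordered $\nu$-tuples $(p_1,\ldots,p_\nu)$ of distinct vertices.
\item Run \textsc{Kucera1995} on the graph on $[\nu]$ in which $i\sim j$ iff $p_ip_j\in E(G)$.
\item Say $G$ has many good cores if at least $2/3$ of all ordered tuples are good.
\end{itemize}
Then $(p_1,\ldots,p_\nu)\mapsto(\pi(p_1),\ldots,\pi(p_\nu))$ is a bijection on ordered tuples under which the graph handed to \textsc{Kucera1995} is literally the same. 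So the fraction of good cores is invariant under every $\pi$, and your transport argument finishes the proof.

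Nothing else in the paper breaks under this change:
\begin{itemize}
\item The randomized algorithm samples a uniformly random ordered tuple.
\item In \Cref{lem:num-good-subgraphs}, condition on which positions lie in which class. The presented graph is then exactly $\mathcal{B}_{\nu,k,H}$ with $H(i)=A(p_i)$ a fixed weakly balanced partition of $[\nu]$, so \Cref{thm:kuvcera} and Markov apply as before.
\item The McDiarmid bound keeps the same ratio of bounded differences to total count.
\item In the deterministic algorithm, fix an order on each $P_i$. In \Cref{lem:derand:earlyCore}, draw $Q_i$ as a random ordered tuple mapped coordinatewise onto $P_i$.
\end{itemize}
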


Let $\Pi$ be the uniform distribution on permutations $\pi:[n]\ra[n]$. Let $\Pi_{L}$ be the uniform distribution on automorphisms of $L$ (i.e., permutations that fix every element of $L$).
We use the following model of random generation of permutations that fix certain vertices:
\begin{observation}\label{obs:genperm}
    For every $L\subseteq [n]$ and automorphism $\tau$ of $L$, the following procedure generates a random automorphism $\pi$ of $L$. Initialize $A_0=\emptyset$. For $i=1,2,\ldots$ ranging over the elements of $[n]\setminus L$ in increasing order, draw a random $a\in [n]\setminus (A_{i-1}\cup L)$, set $\pi(a)=i$, and set $A_i=A_{i-1}\cup \{a\}$. Finally, output $\pi'=\tau\circ \pi$.
\end{observation}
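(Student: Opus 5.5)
The plan is to verify directly that the output map $\pi'=\tau\circ\pi$ is uniformly distributed over the set $\Pi_L$ of permutations fixing every element of $L$. I will read ``automorphism of $L$'' exactly as the paper defines it, namely a permutation of $[n]$ that fixes $L$ pointwise. First I will check that the procedure really outputs an element of $\Pi_L$. Then I will show that $\pi$ itself is uniform over $\Pi_L$. Finally I will use the group structure to absorb the fixed $\tau$.

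\emph{Step 1 (well-definedness).} The procedure assigns $\pi(a)=i$ for each $i\in[n]\setminus L$ in increasing order. Each $a$ is drawn from $[n]\setminus(A_{i-1}\cup L)$, and this set has exactly $|[n]\setminus L|-(i\text{'s rank})+1\ge 1$ elements, so every draw is possible. The chosen elements are distinct, so $\pi$ restricted to $[n]\setminus L$ is a bijection onto $[n]\setminus L$. Extending $\pi$ by the identity on $L$, which the procedure leaves implicit and which I will make explicit, gives $\pi\in\Pi_L$.

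\emph{Step 2 (uniformity of $\pi$).} I will fix any target $\sigma\in\Pi_L$ and compute the probability of the event $\pi=\sigma$. This event occurs exactly when, at each step $i$, the drawn element is $a=\sigma^{-1}(i)$. Conditioned on the earlier steps agreeing with $\sigma$, this element lies in the available set. The process is therefore a sequential uniform choice without replacement, and the probability is $\prod_{t=1}^{m}\frac{1}{m-t+1}=\frac{1}{m!}$, where $m=n-|L|$. The value does not depend on $\sigma$, and $|\Pi_L|=m!$, so $\pi$ is uniform on $\Pi_L$.

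\emph{Step 3 (composition with $\tau$).} The set $\Pi_L$ is a subgroup of the symmetric group. Left multiplication by the fixed element $\tau\in\Pi_L$ is therefore a bijection of $\Pi_L$ onto itself, so $\tau\circ\pi$ is again uniform on $\Pi_L$.

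No step is a real obstacle. The only point needing care is a definitional one: the procedure only specifies $\pi$ off $L$, and I must make explicit that it fixes $L$ pointwise. I also need to be consistent about whether $\pi(a)=i$ describes $\pi$ or $\pi^{-1}$. This last point does not matter, because inversion is also a bijection of $\Pi_L$, so uniformity holds either way.
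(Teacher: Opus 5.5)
Your proof is correct. The paper states this as an observation with no proof, and your argument is the justification it implicitly relies on: sequential uniform choice without replacement produces each $\sigma\in\Pi_L$ with probability $\prod_{t=1}^{m}\frac{1}{m-t+1}=\frac{1}{m!}$, and left multiplication by the fixed $\tau\in\Pi_L$ is a bijection of this subgroup. This includes your reading of the subscript in $A_{i-1}$ as the rank of $i$ in $[n]\setminus L$. Note also that your Step 2 never uses the increasing order of the targets. That is convenient, because the paper's later applications fill the targets block by block ($P_1,P_2,\ldots$) rather than in increasing label order.
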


\newcommand{\cA}{\mathcal{A}}
 In our deterministic algorithm, we first attempt to find a good core in $\cP$, and otherwise brute force over all sets of size $\nu$ in some arbitrary order. Since this latter stage will take time $\poly(n)\cdot n^{\nu}$, we wish to show that with high probability over $G\sim \cUnk$ there is $P\in \cP$ where $G_{P}$ is a good core. We do so for the set of awesome graphs, whose uniform distribution we denote by $\cA$.

\begin{lemma}\label{lem:derand:earlyCore}
    Let $G\sim \cA$. The probability that there exists $P\in \cP_{\le t}$ such that $G_{P}$ is a good core is at least $1-\exp(-\Omega(t))$.
\end{lemma}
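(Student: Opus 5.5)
We use the vertex symmetry of $\cA$ (\Cref{obs:invariant}) to replace the fixed disjoint sets $P_1,\ldots,P_t$ by random disjoint sets. Then we argue that, conditional on $G$, each of them is a good core with probability about $2/3$, almost independently. Concretely, let $\pi\sim\Pi$ and $G\sim\cA$. By \Cref{obs:invariant}, $\pi(G)$ has distribution $\cA$. So the probability that no $P\in\cP_{\le t}$ induces a good core in $G$ equals the probability that no $\pi^{-1}(P_1),\ldots,\pi^{-1}(P_t)$ induces a good core in $G$. This holds because being a good core is defined from the induced labeled subgraph (with the lexicographic tie-breaking irrelevant to the definition in \Cref{def:goodCore}), hence is invariant under relabeling. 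Now fix any awesome $G$; it suffices to bound this probability over $\pi$ alone. The sets $Q_\ell:=\pi^{-1}(P_\ell)$ form a uniformly random sequence of $t$ disjoint $\nu$-subsets of $[n]$.

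Generate the $Q_\ell$ sequentially, as in \Cref{obs:genperm}: conditional on $Q_1,\ldots,Q_{\ell-1}$, the set $Q_\ell$ is a uniform $\nu$-subset of $[n]\setminus(Q_1\cup\cdots\cup Q_{\ell-1})$. Since $G$ is awesome it has many good cores (\Cref{def:manyGoodCores}), so at least $\frac23\binom{n}{\nu}$ of all $\nu$-subsets are good cores. Removing the at most $t\nu$ already used vertices destroys only those $\nu$-subsets that meet the removed set. Their fraction among all $\nu$-subsets is at most $t\nu^2/n$, which is $o(1)$ for $t\le y_0=\tO(k^2)$ because $k\le n^{1/36-\delta}$ and $\nu=\tO(k^9)$. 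Hence, conditional on any history, $Q_\ell$ is a good core with probability at least $\frac23-o(1)\ge\frac12$.

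Chaining these conditional bounds, the probability that none of $Q_1,\ldots,Q_t$ is a good core is at most $2^{-t}=\exp(-\Omega(t))$. Averaging over $G\sim\cA$ gives the lemma.

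For $t$ larger than about $n/(\nu^2)$, the $o(1)$ loss fails. However, the lemma is only needed for $t\le y_0$, and for larger $t$ the bound at $t=y_0$ already suffices, or we can cap $t$. The step needing care is the symmetry transfer. We must check that conditioning on awesomeness preserves vertex symmetry (the second observation), and that "good core" depends only on the induced subgraph up to relabeling. The latter is delicate because Kučera's algorithm's runtime might depend on labels. I would handle this by either defining goodness label-invariantly (e.g., taking the maximum over labelings, or noting that the counting in \Cref{def:manyGoodCores} is done in the same labeled sense), or by letting $\pi$ act on $G$ directly and counting good cores of $\pi(G)$ among the fixed sets.
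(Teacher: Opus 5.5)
Your proof is correct and is essentially the paper's argument: reduce via vertex symmetry (\Cref{obs:invariant}) to a fixed awesome $G$ and a random $\pi$, generate the preimages $Q_1,\ldots,Q_t$ of the disjoint cores sequentially as in \Cref{obs:genperm}, observe that the at most $t\nu$ already-used vertices eliminate only a $t\nu^2/n=o(1)$ fraction of the at least $\frac23\binom{n}{\nu}$ good cores, and chain the conditional failure probabilities to get $\exp(-\Omega(t))$. The label-dependence subtlety you flag (Ku\v{c}era's runtime on a relabeled core) is also silently glossed over in the paper's proof; it is cleanly handled by using the freedom in \Cref{obs:invariant} to choose a permutation distribution that maps each random $Q_\ell$ onto $P_\ell$ order-preservingly, so that $\pi(G)_{P_\ell}$ and $G_{Q_\ell}$ coincide as ordered graphs.
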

\begin{proof}
    \newcommand{\cG}{\mathcal{G}}
    By~\Cref{obs:invariant}, it suffices to prove that for an arbitrary awesome graph $G$ this
    event holds for $\pi(G)$ over a random permutation $\pi\sim \Pi$ with the claimed probability. We generate $\pi$ in $t$ stages as follows. We first initialize $Q=\emptyset$, then at stage $i$ we select a set $Q_i\subseteq [n]\setminus Q$ of $\nu$ random vertices, set $\pi(Q_i)=P_i$ (which maps the freshly chosen vertices onto the $i$-th fixed core $P_i\in\cP$), and set $Q=Q\cup Q_i$. This procedure generates a random permutation by~\Cref{obs:genperm}.
    
    Let $\cG\subseteq \binom{[n]}{\nu}$ be the set of good cores present in $G$. By definition of awesome we have $|\cG|\ge (2/3)\binom{n}{\nu}$. We prove that for an arbitrary previously fixed set $Q$ and stage $i$ we have $Q_i\in \cG$ with probability at least $(2/3)-(0.01)$, which suffices to establish the result. This follows as the number of good cores incident to an arbitrary vertex $v$ is at most $\binom{n}{\nu-1}$, so the number of remaining good cores not incident to any vertex in $Q$ is at least
    \[
    |\cG|-|Q|\cdot \binom{n}{\nu-1}\ge (0.999)|\cG|. \qedhere
    \]
\end{proof}

Next, we prove that we can find paths to color all vertices with high probability by simply examining the first $O(\log n)$ vertices outside the core.
\begin{lemma}\label{lem:derand:route}
    Let $\cA_{Q}$ be the uniform distribution over the set of awesome graphs conditioned on a fixed, arbitrary assignment $Q$ to the subgraph $\cP_{\le t}$.
    If $P \subseteq Q$ is a good core, with probability at least $1-nk\cdot 2^{-\Omega(\ell/k)}$ over $G\sim \cA_{Q}$ the following occurs. Let $D_1, D_2, \dots, D_k$ be defined as before in terms of $P$ (as the outputs of \BuildGoodCore on $P$). Then for every $v\in V\setminus P$, for all but one $j\in [k]$ there are $u_j\in \cL_{\le \ell}\cap \Gamma(v)$ such that $u_j$ is $D$-inferred and colored $j$.
\end{lemma}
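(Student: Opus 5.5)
Because the awesome graphs are vertex-symmetric, I would prove the lemma by averaging over a random relabeling, exactly as in \Cref{lem:derand:earlyCore}. The condition fixes $Q$, so the permutation has to fix the vertices of $\cP_{\le t}$. By \Cref{obs:invariant,obs:genperm}, it suffices to fix an arbitrary awesome $G$ that is consistent with $Q$ and show the event holds for $\pi(G)$ with $\pi\sim\Pi_L$, where $L$ is the vertex set of $\cP_{\le t}$. Since $P$ is a good core, \Cref{lem:core-unique-color} shows its coloring agrees with the approximately balanced coloring $A$ witnessing that $G$ is $(\sigma_A,0)$-linked. Hence $D_i\subseteq A_i$ and $|D_i|\ge \sigma_A$. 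By \Cref{itm:core-inferred}, at least $99\%$ of each $A_i$ is strongly $D$-inferred, and in particular $D$-inferred with the correct color. Let $I_j\subseteq A_j$ be the set of these $D$-inferred vertices, so $|I_j|\ge 0.99|A_j|$. Moreover, $\cup_j I_j$ is an $S$-collection of the kind \Cref{itm:s-inferred} allows. With $B=0$, every vertex $v$ is therefore strongly $I$-inferred: for all but one $j$, $|\Gamma(v)\cap I_j|\ge 0.01|I_j|=\Omega(n/k)$.

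Next I would generate $\pi$ sequentially, following \Cref{obs:genperm}: the preimages of the positions $\cL_{\le\ell}$ are chosen one at a time, each uniformly from the vertices not yet used and outside $L$. I fix $v\in V\setminus P$ (its preimage can be revealed first, costing one vertex) and a color $j$ with $|\Gamma(v)\cap I_j|=\Omega(n/k)$. Each draw lands in $\Gamma(v)\cap I_j$ with probability at least $\Omega(n/k)-|L|-\ell$ divided by $n$. Because $|L|+\ell\le t\nu+\ell=o(n/k)$ in our parameter range, this is $\Omega(1/k)$, and the bound holds conditionally on all earlier draws. So the probability that none of the $\ell$ draws hits is at most $(1-\Omega(1/k))^{\ell}=2^{-\Omega(\ell/k)}$. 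A union bound over the $n$ choices of $v$ and $k$ choices of $j$ gives the stated $1-nk\cdot 2^{-\Omega(\ell/k)}$.

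The main obstacle is being precise about whether the relabeling preserves the conditioning and the identity of $D$. The graph on $\cP_{\le t}$ is fixed because $\pi$ fixes $L$ pointwise. Hence $P$, the output of \BuildGoodCore, and $D$ are unchanged, and $D$-inference and "colored $j$" are preserved under relabeling. I also need that $\cA_Q$ is invariant under $\Pi_L$; this holds because both awesomeness and consistency with $Q$ are invariant under permutations that fix $L$. A smaller point: $I_j$ should be defined as a subset of $A_j$ that avoids $L$ (and $v$), which removes only $o(n/k)$ vertices and so does not affect the density bound.
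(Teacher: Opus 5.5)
Your proposal is correct and follows essentially the same route as the paper. Both arguments use invariance of $\mathcal{A}_Q$ under permutations fixing $\mathcal{P}_{\le t}$; the paper also fixes $v$, where you reveal its preimage first, which is equivalent. Both then use unique colorability of the good core to get $D_i\subseteq A_i$, the linked property to give $v$ $\Omega(n/k)$ neighbors among the $D$-inferred vertices of each other class, a $(1-\Omega(1/k))^{\ell}$ miss probability, and a union bound over $v$ and $j$. Your use of the $0.01$ density that \Cref{def:linked} actually guarantees, where the paper writes $0.4$, is the more accurate justification of that step.
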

\begin{proof}
    It suffices to prove the bound $1-k\cdot 2^{-\Omega(\ell/k)}$ for a single arbitrary $v\notin P$, so fix such a $v$ and $Q$ such that $P\subseteq Q$ and $G_P$ is a good core.
    
    Next, the distribution $\cA_{Q}$ is invariant under permutations $\Pi'$ that fix $\cP_{\le t},v$, so it suffices to prove the bound holds for $\pi(G)$ where $\pi\sim \Pi'$ and $G\in \cA_{Q}$ is arbitrary. Fix such a $G$. 
    
    Now let $D_1,\ldots,D_k$ be defined as before in terms of $P$. Since $P$ is a good core it is uniquely colorable, and hence we have $D_i\subseteq A_i$ for the coloring $(A_1,\ldots,A_k)$ for which $G$ is $(\sigma_A,0)$-linked, and this coloring is approximately balanced. Thus, let $S_1,\ldots,S_k$ be the sets with $S_i\subseteq A_i$ that are $D$-inferred, and because $G$ is awesome we have $|S_i|\ge 0.99|A_i| =\Omega(n/k)$.

    WLOG $v\in A_i$. For $j\neq i$, let $R_j = S_j\cap \Gamma(v)\cap \cP_{\le t}^c\cap \{v\}^c$. Observe that no element of $R_j$ is fixed by $\Pi'$, so it suffices to prove that some element of $R_j$ is mapped to $\cL$ by $\pi$ for every $j$ with the desired probability.

    Fix a single such $j$. We have that
    \begin{align*}
        |R_j| &\ge |S_j\cap \Gamma(v)| - t\nu\\
        &\ge |S_j|(0.4) - t\nu\\
        &= \Omega(n/k)
    \end{align*}
    where the first line uses the bound on the core size, the second line uses that $G$ is awesome, and the third line uses that $|S_j|=\Omega(n/k)$ and $t\nu=o(n/k)$.
    
    The probability that $\pi$ maps no elements of $R_j$ to $\cL_{\le \ell}$ is therefore at most $(1-\Omega(n/k)/n)^\ell=\exp(-\Omega(\ell/k))$ as claimed, where we generate $\pi$ as in \Cref{obs:genperm}.
\end{proof}

We can then prove the deterministic algorithm has the desired runtime.

\mainDet*
\begin{proof}
    We assume $k\le n^{1/36-\delta}$ for a positive constant $\delta>0$. Correctness is immediate from the same analysis as~\Cref{thm:mainRand}, since we only report a coloring if it is valid and otherwise use brute-force to find a valid coloring.

    \paragraph{Bounding the runtime on awesome graphs.}
    We call an awesome graph $y$-fast if some $P\in \cP_{\le y}$ is a good core and such a core colors all vertices via $\cL_{\le y\cdot k\cdot \log n}$.

    \begin{claim}\label{clm:yfast}
        $G\sim \cA$ is $y$-fast with probability $1-\exp(-\Omega(y))$.
    \end{claim}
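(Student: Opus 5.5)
The claim follows by combining \Cref{lem:derand:earlyCore} and \Cref{lem:derand:route} with a union bound. The event ``$G$ is $y$-fast'' has two parts: (a) some $P\in\cP_{\le y}$ is a good core, and (b) for such a core, every vertex outside it is colored using witnesses in $\cL_{\le \ell}$ with $\ell=\lceil yk\log n\rceil$. I bound the failure probability of each part separately.

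For part (a), I apply \Cref{lem:derand:earlyCore} with $t=y$. For $G\sim\cA$, the probability that no $P\in\cP_{\le y}$ induces a good core is at most $\exp(-\Omega(y))$.

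For part (b), I condition on the assignment $Q$ of the subgraph spanned by $\cP_{\le y}$ (the edges inside and among these sets).
\begin{itemize}
\item On the event from (a), let $P$ be the first good core in $\cP_{\le y}$. This $P$ is a function of $Q$ alone, because whether $G_P$ is a good core depends only on edges inside $P$.
\item Since $\cA$ is a mixture of the distributions $\cA_Q$, it suffices to bound the failure probability uniformly over each $Q$ for which such a $P$ exists.
\item For each such $Q$, \Cref{lem:derand:route} applied with $t=y$ and $\ell=\lceil yk\log n\rceil$ gives failure probability at most $nk\cdot 2^{-\Omega(\ell/k)}=nk\cdot 2^{-\Omega(y\log n)}$.
\end{itemize}
If $y\log n$ exceeds a large constant multiple of $\log(nk)$, this is at most $\exp(-\Omega(y\log n))\le\exp(-\Omega(y))$. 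For small $y$, the claimed bound $1-\exp(-\Omega(y))$ is only meaningful up to constants, so the hidden constant can absorb it. Alternatively, note that $nk\le n^2$, so $nk\cdot 2^{-c\,y\log n}\le 2^{-(cy-2)\log n}$, which is at most $\exp(-\Omega(y))$ once $y\ge 3/c$; for smaller $y$ the statement holds trivially after adjusting the constant.

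It remains to check that the witnesses found by \Cref{lem:derand:route} match what \ManyAdjDet requires. The lemma provides $u_j\in\cL_{\le\ell}\cap\Gamma(v)$ that are $D$-inferred and colored $j$. Such a $u_j$ lies in $S_j$ as produced by \GoodCoreAdjNaive, because that subroutine colors exactly the $D$-inferred vertices. Hence \ManyAdjDet leaves only one color for $v$, and every vertex is colored. A union bound over (a) and (b) gives total failure probability $\exp(-\Omega(y))$.

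I expect the main obstacle to be the conditioning step. The good core $P$ in (b) must be chosen as a function of $Q$ only, so that \Cref{lem:derand:route} applies under $\cA_Q$. This requires verifying that the good-core property of $G_P$ depends only on edges inside $P\subseteq\cP_{\le y}$, which holds by \Cref{def:goodCore}.
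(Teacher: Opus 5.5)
Your proposal is correct and follows the paper's proof. Both condition on the assignment $Q$ to $\cP_{\le y}$, use \Cref{lem:derand:earlyCore} to bound the probability that $Q$ contains no good core, apply \Cref{lem:derand:route} with $\ell=\lceil yk\log n\rceil$ uniformly over the remaining $Q$, and combine the two bounds (the paper multiplies $(1-nk\exp(-\Omega(y\log n)))(1-\exp(-\Omega(y)))$ where you take a union bound). Your explicit absorption of the $nk$ prefactor for small $y$, and your check that the witnesses lie in the sets $S_j$ output by \GoodCoreAdjNaive, are details the paper leaves implicit.
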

    \begin{proof}
        Let $E$ be the event that $G$ is $y$-fast. We have
        \begin{align*}
            \Pr_{G\sim \cA}[E] &= \sum_{Q:P\subseteq Q \text{ is good}}\Pr_{G\sim \cA_{Q}}[E|G_{\cP_{\le y}}=Q]\Pr_{G\sim\cA}[G_{\cP_{\le y}}=Q]\\
            &\ge \min_{Q:P\subseteq Q \text{ is good}}\{\Pr_{G\sim \cA_{Q}}[E|G_{\cP_{\le y}}=Q]\}\cdot \sum_{Q:P\subseteq Q \text{ is good}}\Pr_{G\sim\cA}[G_{\cP_{\le y}}=Q]\\
            &=(1-nk\cdot \exp(-\Omega(y \log n)))(1-\exp(-\Omega(y))) = 1-\exp(-\Omega(y))
        \end{align*}
        where the final line follows from~\Cref{lem:derand:earlyCore} and~\Cref{lem:derand:route}.
    \end{proof}
    The following claim is straightforward.
    \begin{claim}
        If $G$ is $y$-fast, it is colored in stage $y$ of the loop. Moreover, stage $y$ has runtime $\tO(y\nu^3/k+y^2kn)$.
    \end{claim}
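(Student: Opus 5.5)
The claim has two parts: a $y$-fast graph is colored by stage $y$ of the first outer loop, and stage $y$ costs $\tO(y\nu^3/k+y^2kn)$. I would prove them separately, each by a direct walk through \Cref{alg:main-det}.

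\emph{Correctness of stage $y$.} By the definition of $y$-fast, some $P\in\cP_{\le y}$ induces a good core $G_P$. Stage $y$ enumerates all of $\cP_{\le y}$, so it reaches this $P$ unless it has already returned a valid coloring earlier, which is fine for the claim. \BuildGoodCore accepts $G_P$ because every check it performs is exactly a condition of \Cref{def:goodCore}. It returns sets $D_i$ of size $c_\sigma\log k$. By \Cref{lem:core-unique-color} the coloring of $G_P$ is unique, so $D_i\subseteq A_i$ for the coloring with respect to which $G$ is linked.

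\GoodCoreAdjNaive then puts every $D$-inferred vertex $u$ into its correct $S_i$. The $y$-fast condition says every remaining vertex $v$ has, for all but one color $j$, a $D$-inferred neighbor $u_j\in\cL_{\le \lceil yk\log n\rceil}$ colored $j$. Since each such $u_j$ already lies in $S_j$, \ManyAdjDet run with parameter $y$ eliminates every color except one for each uncolored $v$. So it never returns \textsc{FAIL} and colors all vertices, and the stage returns. The coloring is proper by the same uniqueness argument as in the correctness part of \Cref{thm:mainRand}.

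\emph{Runtime of stage $y$.} Stage $y$ runs at most $y$ inner iterations, one per $P\in\cP_{\le y}$. Each costs:
\begin{itemize}
\item \BuildGoodCore on a set of size $\nu$: \textsc{Kucera1995} is capped at $O(\nu^2/k)$ steps, and the codegree checks cost $O(\nu^3/k)$ (pairs within a class times the other classes), for $O(\nu^3/k)$ in total.
\item \GoodCoreAdjNaive: each of the $n$ vertices is tested against $|D|=k\cdot c_\sigma\log k$ core vertices, for $O(nk\log k)$.
\item \ManyAdjDet with parameter $y$: each vertex scans $\lceil yk\log n\rceil$ vertices of $\cL$, and for each scanned vertex we look up its class membership in $O(1)$ using an array of labels. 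This gives $O(nyk\log n)$.
\end{itemize}
The final check that every vertex is colored costs $O(n)$. Summing over the $y$ iterations gives $\tO(y\nu^3/k+y^2kn)$, as claimed.

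\emph{Main obstacle.} The only delicate point is the $O(1)$ membership test inside \ManyAdjDet. It needs the sets $S_j$ to be stored as a color array, which is built during \GoodCoreAdjNaive at no extra asymptotic cost. The $\log k$ factor in the \GoodCoreAdjNaive cost is absorbed into the $\tO$.
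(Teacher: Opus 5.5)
Your proposal is correct and is the direct walk through \Cref{alg:main-det} that the paper has in mind when it calls this claim straightforward (the paper gives no further argument). The witness core in $\cP_{\le y}$ is reached and accepted, \textsc{Many-Adjacencies-Deterministic} then leaves exactly one color for every uncolored vertex, and each of the $y$ inner iterations costs $O(\nu^3/k)+O(nk\log k)+O(nyk\log n)$.

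One point to tighten: your $O(\nu^3/k)$ count for the codegree checks presumes color classes of size $O(\nu/k)$. \textsc{Build-Good-Core} never verifies this for an arbitrary $P$; it only checks $|C_i|\ge c_\sigma\log k$. The paper asserts the same bound without comment, and two easy repairs exist. You can add a cheap balance check, which \Cref{def:goodCore} requires anyway. Alternatively, you can fall back on the crude bound $O(\nu^3)$ per call; this changes the literal form of the claim but not the final $\tO(nk)$ average, since $\nu^3\ll n$ for $k\le n^{1/37}$.
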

    Finally, note that every awesome graph is colored in worst-case time $W=n^{\nu}\cdot \poly(n,2^{k^2\log k})$ (in particular, it is okay and we show the same statement for okay graphs below). We choose $t,\ell$ and $y_0=O(k^2\log k\log n)$ such that the failure probability in~\Cref{clm:yfast} is at most $1/W$ (equivalently, the success probability is at least $1-1/W$).

    We can then bound the average runtime over awesome graphs. Let $E_y(G)$ be the event that $G$ is $y$-fast, and note that we only reach stage $y$ of the loop on $G$ if $\neg E_{y-1}(G)$ holds. We have:
    \begin{align*}
        \E_{G\sim \cA}[T(G)] &\le \sum_{y\in [y_0]}\tO(y\nu^3/k+y^2kn)\cdot \Pr_{G\sim \cA}[\neg E_{y-1}(G)]+\Pr_{G\sim \cA}[\neg E_{y_0}(G)]\cdot W\\
        &\le \tO(nk+\nu^3/k)+\sum_{y\in [y_0]} \tO(y\nu^3/k+y^2kn)\cdot \exp(-\Omega(y)) +1\\
        &= \tO(nk+\nu^3/k).
    \end{align*}
    
    \paragraph{Bounding the runtime on not-awesome graphs.}
    By an identical argument to the randomized algorithm, every okay graph is colored in time $n^{\nu}\cdot \poly(n,2^{k\log^2 k})$. Thus, the contribution to the overall expectation from non-awesome but okay graphs is $o(1)$ by~\Cref{thm:good_property_probability}. Because all okay graphs are colored in this way, only a $3k^{-2n}$ fraction of graphs reach the brute-force phase, so the contribution to the overall expectation from not-okay graphs is $o(1)$.
\end{proof}

\section{Lower Bound of \ts{$\Omega(n k)$}{nk}}\label{sec:LB}

We now turn to analyzing the runtime that any algorithm must have on average. We prove the following stronger statement.

\begin{theorem}\label{thm:lower-bound}
    Fix $k\le n^{1/3}$. Any (zero-error randomized or deterministic)\footnote{As in our upper bound, we require worst-case correctness, in that over every random string $r$ and graph $G$ the algorithm never returns an incorrect coloring.} algorithm that correctly $k$-colors every $k$-colorable graph must take time $\Omega(n k)$ on every input.\footnote{Our algorithms require access only to the adjacency matrix of the graph. Our lower bound holds in the stronger model where we allow access to an (unsorted) adjacency list for each vertex.} 
\end{theorem}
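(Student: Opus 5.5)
Fix a $k$-colorable input $G$, a random string $r$, and suppose the algorithm halts on $(G,r)$ after $q\le c\,nk$ queries for a small constant $c>0$, outputting a proper coloring $C:[n]\ra[k]$. We derive a contradiction by building a $k$-colorable $G'$ that agrees with $G$ on every queried pair but for which $C$ is improper. Since the algorithm is deterministic once $r$ is fixed, it sees identical answers on $(G',r)$ and outputs $C$ again, contradicting worst-case correctness. Because this holds for every $r$, the running time is $\Omega(nk)$ on every input, and hence also on average. In the adjacency-list model, a revealed list entry counts as revealing one pair; we account for unqueried list positions by letting the adversary append the new edge at the end of an unread list, or by reordering within unread portions.

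\textbf{Step 1 (counting).} Call a vertex \emph{under-determined} if it lies in fewer than $k/4$ queried pairs. The total number of endpoint incidences is $2q\le 2c\,nk$, so at most $8cn$ vertices are not under-determined. Hence at least $n/2$ vertices are under-determined. By pigeonhole, some color class $C_a$ of the output contains at least $n/(2k)$ under-determined vertices. Among these, the number of queried pairs is at most $(n/(2k))\cdot k/4$, whereas the number of pairs is $\binom{n/(2k)}{2}$. For $k\le n^{1/3}$ the latter is far larger, so there exist under-determined $u,v\in C_a$ whose pair $\{u,v\}$ was never queried (and $\{u,v\}\notin E(G)$, since $C$ is proper).

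\textbf{Step 2 (modifying the graph).} Adding the edge $uv$ alone may destroy $k$-colorability, so $G'$ must be built more carefully. Let $G'$ consist of the queried edges of $G$, the edge $uv$, and arbitrary choices on all other unqueried pairs (non-edges, say). This is consistent with every query, and $C$ is improper on $G'$, so it only remains to make $G'$ $k$-colorable. Start from $C$ restricted to $V\setminus\{u\}$; this is proper on the queried edges. Recolor $u$ with a color $b\neq a$ not used by any queried neighbor of $u$. Such a $b$ exists because $u$ has fewer than $k/4<k-1$ queried neighbors. The result is a proper $k$-coloring of $G'$: queried edges at $u$ avoid $b$, the edge $uv$ joins colors $b\neq a$, and no other edges are present.

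\textbf{Main obstacle.} The delicate part is reconciling ``arbitrary unqueried pairs'' with the access model, together with the tightness of the counting in Step 1. In the adjacency-matrix model the argument above is essentially complete. In the adjacency-list model, $G'$ must have lists consistent with the prefixes the algorithm read, including list lengths and degree queries if those are allowed. I would handle this by keeping all unread edges of $G$ that do not touch $u$ and are compatible with the coloring. Concretely, I would delete from $G'$ only those unread edges at $u$ going to color class $b$. To keep the read prefixes valid, I would choose $b$ among the fewer than $k/4$-excluded colors such that the unread list portion of $u$ can be rearranged. This is the step I expect to require care, and it may force $\Omega(k)$ slack in the definition of under-determined.
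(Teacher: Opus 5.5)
Your argument is correct in the pair-query (adjacency-matrix) model and is essentially the paper's proof: bound how many vertices have many queried incidences, pigeonhole the rest into one output color class, find an unqueried same-color pair there, and take the adversary graph to be the queried edges of $G$ plus that pair. The differences are only cosmetic: you use threshold $k/4$ rather than $k-2$; you bound the queried pairs inside the class by per-vertex incidences rather than by $|Q|$, which incidentally only needs $k=O(\sqrt{n})$; and you recolor just $u$ within $C$ instead of extending an arbitrary coloring of $V\setminus\{v,v'\}$ using two free colors.

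The paper's proof likewise handles only queries to pairs, so your adjacency-list paragraph is not needed to match it. You should drop the claim about appending $uv$ to an unread list. With end-of-list or degree answers, $n$ probes returning empty lists already determine the edgeless graph, so an $\Omega(nk)$ bound on every input cannot hold in that model once $k=\omega(1)$.
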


Since we are considering algorithms whose runtime is measured on average but that must be correct in the worst case, this implies an $\Omega(n \cdot k)$ average  runtime lower bound. 
\begin{proof}[Proof of \Cref{thm:lower-bound}]
    Suppose there is an input $G=(V,E)$ such that the algorithm (with random string $r$) halts after time $nk/c$, for a constant $c>1$ to be chosen later, and returns a coloring $C:V\ra [k]$. We prove that this algorithm must color a $k$-colorable graph $G'$ incorrectly. It is clear that the algorithm queries the adjacency matrix of $G$ at most $nk/c$ times; let $Q\subseteq V\times V$ denote the queried entries of the matrix. By an averaging argument (and taking $c$ sufficiently large), at least $2/3$ of the vertices are incident to fewer than $k-2$ queried entries. We denote these deficient vertices $M\subseteq V$. Let 
    \[M_i = M \cap C^{-1}(i)
    \]
    be the subset of deficient vertices the algorithm colors $i$.

    \begin{claim}
        There exists $i\in [k]$ and $v,v'\in M_i$ such that $(v,v')\notin Q$.
    \end{claim}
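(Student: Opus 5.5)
The plan is a counting and pigeonhole argument on the deficient vertices. We use only two facts: $|M| \ge 2n/3$, and every $v\in M$ is incident to fewer than $k-2$ queried pairs in $Q$.

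\emph{Step 1 (a large class).} The sets $M_1,\dots,M_k$ partition $M$. By pigeonhole some index $i$ satisfies
\[
|M_i| \ge |M|/k \ge 2n/(3k).
\]

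\emph{Step 2 (find an unqueried pair).} Fix this $i$. Each vertex of $M_i$ lies in fewer than $k-2$ pairs of $Q$. Hence the number of pairs of $Q$ with both endpoints in $M_i$ is less than $|M_i|(k-2)/2$. On the other hand, $M_i$ has $\binom{|M_i|}{2}$ unordered pairs in total. So it suffices to show
\[
\binom{|M_i|}{2} > \frac{|M_i|(k-2)}{2},
\]
which is equivalent to $|M_i|-1 > k-2$, i.e.\ $|M_i| \ge k$. Then some pair $v,v'\in M_i$ satisfies $(v,v')\notin Q$; we also note $(v',v)\notin Q$, since each incidence is counted from both endpoints.

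\emph{Step 3 (check the size bound).} We need $2n/(3k) \ge k$, i.e.\ $k^2 \le 2n/3$. This holds because $k \le n^{1/3}$ gives $k^2 \le n^{2/3} \le 2n/3$ for all $n$ beyond a small constant. Tiny $n$ can be absorbed into the constant $c$, or handled trivially.

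The only delicate point is the bookkeeping in Step 2. We treat $Q$ as a set of unordered pairs, so a queried entry $(v,v')$ or $(v',v)$ counts once toward the degree of each endpoint; the hypothesis then bounds the $Q$-degree of each $v\in M_i$ inside $M_i$ by $k-3$. In the weaker reading, where only incidences to $M$ are bounded, the same counting still goes through. Beyond that, there is no real obstacle: the argument is a direct double count.
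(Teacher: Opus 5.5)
Your argument is correct, but it uses a different hypothesis from the paper's. The paper never invokes the per-vertex deficiency in this claim. It compares the number of pairs inside $M_i$ with the global query budget,
\[
\binom{|M_i|}{2}\ \ge\ \frac{(2n/3k)^2}{4} \;=\; \frac{n^2}{9k^2}\ >\ \frac{nk}{c}\ \ge\ |Q|,
\]
and this holds only when $cn > 9k^3$. This is the only place where the assumption $k\le n^{1/3}$ is used, and it is why $c$ must be enlarged a second time. You instead use the fact that each vertex of $M_i$ has fewer than $k-2$ queried partners. That reduces the claim to a lower bound on $|M_i|$ alone, with no reference to $|Q|$. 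In fact you do not need the double count: fix a single $v\in M_i$, which has at most $k-3$ queried partners, and you see that $|M_i|\ge k-1$ already suffices. Pigeonhole gives $|M_i|\ge 2n/(3k)$, so your condition holds whenever $k^2\le 2n/3$, a much weaker requirement than $k^3=O(n)$. The other steps of the lower-bound proof do not use $k\le n^{1/3}$: the averaging step that produces $M$ needs only $k\ge 3$ and $c$ large, and the construction of $G'$ uses only the degree bound. So your version of this claim would extend the lower-bound theorem to $k=O(\sqrt{n})$. The paper's global count is a cruder bound that avoids individual degrees at this point, and that is the source of the $n^{1/3}$ restriction.
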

    \begin{proof}
        We have $|M|\ge 2n/3$, so there is some $i$ such that $|M_i|\ge 2n/3k$. Then the number of potential pairs is 
        \[
        \binom{|M_i|}{2} \ge (2n/3k)^2/4 > |Q|
        \]
        where the final inequality follows from choosing $c$ large enough, so some pair must be unqueried.
    \end{proof}
    We fix this pair $v,v'$ and use it to falsify the algorithm. In what follows, for a graph $G$ and queried entries $Q$, let $G_Q$ be the subgraph on the queried entries.
    \begin{claim}
        There exists another $k$-colorable graph $G'$ such that $G'_Q=G_Q$ and yet in every valid coloring $C'$ of $G'$, we have $C'(v)\neq C'(v')$.
    \end{claim}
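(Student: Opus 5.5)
We will build $G'$ by keeping every queried entry of $Q$ exactly as in $G$, putting the edge $(v,v')$ in, and choosing all other unqueried pairs so that $G'$ is $k$-colorable while $v$ and $v'$ still cannot share a color. The freedom comes from the fact that $v,v'\in M$, so each is incident to fewer than $k-2$ queried entries. In particular, each has fewer than $k-2$ queried neighbors in $G$.

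First, fix a proper $k$-coloring $\chi$ of $G_Q$ in which $\chi(v)\neq\chi(v')$. We start from any proper coloring $\chi_0$ of $G$, which exists since $G$ is $k$-colorable, and recolor only $v$ and $v'$. We recolor $v$ with a color unused by its queried neighbors other than $v'$ and different from $\chi_0(v')$. This is possible because fewer than $k-2$ colors are forbidden by neighbors and one more by $\chi_0(v')$, leaving at least one color free. We then check, or adjust $v'$ symmetrically, that the result is proper on $G_Q$. The pair $(v,v')$ is not in $Q$, so distinctness of their colors is a free choice rather than a constraint imposed by $Q$.

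Next, define $G'$ with vertex set $V$. We keep every pair in $Q$ exactly as in $G$, so that $G'_Q=G_Q$. We add the edge $(v,v')$. For each unqueried pair $\{x,y\}\neq\{v,v'\}$, we include it as an edge if and only if $\chi(x)\neq\chi(y)$. Every edge of $G'$ then joins differently colored vertices under $\chi$. For queried pairs this holds because $\chi$ properly colors $G_Q$, and for the pair $(v,v')$ it holds because $\chi(v)\neq\chi(v')$. Hence $G'$ is $k$-colorable. Since $(v,v')$ is an edge of $G'$, every valid coloring $C'$ of $G'$ satisfies $C'(v)\neq C'(v')$, which proves the claim. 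The algorithm with random string $r$ sees identical answers on $G'$, so it outputs $C$. Since $C(v)=C(v')=i$, this output is invalid on $G'$, a contradiction.

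The main obstacle is the first step: producing a proper coloring of $G_Q$ that separates $v$ from $v'$. This is exactly where the deficiency bound of fewer than $k-2$ queried entries is used, and we must take care that recoloring $v$ does not conflict with $v'$ or with $v$'s other queried neighbors. If the model also reveals adjacency lists, as in the footnote, then $G'$ must additionally be consistent with any revealed list entries, and those revealed entries are exactly the ones included in $Q$. Once this coloring is in hand, the construction of $G'$ is routine.
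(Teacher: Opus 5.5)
Your proof is correct and follows essentially the paper's route. You use the fact that $v$ (and $v'$) has fewer than $k-2$ neighbors in the queried subgraph to obtain a proper $k$-coloring of $G_Q$ separating $v$ from $v'$, and then add the unqueried edge $(v,v')$. The differences are cosmetic: you recolor only $v$, starting from a coloring of $G$ (this already gives properness on $G_Q$, so no adjustment of $v'$ is needed), and you fill the remaining unqueried pairs with a complete multipartite pattern, whereas the paper simply takes $G'$ to be $G_Q$ plus the single edge $(v,v')$.
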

    
    \begin{proof}
        First consider the graph $G''$ that contains only the edges present in $G_Q$. It is clear that $G''$ is $k$-colorable (since we only delete edges from the $k$-colorable graph $G$). Moreover, we claim there is a $k$-coloring of $G''$ where $v,v'$ are colored differently. This is because $d_{G''}(v)\le k-2$ and $d_{G''}(v')\le k-2$, so for any valid partial coloring $P:V\setminus \{v,v'\}$ of $G''$ there are at least $2$ colors that are not assigned to any of their neighbors (and so we can extend $P$ to a full coloring $P'$ assigning different colors to them).
        Then let $G'$ be this graph with edge $(v,v')$ added. This graph is still $k$-colorable since $P'$ remains a valid coloring, and obviously any coloring of $G'$ must assign $v$ and $v'$ different colors since they are neighbors.
    \end{proof}
    But the algorithm (on the same random string $r$) on this graph $G'$ will act identically and hence assign the same color to $v$ and $v'$, which violates the worst-case correctness guarantee.
\end{proof}

\bibliography{bibliography}

\appendix

\section{Local Computation Algorithm Implementation}\label{sec:LCA-implementation}

In this section, we give a local computation algorithm (LCA) implementation of our randomized algorithm (\Cref{alg:main-rand}). At a high level, LCAs give local access to a global, consistent solution to a problem -- in this case, a $k$-coloring. We begin with the definitions of LCAs and average-case LCAs. Then, we show how to implement our randomized algorithm as an average-case LCA.

\subsection{Definitions}

\begin{definition}[Local computation algorithm \cite{DBLP:conf/innovations/RubinfeldTVX11, DBLP:conf/soda/AlonRVX12}]
    A local computation algorithm (LCA) $\mathcal{A}$ for a problem $\Pi$ is an oracle satisfying the following properties. $\mathcal{A}$ has probe access to the input (call it $G$), a sequence of random bits ($\vec{r}$), and local memory. Given any admissible query $q$ to the output of problem $\Pi$ on $G$, $\mathcal{A}$ is tasked with using only oracle access (``probes'') to $G$, random bits from $\vec{r}$, and its local memory to answer $q$. After giving the answer to any query, the LCA $\mathcal{A}$ erases its local memory, including $q$ and the response. All of the responses that $\mathcal{A}$ gives to queries must be consistent with one global, valid solution $X$ to the problem $\Pi$ on $G$.
\end{definition}

We now set up the definition of the ``probe complexity'' of $\mathcal{A}$. The probe complexity of an LCA is defined as the maximum over all inputs $G$ and maximum over all queries to $G$ of the expected number of probes, over the choice $\vec{r}$ of randomness. More formally:

\begin{definition}[Probe complexity of an LCA]
   Consider an LCA $\mathcal{A}$ which has probe access to input $G$. First, on any query $q$, let $T_{\mathcal{A}}(G, q)$ be the expected (over the choice of $\vec{r}$) number of probes $\mathcal{A}$ must make to answer $q$ on $G$. Let $T_{\mathcal{A}}(G) = \max_q T_{\mathcal{A}}(G, q)$.

   The LCA $\mathcal{A}$ has probe complexity $T(n)$ if the maximum value of $T_{\mathcal{A}}(G)$ over any input $G$ of size $n$ is $T(n)$.
\end{definition}

The paper \cite{beyond_worst_case_LCA} defined average-case local computation algorithms as follows.

\begin{definition}[Average-case local computation algorithm \cite{beyond_worst_case_LCA}]
    An oracle $\mathcal{A}$ is an average-case local computation algorithm for problem $\Pi$ and a distribution $\mathcal{G}$ over size-$n$ objects if, with probability at least $1 - \frac{1}{n}$ for $G \sim \mathcal{G}$, $\mathcal{A}$ with probe access to $G$ is an LCA for $\Pi$ on $G$.

    An average-case LCA $\mathcal{A}$ is said to have average-case probe complexity $T(n)$ if $T(n) = \mathbb{E}_{G \sim \mathcal{G}}[T_{\mathcal{A}}(G)].$ It has average-case time $t(n)$ if the expected (over $G \sim \mathcal{G}$) average time (over queries $u$) per query is $t(n)$.
\end{definition}

The definition of average-case probe complexity takes the average over all inputs of the maximum number of expected (over $\vec{r}$) probes over all queries. This definition is much stronger than the interpretation of ``average-case'' that would take a maximum over queries of the average probe complexity over all graphs.

We consider LCAs that \textit{never} give access to incorrect solutions to the problem $\Pi$ on input $G$. For our problem, this corresponds to never giving access to a bad or improper partial coloring. On each vertex, the algorithm either returns a coloring corresponding to a partial proper coloring, or returns \textsc{FAIL}. We define this as follows:

\begin{definition}[Zero-error average-case local computation algorithm]
    An oracle $\mathcal{A}$ is a zero-error average-case local computation algorithm for problem $\Pi$ and a distribution $\mathcal{G}$ over size-$n$ objects if it is an average-case LCA for $\Pi$ and $\mathcal{G}$ and the algorithm \textit{never} gives access to an incorrect/improper output for $\Pi$. It always gives access to at least a correct partial output and returns \textsc{FAIL} on the remaining vertices.
\end{definition}

Since a zero-error average-case LCA is an average-case LCA, it must give access to a full correct output for $\Pi$ with probability at least $1 - \frac{1}{n}$ for $G \sim \mathcal{G}$.

\subsection{Our Results}

In this section, we give our results regarding average-case LCAs, the LCA itself, and the proof of its correctness and probe complexity.

\begin{theorem}\label{thm:LCA}
    Assume $2 \le k \le n^{1/37}$. There is a zero-error average-case local computation algorithm for the uniform distribution over $k$-colorable graphs with average-case probe complexity $O(k^9 \log k)$ and average-case time $O(k^{26} \mathrm{polylog} (k))$.
\end{theorem}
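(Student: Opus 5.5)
The plan is to simulate one run of the first loop of \Cref{alg:main-rand} locally. All randomness comes from the shared string $\vec r$, indexed so that every random choice depends only on $(G,\vec r)$ and a vertex name, never on which query is asked. Then all answers are restrictions of one global partial coloring, and they are proper by the correctness argument of \Cref{thm:mainRand}.

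Concretely, $\vec r$ encodes a sequence of candidate cores $P_1,P_2,\ldots\in\binom{V}{\nu}$ and, for every vertex $w$, two private streams $r^{(2)}_w$ and $r^{(3)}_w$. On query $u$, the LCA proceeds in three steps.
\begin{enumerate}
\item It runs \BuildGoodCore on $G_{P_1},G_{P_2},\ldots$ until the first index $i^*$ at which it succeeds, capping at $O(\log n)$ attempts before returning \textsc{FAIL}. The index $i^*$ is query-independent, so every query uses the same sets $D_1,\ldots,D_k$.
\item It runs \textsc{Local-}\GoodCoreAdj on $u$ with stream $r^{(2)}_u$. This decides whether $u\in S_j$.
\item If step 2 fails, it runs \textsc{Local-}\ManyAdjRand on $u$, driven by $r^{(3)}_u$. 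Each time it samples a vertex $w$, membership $w\in S_j$ is decided by running \textsc{Local-}\GoodCoreAdj on $w$ with stream $r^{(2)}_w$. Sampling from $S_j$ is implemented by rejection: draw $w$ from the whole vertex set, compute its Phase-2 label, and keep it only if that label is $j$.
\end{enumerate}

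Correctness (zero error) follows the proof of \Cref{thm:mainRand}. A core accepted by \BuildGoodCore is uniquely colorable by \Cref{lem:core-unique-color}. Phase-2 labels are then forced, and Phase-3 labels are forced given those. So any non-\textsc{FAIL} answer agrees with every proper coloring of $G$. Consistency across queries holds because each vertex's label is a fixed function of $(G,\vec r)$.

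The remaining requirement is that, with probability at least $1-1/n$ over $G\sim\cU$, every query is answered rather than returning \textsc{FAIL}. For this I would condition on $G$ being awesome, which by \Cref{thm:good_property_probability} fails with probability only $\exp(-\Omega(n/k))$. Given awesomeness, I need three facts:
\begin{itemize}
\item some $P_i$ with $i\le O(\log n)$ is a good core, which holds with probability $1-n^{-\Omega(1)}$ because a $2/3$ fraction of $\nu$-sets are good cores;
\item the resulting sets satisfy $|S_j|\ge 0.9|A_j|$, which holds by the Chernoff argument in the claim $\E_G[T_A]=O(nk)$;
\item every vertex is strongly $S$-inferred, which is exactly $(\sigma_A,0)$-linkedness, so Phase 3 terminates for every $u$.
\end{itemize}
Phase 3 is now an unbounded loop with finite expected length, so it never fails. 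Strictly, the per-graph success probability is over $\vec r$ as well. I would fold this in by redefining the event in the definition of an average-case LCA to be ``awesome and the $\vec r$-events hold'', or by noting that each failure mode only produces \textsc{FAIL}, never an error.

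For complexity, the expected number of core attempts is $O(1)$. Each \BuildGoodCore call costs $O(\nu^3/k)=\tO(k^{26})$ time, which gives the stated time bound. For probes, step 2 costs $O(k)$. In step 3, each sample is a candidate for $S_j$ with probability $\Omega(1/k)$, costs $O(k)$ probes to label, and the loop needs $O(k)$ successful samples in expectation, giving $\mathrm{poly}(k)$ probes overall.

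The hard step is the probe count of \BuildGoodCore: naively it reads all $\binom{\nu}{2}=\Theta(k^{18}\log^2 k)$ edges of $G_P$, which exceeds the claimed $O(k^9\log k)=O(\nu)$. My first attempt is to certify the core using only $O(\nu)$ random edge probes together with the average-case behavior of \textsc{Kucera1995}. If that fails, I would revisit whether the intended accounting charges the core's edges once, amortized over queries.

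A second subtle point is bounding the expectation of the maximum over queries $u$ for non-awesome $G$, where the Phase-3 loop may not terminate. There I would cap the loop at $100nk$ iterations, as in the global algorithm, and charge the worst case $\mathrm{poly}(n)$ against the $\exp(-\Omega(n/k))$ probability of non-awesomeness.
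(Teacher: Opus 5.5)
Apart from one step, your plan is the paper's proof. The core is drawn from the shared tape, so every query rebuilds the same certified core, and forcedness of all labels comes from \Cref{lem:core-unique-color}. The vertex $u$ is labeled either directly against the core or through core-labeled neighbors. The analysis conditions on $G$ being awesome via \Cref{thm:good_property_probability} and charges the $\poly(n)$ worst case to an $\exp(-\Omega(n/k))$ event.

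Your second level rejection-samples into $S_j$, costing about $O(k^3)$ probes. The paper instead samples $v$, tests $(u,v)$ first, and labels only neighbors, costing about $O(k^2\log k)$ probes. Either cost is dominated by the core. However, you must also cap the rejection draws (e.g.\ sample without replacement), since $S_j$ can be empty for a bad $\vec r$ and your expected probe count would then be infinite. Per-vertex streams are harmless but unnecessary: every non-\textsc{FAIL} label is forced by the core, so answers are consistent regardless of the randomness used in Phases 2 and 3.

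The genuine gap is the one you flag yourself: nothing in the plan yields probe complexity $O(k^9\log k)=O(\nu)$, and neither fallback can.
A zero-error LCA must be correct for every $k$-colorable input and every $\vec r$. So the certificate of \Cref{def:goodCore} has to be verified exactly: the properness check plus exact degrees and codegrees, which means reading essentially all $\binom{\nu}{2}$ pairs of $P$. An estimate from $O(\nu)$ random probes certifies nothing in the worst case. Moreover, once $k$ exceeds the hidden constant, $O(\nu)$ revealed edges cannot even form a uniquely $k$-colorable graph on the $\nu$ core vertices, which needs at least $(k-1)\nu-\binom{k}{2}$ edges.
Amortizing the core over queries is excluded by the definition: the LCA erases its memory after each query, and $T_{\mathcal{A}}(G)$ is a per-query maximum.

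You should know that the paper's proof does not close this step either. It states that the expected $O(1)$ sampled cores ``correspond to $O(\nu)$ probes'', i.e.\ it charges one probe per core vertex. In fact \textsc{Build-Good-Core} queries all $\binom{\nu}{2}$ pairs of $P$ on every query. So what your argument (and the paper's) actually proves is average-case probe complexity $O(\nu^2)=O(k^{18}\log^2 k)$, together with the claimed time bound $O(\nu^3/k)$. That is still $\poly(k)$, as stated in the introduction. Reaching $O(k^9\log k)$ would need a genuinely sparser certificate of unique colorability, which neither proof supplies.
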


\paragraph{Relation to worst-case LCAs.} LCAs for $(\Delta + 1)$-coloring 
have been previously considered \cite{DBLP:conf/innovations/RubinfeldTVX11, DBLP:conf/podc/ChangFGUZ19, DBLP:journals/corr/abs-2103-10990, DBLP:conf/icalp/DorobiszK23}. 
However,  even for 2-coloring, there are no
worst-case $o(n)$-time LCAs for $k$-coloring: 
Consider the graph consisting of two complete bipartite graphs 
connected by a single edge. 
One can use standard techniques to design a constant-time algorithm that outputs a consistent coloring for each complete bipartite graph. 
However, until we find the edge connecting the two components, 
the algorithm cannot know how to put the two colorings together. 
Nevertheless, average-case LCAs 
can be made significantly faster.

\paragraph{Algorithm overview.}
The average-case LCA implementation utilizes two algorithmic subroutines introduced in \Cref{sec:algorithmic-subroutines}. First, \BuildGoodCore colors a potential core subgraph, and certifies its unique colorability. \textsc{Local}-\GoodCoreAdj checks adjacencies to the core subgraph to color a vertex via the unique coloring of the core subgraph. 

On input $u$, the LCA first checks if $u$ is in the core,
and if so, colors it accordingly.  If not, the  LCA checks if \textsc{Local}-\GoodCoreAdj 
successfully colors $u$.   Otherwise, the LCA samples random vertices $v$, checks if they are adjacent to $u$, and checks if \textsc{Local}-\GoodCoreAdj successfully colors $v$. If the LCA finds neighbors of $u$ of every color but one, the LCA can then color $u$ with the final color.

We now give the average-case local computation algorithm. 
\medskip

\begin{algorithm}[H]
\caption{$k$-Coloring Local Computation Algorithm}\label{alg:lca}
\KwIn{A $k$-colorable graph $G$, a vertex $u \in V(G)$, and a shared random tape $\vec{r}$}
\KwOut{A consistent coloring $j$ of $u$}

\SetKwProg{Fn}{Procedure}{}{}
\Fn{\textsc{$k$-Coloring-LCA}}{
    Let $\nu$ be as in \Cref{def:manyGoodCores}\;
    Let $D_1,\dots,D_k \leftarrow \textsc{FAIL}$\;
    \While{$D_1,\dots,D_k = \textsc{FAIL}$ and the number of $P \in \binom{V}{\nu}$ sampled is $\exp(o(n/k))$}{
        Sample $P\sim \binom{V}{\nu}$ according to $\vec{r}$\;
        Run \BuildGoodCore$(H=G_P,\text{algo}=\textsc{Kucera1995}, \SampSize = c_\sigma \log k)$\;
        \lIf{it does not return \textsc{FAIL}}{Let $D_1,\dots,D_k$ be its outputs}
    }
    \lIf{$D_1,\dots,D_k = \textsc{FAIL}$}{\Return{\textsc{FAIL}}}

    Let $o \leftarrow$ \textsc{Local-}\GoodCoreAdj$(G,\{D_1,\dots,D_k\}, u)$\;
    \lIf{$o \neq \textsc{FAIL}$}{\Return{$o$}}

    Initialize \textsc{Possible-Colors} $=[k]$\;
    \While{$|\textsc{Possible-Colors}| \geq 2$ and not all vertices have been sampled}{
        Sample a vertex $v \in V(G)$\;
        \If{$v$ is adjacent to $u$}{
            Let $o \leftarrow$ \textsc{Local-}\GoodCoreAdj$(G,\{D_1,\dots,D_k\}, v)$\;
            \lIf{$o \neq \textsc{FAIL}$}{Remove $o$ from \textsc{Possible-Colors}}
        }
    }
    \If{$|\textsc{Possible-Colors}| = 1$}{\Return{the remaining color in \textsc{Possible-Colors}}}
    \Else{\Return{\textsc{FAIL}}}
}
    
\end{algorithm}

\begin{proof}[Proof of \Cref{thm:LCA}] 

    We first argue correctness, then runtime. We assume that $k\le n^{1/36-\delta}$ for a positive constant $\delta>0$.
    
    \paragraph{Correctness.} We prove that the $k$-\textsc{Coloring-LCA} algorithm provides access to a consistent global proper $k$-coloring of the graph with high probability. We also prove that the algorithm never outputs an incorrect $k$-coloring; if it does not output a full proper coloring, the algorithm returns \textsc{FAIL} on some vertices.

    First, observe that, if any good core (\Cref{def:goodCore}) exists in the graph, the LCA on any vertex $u$ will find the same good core $P$, since $P$ is found according to the shared random tape $\vec{r}$. A good core is certifiably uniquely colorable (as we show in \Cref{lem:core-unique-color}), and we color it with the algorithm \textsc{Kucera1995} (see \Cref{thm:kuvcera}; \cite{kuvcera1995expected}). On every query, the LCA uses the same deterministic coloring of the good core, as the algorithm \textsc{Kucera1995} is deterministic.
    
    In the remaining steps of the algorithm, we only color vertices if their coloring is \textit{unique} given the \textit{unique coloring of the good core}. Therefore, since each vertex finds its color uniquely based on the same coloring of the same good core, and we have assumed that the input graph is $k$-colorable, the algorithm always outputs at least a partial proper $k$-coloring of $G$. It is thus zero-error.

    The algorithm outputs a full proper $k$-coloring of $G$ when no vertices output \textsc{FAIL}. We prove that this occurs with probability at least $1 - \exp(-\Omega(n/k))$. First, by \Cref{thm:good_property_probability} (structural theorem), for every $B \in \mathbb{N}$ and $\sigma \geq c_{\sigma} \log k$, with probability at least $1-\exp(-\Omega(n\sigma/k))-\exp(-\Omega(n(B+1)/k))-2k^{-2n}$, $G \sim \mathcal{U}$ is $(\sigma, B)$-linked and has many good cores. 
    
    Recall the definition of strongly $H$-inferred coloring from \Cref{def:inferred-coloring}: Let $H$ be a properly $k$-colored subgraph of a graph $G$, with color classes $\{H_1, H_2, \dots, H_k\}$. We say that a vertex $v \in G$ has a \emphdef{strongly $H$-inferred coloring} if $|\Gamma(v) \cap H_i|/|H_i| \geq 0.01$ for all but one $i \in [k]$.
    
    By applying \Cref{def:linked} with $B = 0$, \Cref{thm:good_property_probability} tells us that with probability $1 - \exp(-\Omega(n/k))$ there is an approximately balanced valid coloring $(A_1, A_2, \dots, A_k)$ such that, for every collection of sets $D_i \subseteq A_i$ with $|D_i| \in [\sigma, O(k \log k)]$, we have:
    \begin{enumerate}
        \item For each $i \in [k]$, at least a 0.99 fraction of the vertices $v \in A_i$ are strongly $D$-inferred for $D = \{D_1, D_2, \dots, D_k\}$, and thus will be colored by \textsc{Local-}\GoodCoreAdj$(G, D, v)$.
        \item Suppose at least a 0.9 fraction of vertices $v \in A_i$ for each $i \in [k]$ is colored by \textsc{Local-}\GoodCoreAdj; we call this set $T_i$ for each $i \in [k]$. In this case, any remaining $u \not \in \cup_{i \in [k]} T_i$ is strongly $\{T_1, T_2, \dots, T_k\}$-inferred. Thus, the algorithm does not need to check the whole neighborhood of $u$ to determine its coloring. Any remaining $u$ will be adjacent to a 0.01-fraction of vertices in $T_i$ for all but one $i \in [k]$. Thus, for remaining $u$, a coupon collector argument shows that, by sampling an average of $O(k \log k)$ vertices, the algorithm is likely to find adjacencies between $u$ and vertices in all but one color class.
    \end{enumerate}

    Therefore, with probability $1 - \exp(-\Omega(n/k))$, every vertex is either strongly inferred with respect to the core or strongly inferred with respect to the vertices whose coloring is strongly inferred by the core.
    Vertices in the former case will be colored by the first call of \textsc{Local}-\GoodCoreAdj. 
    Vertices in the latter case will be colored by sampling vertices and testing if they are strongly inferred by the color classes defined by the core in \BuildGoodCore. 
    Therefore, in the case that a unique coloring exists and the good event of \Cref{thm:good_property_probability} holds, our LCA will provide local access to it on every vertex; no vertices output \textsc{FAIL}. 

    \paragraph{Average-case probe complexity and time.}

    First, by \Cref{thm:good_property_probability}, with probability $1 - \exp(-\Omega(n/k))$, at least $9/10$ of subgraphs of size $\nu$ are good cores. Thus, the average number of subgraphs of size $\nu$ sampled is at most 
    $O(1) + \exp(-\Omega(n/k)) \cdot \exp(o(n/k)) = O(1).$ This corresponds to $O(\nu)$ probes.

    Next, also by \Cref{thm:good_property_probability}, with probability $1 - \exp(-\Omega(n/k))$, every vertex $u$ not in the core has the property that, in each color class, a constant 
    fraction of the vertices are uniquely colorable via the core and adjacent to $u$. 
    Therefore, since for $u$ we must find a neighbor per color class that is also adjacent to the core 
    (which adds an $O(k \log k)$ factor by a coupon-collector argument), 
    the average number of vertices sampled to find the coloring of a second-level colorable vertex is $O(k \log k) + \exp(-\Omega(n/k)) \cdot n = O(k \log k).$ 

    The maximum expected probe complexity per vertex (over the randomness $\vec{r}$) is, therefore, $O(\nu + k \log k) = O(k^9 \log k)$. The average-case time complexity is $O(\nu^3/k) = O(k^{26} \text{polylog}(k))$.
\end{proof}

\end{document}